\newtheorem{theorem}{Theorem}[section]
\newtheorem{proposition}[theorem]{Proposition}
\newtheorem{claim}[theorem]{Claim}
\newtheorem{remark}[theorem]{Remark}
\newtheorem{definition}[theorem]{Definition}
\newtheorem{lemma}[theorem]{Lemma}
\newtheorem{corollary}[theorem]{Corollary}
\def\E{\textsf{E}}
\def\var{\textsf{Var}}
\def\cov{\textsf{cov}}
\def\vectheta{\boldsymbol{\theta}}
\def\Cap{\textsf{cap}}
\def\Exp{\textsf{Exp}}
\def\Erlang{\textsf{Erlang}}
\def\seedDist{\textsf{SeedDist}}
\def\seedCDF{\textsf{SeedCDF}}
\def\freq{{\nu}}
\def\mkey{\mathit{key}}
\def\mval{\mathit{val}}
\DeclareMathOperator{\LapM}{\mathcal{L}^{c}}
\DeclareMathOperator{\MxDCount}{\textsf{MxDistinct}}
\DeclareMathOperator{\SUM}{\textsf{Sum}}
\DeclareMathOperator{\MAX}{\textsf{Max}}
\DeclareMathOperator{\SUMMAX}{\textsf{SumMax}}
\DeclareMathOperator{\MIN}{\textsf{Min}}
\title{Sampling Sketches for \\ Concave Sublinear Functions of Frequencies}
\author{Edith Cohen\\
Google Research, CA\\
Tel Aviv University, Israel\\
\texttt{edith@cohenwang.com}
\And
Ofir Geri\\
Stanford University, CA\\
\texttt{ofirgeri@cs.stanford.edu}
}
\date{}
\begin{document}

\maketitle

\begin{abstract}
We consider  massive distributed datasets that consist of elements
modeled as key-value pairs and the task of
computing statistics or aggregates where the contribution of each key is weighted by a
function of its {\em frequency} (sum of values of its elements). This
fundamental problem has a wealth of applications in data analytics and
machine learning, in particular, with
 {\em  concave sublinear} functions of the frequencies that
mitigate the disproportionate  effect of keys with high
frequency. The family of
concave sublinear functions includes low frequency moments
($p\leq 1$), capping, logarithms, and their compositions.
A common approach is to sample keys, ideally, proportionally to their contributions
and estimate statistics from the sample. A simple but costly way to do
this is by aggregating the data to produce a table of keys and their
frequencies, apply our function to the frequency values, and then apply a weighted sampling scheme.
Our main contribution is the design
of composable sampling sketches that can be tailored to any {\em
  concave sublinear} function of the frequencies.
Our sketch structure size is very close to the desired sample size and our samples
provide statistical guarantees on the estimation quality that are very
close to that of an ideal sample of the same size computed over
aggregated data.   Finally, we demonstrate experimentally the
simplicity and effectiveness of our methods.
\end{abstract}

\section{Introduction}
We consider massive distributed datasets that consist of {\em elements} that are
key-value pairs  $e =(e.\mkey,e.\mval)$ with $e.\mval > 0$. The elements are
generated or stored on a large number of servers or devices.
A key $x$ may repeat in multiple elements, and we define its {\em frequency} $\freq_x$
to be the sum of values of the elements with that key, i.e., $\freq_x := \sum_{e
  \mid  e.\mkey = x}{e.\mval}$. For example, the keys can be search queries, videos, terms, users, or
tuples of entities (such as video co-watches or term co-occurrences) and
each data element can correspond to an occurrence or an interaction
involving this key:  the
search query was issued, the video was watched, or two terms
co-occurred in a typed sentence.
An instructive common special case is when all elements have the same value $1$ and the
frequency $\freq_x$  of each key $x$ in the dataset is simply the number of elements with key $x$.

A common task is to compute
statistics or aggregates, which are sums over key contributions. The contribution of each key $x$ is weighted by a
function of its frequency $\freq_x$.
One example of such sum aggregates are queries of domain statistics $\sum_{x \in H}{\freq_x}$ for some domain (subset of keys) $H$.  The domains of interest are often overlapping
and specified at query time. Sum aggregates also arise as
components of a larger pipeline, such as
the training of a machine
learning model with parameters $\vectheta$, labeled examples $x\in
\mathcal{X}$ with frequencies $\freq_x$, and a loss objective of the form
$\ell(\mathcal{X};\vectheta) = \sum_x f(\freq_x) L(x;\vectheta)$.
The function $f$ that is applied to the frequencies can be any \emph{concave sublinear} function. Concave sublinear functions, which we discuss further below, are used in applications to mitigate the disproportionate effect of keys with very high frequencies.
The training of the model typically involves repeated evaluation of the loss function (or of its gradient
that also has a sum form) for different values of $\theta$.
We would like to compute these aggregates on demand, without needing to go over the data many times.

When the number of keys is very large it is often helpful to compute a smaller random sample $S \subseteq
\mathcal{X}$ of the keys from which aggregates can be efficiently
estimated. In some applications, obtaining a sample can be the end goal. For example, when the aggregate is a gradient, we can use the sample itself as a stochastic gradient.
To provide statistical guarantees on our estimate quality, the
sampling needs to be {\em weighted} ({\em importance}  sampling), with
heavier keys sampled with higher probability, ideally,
proportional to their contribution ($f(\freq_x)$). When the weights of the keys are known, there are classic sampling schemes that provide estimators with tight worst-case variance bounds \cite{Ohlsson_SPS:1998,DLT:jacm07,Cha82,varopt_full:CDKLT10,Rosen1972:successive,Rosen1997a}.

The datasets we consider here are presented in an {\em unaggregated} form: each key can appear multiple times in different locations. The focus of this work is designing composable sketch structures (formally defined below) that allow to compute a sample over unaggregated data with respect to the weights $f(\freq_x)$.
One approach to compute a sample from unaggregated data is to first aggregate the data to
produce a table of key-frequency pairs $(x,\freq_x)$, compute the
weights $f(\freq_x)$, and apply a weighted sampling scheme.
This aggregation can be performed using composable structures that are
essentially a table with an entry for each distinct key that occurred
in the data.  The number of distinct keys, however, and hence the size
of that sketch, can be huge.  For our sampling application, we would
hope to use sketches of size that is proportional to the desired sample size, which
is generally much smaller than the number of unique keys, and still provide
statistical guarantees on the estimate quality that are close to that of a
weighted sample computed according to $f(\freq_x)$.

\paragraph{Concave Sublinear Functions.}
Typical datasets have a skewed frequency
distribution, where a small fraction of the keys have very large
frequencies and we can get better results or learn a better model of
the data by
suppressing their effect.
The practice is to apply a {\em concave sublinear}
function $f$ to the frequency, so that the
importance weight of the key is $f(\freq_x)$ instead  of simply its
frequency $\freq_x$.
This family of functions includes the frequency moments
$\freq_x^p$ for $p\leq 1$, $\ln(1+\freq_x)$, $\Cap_T(\freq_x) =
\min\{T,\freq_x\}$ for a fixed $T \geq 0$, their compositions, and more. A formal definition appears in Section~\ref{concavesublinear:sec}.

Two hugely popular methods for producing
word embeddings from word co-occurrences use this form of mitigation:
word2vec~\cite{Mikolov:NIPS13}
 uses
$f(\freq)=\freq^{0.5}$ and $f(\freq)=\freq^{0.75}$ for positive and
negative examples,  respectively, and  GloVe~\cite{PenningtonSM14:EMNLP2014} uses
$f(\freq) = \min\{T,\freq^{0.75} \}$ to mitigate co-occurrence
frequencies.  When the data is highly distributed, for example,
when it originates or resides at millions of mobile devices (as in federated learning \cite{pmlr-v54-mcmahan17a}), it is useful to
estimate the loss or compute a stochastic gradient update
efficiently via a weighted sample.

The suppression of higher frequencies may also directly arise in applications.
 One example is campaign planning for online advertising, where the value
of showing an ad to a user diminishes with the number of
views.  Platforms allow an
 advertiser to specify a cap value $T$ on the number of times the
 same ad can be presented to a user \cite{GoogleFreqCap,facebookFreqCap}. In this case, the number of opportunities to display an ad to a user $x$ is a cap function
of the frequency of the user $f(\freq_x)=\min\{T,\freq_x\}$, and the number for a
segment of users $H$ is the statistics  $\sum_{x\in H} f(\freq_x)$. When
planning a campaign, we need to quickly estimate the statistics for different  segments,
and this can be done from a sample that ideally is weighted by
$f(\freq_x)$.

\paragraph{Our Contribution.}
In this work, we design composable sketches that can be tailored to any concave sublinear function $f$, and allow us to compute a weighted sample over unaggregated data with respect to the weights $f(\freq_x)$. Using the sample, we will be able to compute unbiased estimators for the aggregates mentioned above. In order to compute the estimators, we need to make a second pass over the data: In the first pass, we compute the set of sampled keys, and in the second pass we compute their frequencies. Both passes can be done in a distributed manner.

A sketch $S(D)$ is a data structure that summarizes
a set $D$ of data elements, so that the output of interest for $D$ (in our case, a sample of keys) can
be recovered from the sketch $S(D)$.
A sketch structure is composable if we can obtain a sketch $S(D_1\cup D_2)$ of
two sets of elements $D_1$
and $D_2$ from the sketches $S(D_1)$ and $S(D_2)$ of the sets. This
property alone gives us full flexibility to parallelize or distribute
the computation.  The size of the sketch determines the
communication and storage needs of the computation.

We provide theoretical guarantees on the quality (variance) of the
estimators. The baseline for our analysis is the bounds on the
variance that are guaranteed by PPSWOR on aggregated data. PPSWOR
\cite{Rosen1972:successive,Rosen1997a} is a sampling scheme with tight
worst-case variance bounds. The estimators provided by our sketch have
variance at most $4/((1-\varepsilon)^2)$ times the variance bound for
PPSWOR. The parameter $\varepsilon \leq 1/2$ mostly affects the run
time of processing a data element, which grows near-linearly in
$1/\varepsilon$. Thus, our sketch allows us to get approximately
optimal guarantees on the variance while avoiding the costly
aggregation of the data.

We remark that these guarantees are for \emph{soft} concave sublinear
functions. This family approximates any concave sublinear function up
to a multiplicative factor of $1-1/e$. As a result, our sketch can be
used with any (non-soft) concave sublinear function while incurring
another factor of $\left(1+\frac{1}{e-1}\right)^2$ in the variance.

The space required by our sketch significantly improves upon the
previous methods (which all require aggregating the data). In
particular, if the desired sample size is $k$, we show that the space
required by the sketch at any given time is $O(k)$ in expectation. We additionally show that, with probability at least $1-\delta$, the space will not exceed $O\left(k + \min\{\log{m},\log{\log{\left(\frac{\SUM_D}{\MIN(D)}\right)}}\} + \log{\left(\frac{1}{\delta}\right)}\right)$ at any time while processing the dataset $D$, where $m$ is the number of elements, $\SUM_D$ the sum of weights of all elements, and $\MIN(D)$ is the minimum value of an element in $D$. In the common case where all elements have weight $1$, this means that for any $\delta$, the needed space is at most $O\left(k + \log{\log{m}} + \log{\left(\frac{1}{\delta}\right)}\right)$ with probability at least $1-\delta$.\footnote{For streaming algorithms we are typically interested in deterministic worst-case bounds on the space, but streaming algorithms with randomized space have also been considered in some cases, in particular when studying the sliding window model \cite{Babcock:Datar:Motwani:02,ChakrabartiEntropy2010}.}

We complement our work with a small-scale experimental study. We use a
simple implementation of our sampling sketch to study the actual
performance in terms of estimate quality and sketch size. In
particular, we show that the estimate quality is even better than the
(already adequate) guarantees provided by our worst-case bounds. We
additionally compare the estimate quality to that of two popular
sampling schemes for aggregated data, PPSWOR
\cite{Rosen1972:successive,Rosen1997a} and priority (sequential
Poisson) sampling \cite{Ohlsson_SPS:1998,DLT:jacm07}. In the
experiments, we see that the estimate quality of our sketch is close
to what achieved by PPSWOR and priority sampling, while our sketch
uses much less space by eliminating the need for aggregation.

The paper is organized as follows. The preliminaries are presented in Section~\ref{preliminaries:sec}. We provide an overview of PPSWOR and the statistical guarantees it provides for estimation. Then, we formally define the family of concave sublinear functions.
Our sketch uses two building blocks. The first building block, which can be of independent interest, is the analysis of a \emph{stochastic PPSWOR sample}. Typically, when computing a sample, the data from which we sample is a deterministic part of the input. In our construction, we needed to analyze the variance bounds for PPSWOR sampling that is computed over data elements with randomized weights (under certain assumptions). We provide this analysis in Section~\ref{stocsample:sec}. The second building block is the \emph{$\SUMMAX$ sampling sketch}, which is discussed in Section~\ref{summax:sec}. This is an auxiliary sketch structure that supports datasets with a certain type of structured keys.  We put it all together and describe our main result in Section~\ref{funcoffreq:sec}. The experiments are discussed in Section~\ref{experiments:sec}.

\paragraph{Related Work.}
There are multiple classic composable weighted sampling
schemes for {\em aggregated} datasets (where keys are unique to
elements). Schemes that provide estimators with tight
worst-case variance bounds include priority
(sequential Poisson) sampling \cite{Ohlsson_SPS:1998,DLT:jacm07}  and VarOpt sampling
\cite{Cha82,varopt_full:CDKLT10}.  We focus here on
PPSWOR \cite{Rosen1972:successive,Rosen1997a} as our base scheme
because it extends to {\em unaggregated} datasets, where multiple elements
can additively contribute to the frequency/weight of each key.

There is a highly prolific line of research on developing sketch structures for
different task over streamed or distributed
unaggregated data with applications in multiple domains.  Some early
examples are frequent  elements \cite{MisraGries:1982} and  distinct
counting \cite{FlajoletMartin85},  and the seminal work  of \cite{ams99}
providing a theoretical model for frequency moments.
Composable sampling sketches for
unaggregated datasets were also studied for decades.  The goal is to meet the quality of samples computed
on aggregated frequencies while using a sketch structure that can
only hold a final-sample-size number of distinct keys.    These
include a folklore sketch for distinct sampling
($f(\freq)=1$ when $\freq>0$) \cite{Knuth2f,Vit85} and sketch structures for
sum sampling ($f(\freq)=\freq$) \cite{CCD:sigmetrics12}.  The latter
generalizes the discrete sample and hold scheme
\cite{GM:sigmod98,EV:ATAP02,flowsketch:JCSS2014} and
PPSWOR.  Sampling sketches for cap functions  ($f(\freq)=\min\{T,\freq\}$) were
provided in \cite{freqCap:Journal} and have a slight
overhead over the aggregated baseline. The latter work
also provided multi-objective/universal samples that with a logarithmic overhead
simultaneously provide statistical guarantees for all concave
sublinear $f$. In the current work we propose sampling sketches that
can be tailored to any concave sublinear function and only have a small
constant overhead.

An important line of work uses sketches based on
random linear projections to estimate frequency statistics and to sample.
In particular,  $\ell_p$ sampling sketches \cite{FISLpSampling,MoWo:SODA2010,AKOLpSampling,JSTLpSampling,PerfectLp}  sample  (roughly) according to
$f(\freq)=\freq^p$ for $p \in [0,2]$.   These sketches have a higher logarithmic overhead on the space compared to sample-based
sketches, and do not support all concave sublinear functions of the frequencies (for example, $f(\freq) = \ln{(1 + \freq)}$).  In some respects they are more limited in their application
-- for example,  they are not designed to produce
a sample that includes raw keys.  Their advantage is that
they can be used with super-linear ($p\in (1,2]$) functions of frequencies
and can also support signed element values (the turnstile model).
For the more basic problem of sketches that estimate frequency
statistics over the full data, a complete characterization of
the frequency functions for which the statistics can be estimated via
polylogarithmic-size sketches is provided
in~\cite{BravermanOstro:STOC2010,BCWY:pods2016}. Universal sketches for estimating $\ell_p$ norms of subsets were recently considered in \cite{UniversalSubset}. The seminal work
of Alon et al.~\cite{ams99} established that for some functions of frequencies
(moments with $p>2$), statistics estimation requires polynomial-size sketches.
A double logarithmic size sketch, extending~\cite{hyperloglog:2007} for distinct counting, that computes
statistics over the entire dataset for all soft concave sublinear functions is provided
in~\cite{Concavesub:KDD2017}.  Our design builds on components of that sketch.

\section{Preliminaries}\label{preliminaries:sec}

Consider a set $D$ of data elements of the form
$e=(e.\mkey,e.\mval)$ where $e.\mval>0$. We denote the set of possible keys by $\mathcal{X}$.
For a key $z \in \mathcal{X}$, we let
$\MAX_{D}(z)  := \max_{e\in D \mid  e.\mkey =   z} e.\mval$ and $\SUM_{D}(z)  := \sum_{e\in D \mid  e.\mkey =   z} e.\mval$
denote the maximum value of a data element in $D$ with key
$z$ and the sum of values of data elements in $D$ with key $z$, respectively. Each key $z \in \mathcal{X}$ that appears in $D$ is called \emph{active}.
If there is no element $e\in D$ with $e.\mkey = z$, we say that $z$ is \emph{inactive} and define
$\MAX_{D}(z)  := 0$ and $\SUM_{D}(z)  := 0$.
When $D$ is clear from context, it is omitted.   For a key $z$, we use
the shorthand
$\freq_z := \SUM_{D}(z)$ and refer to it as the frequency of $z$.
The sum and the {\em max-distinct statistics} of $D$ are defined, respectively, as
$\SUM_D := \sum_{e\in D} e.\mval$
and
$\MxDCount_D := \sum_{z \in \mathcal{X}} \MAX_{D}(z)$.
For a function $f$,
$f_D := \sum_{z \in \mathcal{X}} f(\SUM_D(z)) = \sum_{z \in \mathcal{X}} f(\freq_z)$
is the {\em $f$-frequency statistics} of $D$.

For a set $A \subseteq \mathbb{R}$, we use $A_{(i)}$ to denote the $i$-th order statistic of $A$, that is, the $i$-th lowest element in $A$.

\subsection[The Composable Bottom-k Structure]{The Composable Bottom-$k$ Structure}\label{bottomk:subsec}
\begin{algorithm2e}[t]\caption{Bottom-$k$ Sketch Structure\label{bottomk:algo}}
\DontPrintSemicolon
\tcp{{\bf Initialize structure}}
\KwIn{the structure size $k$}
$s.set \gets \emptyset$ \tcp*[h]{Set of $\leq k$ key-value pairs}\;
\tcp{{\bf Process element}}
\KwIn{element $e=(e.\mkey,e.\mval)$, a bottom-$k$ structure $s$}
\eIf{$e.\mkey\in s.set$}{replace the current value $v$ of $e.\mkey$ in $s.set$ with $\min\{v,e.\mval\}$}
    {
    insert $(e.\mkey,e.\mval)$ to $s.set$\;
  \If{$|s.set|=k+1$}{Remove the element $e'$ with maximum value from $s.set$}
    }
\tcp{{\bf Merge two bottom-$k$ structures}}
\KwIn{$s_1$,$s_2$ \tcp*[h]{Bottom-$k$ structures}}
\KwOut{$s$ \tcp*[h]{Bottom-$k$ structure}}
$P \gets s_1.set \cup s_2.set$\;
$s.set \gets $ the (at most) $k$ elements of $P$ with lowest
values (at most one element per key)
\end{algorithm2e}

In this work, we will use composable sketch structures in order to efficiently
summarize streamed or distributed data elements.
A composable sketch structure is specified by three operations:
The initialization of an empty sketch structure $s$,
the processing of a data element $e$ into a structure $s$,
and the merging of two sketch structures $s_1$ and $s_2$. To sketch a stream
of elements, we start with an empty structure and sequentially process
data elements while storing only the
sketch structure. The merge operation is useful with distributed or
parallel computation and allows us to compute
the sketch of a large set $D= \bigcup_i D_i$ of data elements by
merging the sketches of the parts $D_i$.

In particular, one of the main building blocks that we use is the bottom-$k$ structure \cite{bottomk07:ds}, specified in
Algorithm~\ref{bottomk:algo}.  The structure
maintains $k$ data elements: For each key, consider only the element with that key that has the minimum value. Of these elements, the structure keeps the $k$ elements that have the lowest values.

\subsection{The PPSWOR Sampling Sketch}\label{ppswor:subsec}
\begin{algorithm2e}[t]\caption{PPSWOR Sampling Sketch\label{ppswor:algo}}
\DontPrintSemicolon
\tcp{{\bf Initialize structure}}
\KwIn{the sample size $k$}
Initialize a bottom-$k$ structure $s.\Sample$ \tcp*[h]{Algorithm~\ref{bottomk:algo}}\;
\tcp{{\bf Process element}}
\KwIn{element $e=(e.\mkey,e.\mval)$, PPSWOR sample structure $s$}
$v \sim \Exp[e.\mval]$\;
Process the element $(e.\mkey,v)$ into the bottom-$k$ structure $s.\Sample$\;
\tcp{{\bf Merge two structures} $s_1,s_2$ to obtain $s$}
$s.\Sample \gets$ Merge the bottom-$k$ structures $s_1.\Sample$ and
$s_2.\Sample$
\end{algorithm2e}

In this subsection, we describe a scheme to produce a sample of $k$ keys, where at each step the probability that
a key is selected is proportional to its weight. That is, the sample we
produce will be equivalent to performing the following $k$ steps. At each step we select one key and add it to the sample.
At the first step, each key $x \in \mathcal{X}$ (with weight $w_x$) is selected with probability $w_x/\sum_y w_y$. At each subsequent step, we choose one of the remaining keys, again with probability proportional to its weight. Since the total weight of the remaining keys is lower, the probability that a key is selected in a subsequent step (provided it
was not selected earlier) is only higher. This process is called probability proportional to size and without replacement (PPSWOR) sampling.

A classic method for PPSWOR sampling is the following scheme \cite{Rosen1972:successive,Rosen1997a}.
For each key $x$ with weight $w_x$, we independently draw
$\seed(x)\sim \Exp(w_x)$. Outputting the sample that includes
the $k$ keys with smallest $\seed(x)$ is equivalent to PPSWOR sampling as described above.
This method together with a bottom-$k$ structure can be used to
implement PPSWOR sampling over a set of data elements $D$ according to
$\freq_x = \SUM_{D}(x)$. This sampling sketch is presented here as Algorithm~\ref{ppswor:algo}. The sketch is due to \cite{CCD:sigmetrics12} (based
on \cite{GM:sigmod98,EV:ATAP02, flowsketch:JCSS2014}).

\begin{proposition} \label{ppswor:prop}
Algorithm~\ref{ppswor:algo} maintains a composable bottom-$k$ structure such that for each key $x$, the lowest value of an element with key $x$ (denoted by $\seed(x)$) is drawn independently from $\Exp(\freq_x)$. Hence, it is a PPSWOR sample according to the weights $\freq_x$.
\end{proposition}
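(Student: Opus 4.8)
The plan is to show two things: first, that the bottom-$k$ structure maintained by Algorithm~\ref{ppswor:algo} actually records, for each key $x$, the minimum over all elements with that key of the random value drawn in the processing step; and second, that this per-key minimum is distributed as $\Exp(\freq_x)$ independently across keys. Once both are established, the final sentence follows immediately from the classical characterization of PPSWOR recalled just before the proposition (draw $\seed(x)\sim\Exp(w_x)$ independently, output the $k$ keys with smallest seeds).

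First I would argue the bookkeeping claim. By the merge and process operations of the bottom-$k$ structure (Algorithm~\ref{bottomk:algo}), the sketch is invariant under the order in which elements are processed and under how they are partitioned for merging: in all cases the final structure depends only on the multiset of processed (key, value) pairs, and for each key it retains only the smallest value seen for that key, keeping globally the $k$ keys with the smallest such per-key minima. In Algorithm~\ref{ppswor:algo}, the pair fed into the bottom-$k$ structure for a data element $e$ is $(e.\mkey, v)$ with $v\sim\Exp(e.\mval)$, drawn independently for each element. Hence the value stored for key $x$ (when $x$ is among the $k$ smallest) is $\seed(x) := \min_{e \in D \mid e.\mkey = x} v_e$, where the $v_e$ are independent with $v_e \sim \Exp(e.\mval)$.

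Next I would identify the distribution of $\seed(x)$. For a fixed key $x$ with elements of values $a_1,\dots,a_\ell$ (so $\freq_x = \sum_i a_i$), $\seed(x)$ is the minimum of independent exponentials with rates $a_1,\dots,a_\ell$, which is exponential with rate $\sum_i a_i = \freq_x$; this is the standard fact that $\min$ of independent exponentials is exponential with the summed rate. Independence across distinct keys is immediate because the sets of $v_e$ feeding into different keys are disjoint and all the $v_e$ are mutually independent. Therefore $\{\seed(x)\}_{x}$ are independent with $\seed(x)\sim\Exp(\freq_x)$, and by the classical scheme this is exactly a PPSWOR sample of size $k$ according to the weights $\freq_x$.

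There is no serious obstacle here; the only point requiring a little care is the first step — verifying that the composable bottom-$k$ structure really does compute the global per-key minimum regardless of the stream order and merge tree, so that the random seeds $v_e$ combine correctly into $\min_e v_e$ for each key. This is a structural property of Algorithm~\ref{bottomk:algo} rather than anything probabilistic, and once it is stated cleanly the distributional claims are one-line applications of the min-of-exponentials identity and independence of the draws.
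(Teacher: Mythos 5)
Your proposal is correct and follows essentially the same route as the paper's proof: identify the per-key seed as the minimum of the independent $\Exp(e.\mval)$ scores of that key's elements, invoke the min-of-independent-exponentials identity to get $\Exp(\freq_x)$, note independence across keys from disjointness of the score sets, and rely on the order- and merge-invariance of the bottom-$k$ structure (the paper verifies the merge case explicitly by observing that the merged seed is the minimum of two independent exponentials with parameters $\SUM_{D_1}(x)$ and $\SUM_{D_2}(x)$). No gaps beyond the same level of detail the paper itself leaves implicit.
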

The proof is provided in Appendix~\ref{deferred2}.

\subsection[Estimation Using Bottom-k Samples]{Estimation Using Bottom-$k$ Samples} \label{bottomkest:sec}
PPSWOR sampling (Algorithm~\ref{ppswor:algo}) is a
special case of bottom-$k$ sampling
\cite{Rosen1997a,bottomk07:ds,bottomk:VLDB2008}.
\begin{definition}\label{def:bottom-k-sample}
Let $k \geq 2$. A \emph{bottom-$k$ sample} over keys $\mathcal{X}$ is
obtained by drawing independently for each active key $x$ a random variable
$\seed(x) \sim \seedDist_x$.  The $k-1$ keys with lowest $\seed(x)$ values
are considered to be included in the sample $S$, and the $k$-th lowest value $\tau := \{\seed(x) \mid x \in \mathcal{X}\}_{(k)}$ is the {\em inclusion threshold}.
\end{definition}
The distributions $\seedDist_x$ are such that for all $t>0$, $\Pr[\seed(x)<t]>0$, that
is,  there is positive probability to be below any positive $t$.
Typically the distributions $\seedDist_x$ come from a
family of distributions that is parameterized by the {\em frequency} $\freq_x$ of
keys. The frequency is positive for active keys and $0$ otherwise.
In the special case of PPSWOR sampling by frequency, $\seedDist_x$ is $\Exp(\freq_x)$.

We review here how a bottom-$k$ sample is used to estimate domain
statistics of the form $\sum_{x\in H} f(\freq_x)$ for $H \subseteq \mathcal{X}$. More generally, we will show how to estimate aggregates of the form
\begin{equation}\label{eq:sum-statistics}
\sum_{x\in \mathcal{X}} L_x f(\freq_x)
\end{equation}
for any set of fixed values $L_x$. Note that we can represent domain statistics in this form by setting $L_x = 1$ for $x \in H$ and $L_x = 0$ for $x \notin H$. For the sake of this discussion, we treat $f_x := f(\freq_x)$ as simply a set of weights associated with keys, assuming we can have $f_x >0$ only for active keys.

In order to estimate statistics of the form~\eqref{eq:sum-statistics},
we will define an estimator $\widehat{f_x}$ for each $f_x$. The estimator $\widehat{f_x}$ will be non-negative ($\widehat{f_x} \geq 0$), unbiased ($\E\left[\widehat{f_x}\right] = f_x$), and such that $\widehat{f_x} = 0$ when the key $x$ is not included in the bottom-$k$ sample ($x\not\in S$ using the terms of Definition~\ref{def:bottom-k-sample}).

As a general convention, we will use the notation $\widehat{z}$ to denote an estimator of any quantity $z$. We define the \emph{sum estimator} of the statistics $\sum_{x \in \mathcal{X}}{L_x f_x}$ to be
\[
\widehat{\sum_{x \in \mathcal{X}}{L_x f_x}} := \sum_{x \in S}{L_x \widehat{f_x}}
\]
We also note that since $\widehat{f_x} = 0$ for $x \notin S$, computing the
sum only over $x\in S$ is the same as computing the sum over all $x \in \mathcal{X}$, that is, $\sum_{x \in S}{L_x \widehat{f_x}} = \sum_{x \in \mathcal{X}}{L_x \widehat{f_x}}$.

Note that the sum estimator can be computed as long as the fixed
values $L_x$ and the per-key estimates $\widehat{f_x}$ for $x\in S$ are available.
From linearity of expectation, we get that the sum estimate is unbiased:
\begin{align*}
\MoveEqLeft \E\left[\widehat{\sum_{x \in \mathcal{X}}{L_x f_x}}\right]
   =
\E\left[\sum_{x \in S}{L_x \widehat{f_x}}\right] \\ &=
\E\left[\sum_{x \in \mathcal{X}}{L_x \widehat{f_x}}\right] =
\sum_{x\in \mathcal{X}} L_x \E\left[ \widehat{f_x}\right]= \sum_{x \in \mathcal{X}}{L_x f_x}.
\end{align*}

We now define the per-key estimators $\widehat{f_x}$.
The following is a conditioned variant of the Horvitz-Thompson
estimator \cite{HT52}.
\begin{definition}\label{def:inverseprob}
Let $k \geq 2$ and consider a bottom-$k$ sample, where $S$ is the set
of $k-1$ keys in the sample and $\tau$ is the inclusion threshold. For
any $x \in \mathcal{X}$,
the \emph{inverse-probability estimator} of $f_x$ is
\[
\widehat{f_x} =
\begin{cases}
\frac{f_x}{\Pr_{\seed(x)\sim \seedDist_x}[\seed(x)<\tau]} & x \in S\\
0 & x \notin S
\end{cases}.
\]
\end{definition}
In order to compute these estimates, we need to know the weights $f_x$ and
distributions $\seedDist_x$ for the sampled keys $x\in
S$. In particular, in our applications when $f_x = f(\freq_x)$ is a
function of frequency and the seed
distribution is parameterized by frequency, then it suffices to know
the frequencies of sampled keys.\footnote{In the general case, we assume these functions of the frequency are computationally tractable or can be easily approximated up to a small constant factor. For our application, the discussion will follow in Section~\ref{estimator:sec}.}

\begin{claim}\label{claim:unbiased}
The inverse-probability estimator is unbiased, that is, $\E\left[\widehat{f_x}\right]=f_x$.
\end{claim}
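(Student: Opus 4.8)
The plan is to prove this via the standard conditioning argument that makes a conditioned Horvitz--Thompson estimator unbiased even though its inclusion threshold $\tau$ is data-dependent. Fix a key $x$. If $x$ is inactive then $f_x=0$ and $x\notin S$, so $\widehat{f_x}=0=f_x$ and there is nothing to prove; assume from now on that $x$ is active. Writing $\widehat{f_x}=\frac{f_x}{\Pr_{\seed(x)\sim\seedDist_x}[\seed(x)<\tau]}$ when $x\in S$ (and $\widehat{f_x}=0$ otherwise), where the probability in the denominator is over a fresh draw $\seed(x)\sim\seedDist_x$, I would compute $\E[\widehat{f_x}]$ by the tower rule, conditioning on the seeds $\{\seed(y)\}_{y\in\mathcal{X},\,y\neq x}$ of all keys other than $x$.

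The key step is a deterministic observation about bottom-$k$ samples. Once the other seeds are fixed, let $\tau_x$ denote the $(k-1)$-th smallest of them (with the convention $\tau_x=+\infty$ when fewer than $k-1$ other keys are active; in any case $\tau_x>0$ since seeds are positive). Then $x$ is among the $k-1$ smallest-seed keys, i.e.\ $x\in S$, if and only if $\seed(x)<\tau_x$; and on this event the $k$-th smallest seed overall equals exactly $\tau_x$, so the realized threshold satisfies $\tau=\tau_x$. (Ties occur with probability zero for the continuous seed distributions used here, and can otherwise be broken by any fixed consistent rule.) Hence, on the event $\{x\in S\}$, the denominator $\Pr_{\seed(x)\sim\seedDist_x}[\seed(x)<\tau]$ equals the \emph{constant} $\Pr_{\seed(x)\sim\seedDist_x}[\seed(x)<\tau_x]$, which is strictly positive by the assumption that $\Pr[\seed(x)<t]>0$ for every $t>0$; in particular the estimator is well defined.

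Plugging this in and using that $\seed(x)$ is drawn independently of the other seeds, the conditional expectation reduces to
\[
\E\bigl[\widehat{f_x}\bigm|\{\seed(y)\}_{y\neq x}\bigr]=\frac{f_x}{\Pr_{\seed(x)\sim\seedDist_x}[\seed(x)<\tau_x]}\cdot\Pr\bigl[\seed(x)<\tau_x\bigm|\{\seed(y)\}_{y\neq x}\bigr]=f_x ,
\]
since the conditional inclusion probability $\Pr[\seed(x)<\tau_x\mid\{\seed(y)\}_{y\neq x}]$ equals $\Pr_{\seed(x)\sim\seedDist_x}[\seed(x)<\tau_x]$ by independence, and the two factors cancel. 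Taking the expectation over $\{\seed(y)\}_{y\neq x}$ then gives $\E[\widehat{f_x}]=f_x$. The only genuinely delicate point is the second paragraph: that conditioning on the other keys' seeds \emph{simultaneously} fixes the membership event $\{x\in S\}$ and pins the $\tau$ appearing in the denominator to the constant $\tau_x$ --- which is exactly what forces the denominator to coincide with the inclusion probability and cancel it. The rest is linearity of expectation and independence of the seeds.
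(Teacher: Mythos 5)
Your proposal is correct and follows essentially the same route as the paper: condition on the other keys' seeds (equivalently on $\tau_x$, the $(k-1)$-th smallest of them), observe that $x\in S$ iff $\seed(x)<\tau_x$ and that on this event $\tau=\tau_x$, so the estimator is a plain inverse-probability estimator conditionally, and conclude by the tower rule. Your write-up merely spells out a few edge cases (inactive keys, $\tau_x=\infty$, ties) that the paper leaves implicit.
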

\begin{proof}
We first consider $\widehat{f_x}$
when conditioned on the seed values of all other keys
$\mathcal{X} \setminus \{x\}$  and in particular on
$$\tau_x := \{\seed(z) \mid z\in \mathcal{X}\setminus \{x\}\}_{(k-1)} , $$
which is  the $k-1$ smallest seed on $\mathcal{X}\setminus \{x\}$.
Under this conditioning, a key $x$ is included
in $S$ with probability $\Pr_{\seed(x)\sim \seedDist_x}[\seed(x) < \tau_x]$.
 When $x\not\in S$, the estimate is $0$. When $x\in S$, we have that
 $\tau_x =\tau$ and the estimate is the ratio of $f_x$ and the
 inclusion probability. So our estimator is a plain inverse probability
estimator and thus  $\E\left[\widehat{f_x} \mid \tau_x\right]=f_x$.

Finally, from the fact that the estimator is unbiased when conditioned on $\tau_x$, we also get that it is unconditionally unbiased:
 $\E\left[\widehat{f_x}\right]=\E_{\tau_x}\left[\E\left[\widehat{f_x} \mid \tau_x\right]\right]=\E_{\tau_x}\left[f_x\right]=f_x$.
\end{proof}

We now turn to analyze the variance of the estimators. The
guarantees we can obtain on the quality of the sum estimates depend on how well the
distributions $\seedDist_x$ are tailored to the values $f_x$, where ideally,  keys should be sampled with probabilities
proportional to $f_x$.   PPSWOR, where $\seed(x)\sim
\Exp(f_x)$, is such a ``gold standard'' sampling
scheme that provides us with strong guarantees:  For domain statistics $\sum_{x\in H} f_x$, we get a tight
worst-case bound on the coefficient of variation\footnote{Defined as
  the ratio of the standard deviation to mean. For our unbiased
  estimators it is equal to the relative root mean squared error.} of
$1/\sqrt{q (k-2)}$, where
$q = \sum_{x\in H} f_x/\sum_{x\in
  \mathcal{X}} f_x$ is the  fraction of the
 statistics that is due to the domain
 $H$.  Moreover,  the estimates are concentrated in a
 Chernoff bounds sense.  For
objectives of the form \eqref{eq:sum-statistics}, we obtain
additive Hoeffding-style bounds that depend only on sample size and the range of $L_x$.

When we cannot implement ``gold-standard'' sampling via small composable sampling
structures, we seek guarantees that are close to that.   Conveniently,
in the analysis it suffices to bound the variance of the {\em per-key} estimators \cite{ECohenADS:TKDE2015,freqCap:Journal}:
A key property of bottom-$k$ estimators is that
$\forall x,z,\ \cov(\widehat{f_x}, \widehat{f_z})\leq 0$ (equality
holds for $k \geq 3$) \cite{freqCap:Journal}.   Therefore, the variance of the sum estimator can be
bounded by the sum of bounds on the per-key variance.  This allows us
to only analyze the per-key variance $\var\left(\widehat{f_x}\right)$. To achieve the guarantees of the ``gold standard'' sampling, the desired bound on the per-key variance for a sample of size
$k-1$ (a bottom-$k$ sample where the $k$-th lowest seed is the inclusion threshold) is
\begin{equation} \label{ppsworperkeyvar:eq}
\var\left(\widehat{f_x}\right) \leq \frac{1}{k-2} f_x
  \sum_{z\in \mathcal{X}} f_z\ .
\end{equation}
 So our goal is to establish upper bounds on the per-key variance that
are within a small constant of \eqref{ppsworperkeyvar:eq}.  We
refer to this value as the {\em overhead}.
  The overhead
factor in the per-key bounds carries over to the sum estimates.

We next review the methodology for deriving per-key variance bounds.
The starting point is to first bound
  the per-key variance of $\widehat{f_x}$ {\em conditioned on}
  $\tau_x$.
\begin{claim}\label{claim:estimatorvar}
With the inverse probability estimator we have
$$\var\left(\widehat{f_x}\right) = \E_{\tau_x}\left[\var\left(\widehat{f_x} \mid \tau_x\right)\right] =
\E_{\tau_x}\left[f_x^2\left(\frac{1}{\Pr[\seed(x)< \tau_x]}-1\right)\right]\ .$$
\end{claim}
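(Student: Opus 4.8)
The plan is to apply the law of total variance, conditioning on the same quantity used in the proof of Claim~\ref{claim:unbiased}, namely $\tau_x := \{\seed(z)\mid z\in\mathcal{X}\setminus\{x\}\}_{(k-1)}$, the $(k-1)$-th smallest seed among the keys other than $x$. Writing
$\var\left(\widehat{f_x}\right) = \E_{\tau_x}\left[\var\left(\widehat{f_x}\mid\tau_x\right)\right] + \var_{\tau_x}\left(\E\left[\widehat{f_x}\mid\tau_x\right]\right)$,
the first equality in the claim follows once we observe that the second summand is zero: it was shown in the proof of Claim~\ref{claim:unbiased} that $\E\left[\widehat{f_x}\mid\tau_x\right]=f_x$, a deterministic quantity, which therefore has zero variance in $\tau_x$.

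For the second equality, I would identify the conditional law of $\widehat{f_x}$ given $\tau_x$. Since $\seed(x)\sim\seedDist_x$ is drawn independently of the seeds of the other keys, conditioning on $\tau_x$ does not change the distribution of $\seed(x)$. As in the proof of Claim~\ref{claim:unbiased}, under this conditioning $x$ is included in $S$ exactly when $\seed(x)<\tau_x$, in which case the inclusion threshold satisfies $\tau=\tau_x$ and the estimator takes the value $f_x/\Pr[\seed(x)<\tau_x]$; otherwise $\widehat{f_x}=0$. Hence, conditioned on $\tau_x$, the estimator is a two-valued (scaled Bernoulli) random variable: it equals $f_x/p$ with probability $p := \Pr_{\seed(x)\sim\seedDist_x}[\seed(x)<\tau_x]$ and equals $0$ with probability $1-p$.

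From this description, a one-line computation gives $\E\left[\widehat{f_x}^2\mid\tau_x\right] = (f_x/p)^2\cdot p = f_x^2/p$, so that $\var\left(\widehat{f_x}\mid\tau_x\right) = f_x^2/p - f_x^2 = f_x^2\!\left(\tfrac{1}{p}-1\right)$, and taking the expectation over $\tau_x$ yields $\var\left(\widehat{f_x}\right) = \E_{\tau_x}\!\left[f_x^2\!\left(\tfrac{1}{\Pr[\seed(x)<\tau_x]}-1\right)\right]$, as claimed. I do not expect a genuine obstacle here; the only point requiring care — and it has already been dealt with for unbiasedness — is the reduction showing that, conditioned on $\tau_x$, the event $\{x\in S\}$ coincides with $\{\seed(x)<\tau_x\}$ and that $\tau=\tau_x$ on this event (modulo ties, which we assume occur with probability zero), so that the estimator's denominator is measurable with respect to the conditioning and the conditional variance is indeed that of a scaled Bernoulli variable.
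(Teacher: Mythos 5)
Your proposal is correct and follows essentially the same route as the paper: the paper's one-line proof invokes the law of total variance together with the conditional unbiasedness $\E[\widehat{f_x}\mid\tau_x]=f_x$ established in Claim~\ref{claim:unbiased}, which kills the second term of the decomposition exactly as you argue. Your explicit computation of the conditional variance as that of a scaled Bernoulli variable is the (implicit) content of the second equality in the claim, so there is nothing to add.
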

\begin{proof}
Follows from the law of total variance and the unbiasedness of the
conditional estimates for any fixed value of $\tau_x$,  $\E\left[\widehat{f_x} \mid \tau_x\right] = f_x$.
\end{proof}
For the ``gold standard'' PPSWOR sample, we have $\Pr[\seed(x)< t]=1-\exp(-f_x t)$ and using $\frac{e^{-x}}{1-e^{-x}} \leq \frac{1}{x}$, we get
\begin{equation} \label{ppsworperkeyvart:eq}
\var\left(\widehat{f_x} \mid \tau_x\right) = f_x^2\left(\frac{1}{\Pr[\seed(x)< \tau_x]}-1\right) \leq \frac{f_x}{\tau_x}\ .
\end{equation}

In order to bound the unconditional per-key variance, we use the following notion of stochastic dominance.
\begin{definition}\label{def:dominance}
Consider two density functions $a$ and $b$ both with support
on the nonnegative reals.  We say that
$a$ is {\em dominated} by $b$ ($a \preceq b$)
if for all $z\geq 0$, $\int_0^z a(y)dy \leq \int_0^z b(y) dy$.
\end{definition}
Note that since they are both density functions, it implies that the CDF of $a$
is pointwise at most the CDF of $b$. In particular, the probability of being below some value $y$ under $b$ is at least that of $a$.\footnote{We note in our applications, lower values mean higher inclusion probabilities.  In most applications, higher values are associated with better results, and accordingly, {\em first-order stochastic dominance} is usually defined as the reverse.}

When bounding the variance, we use a distribution
$B$ that dominates the distribution of
$\tau_x$ and is
easier to work with and then compute the upper bound
\begin{align}
\var\left[\widehat{f_x}  \right] &= \E_{\tau_x}\left[f_x^2\left(\frac{1}{\Pr[\seed(x)<
                                   \tau_x]}-1\right)\right] \label{varbound:eq} \\
                                   &\leq
\E_{t \sim B}\left[f_x^2\left(\frac{1}{\Pr[\seed(x)< t]}-1\right)\right]\ .\nonumber
\end{align}
With PPSWOR, the distribution of
$\tau_x$ is dominated by
the distribution $\Erlang[\sum_{z\in \mathcal{X}} f_z,k-1]$,
where $\Erlang[V,k]$ is the distribution of the sum of $k$ independent exponential
random variables with parameter $V$.  The density function of $\Erlang[V,k]$ is $B_{V,k}(t) = \frac{V^k t^{k-1}e^{-Vt}}{(k-1)!}$.
Choosing $B$ to be $\Erlang[\sum_{z\in \mathcal{X}} f(\freq_z),k-1]$ in \eqref{varbound:eq} and using \eqref{ppsworperkeyvart:eq}, we get
the bound in \eqref{ppsworperkeyvar:eq}.

Note that if we have an estimator that gives a weaker bound of $c \cdot \frac{f_x}{\tau_x}$ on the conditional variance and the distribution of $\tau_x$ is similarly
dominated by $\Erlang[\sum_{z\in \mathcal{X}} f_z,k-1]$, we will obtain a
corresponding bound on the unconditional variance with overhead $c$.

\subsection{Concave Sublinear Functions}\label{concavesublinear:sec}
A function $f:[0,\infty) \rightarrow [0,\infty)$  is  {\em soft
  concave sublinear} if for some $a(t)\geq 0$ it can be expressed as\footnote{The definition also allows $a(t)$
  to have discrete mass at points (that is, we can add a component of the form $\sum_i{a(t_i) (1-e^{-\freq t_i})}$). We generally ignore this component for the sake of presentation, but one way to model this is using Dirac delta.}
\begin{equation} \label{onekey:eq}
f(\freq) = \LapM[a](\freq) := \int_0^\infty  a(t) (1-e^{-\freq t})dt\ .
\end{equation}
$\LapM[a](\freq)$ is called the {\em complement Laplace transform} of $a$ at $\freq$.
The function $a(t)$ is
the inverse Laplace$^c$ (complement Laplace) transform of $f$:
\begin{equation} \label{invtransform}
a(t) = (\LapM)^{-1}[f](t)\ .
\end{equation}
A table with the inverse Laplace$^c$ transform of several common functions (in particular, the moments $\freq^p$ for $p \in (0,1)$ and $\ln{(1+\freq)}$) appears in \cite{Concavesub:KDD2017}.
We additionally use the notation
\[
\LapM[a](\freq)_\alpha^\beta := \int_\alpha^\beta  a(t) (1-e^{-\freq t})dt\ .
\]

The sampling schemes we present in this work will be defined for {\em soft} concave
sublinear functions of the frequencies.
However, this will allow us to estimate well any function that is
within a small multiplicative constant of a soft concave sublinear
function.
In particular, we can estimate {\em concave sublinear functions}.
These functions
can be expressed as
\begin{equation} \label{csubline:eq}
f(\freq) = \int_0^\infty  a(t) \min\{1,\freq t\} dt
\end{equation}
for $a(t)\geq 0$.\footnote{Here we also allow $a(t)$ to have discrete mass using Dirac delta. For the sake of presentation, we also assume bounded $\freq$ -- otherwise we need to add a linear component $A_\infty \freq$ for some $A_\infty \geq 0$. The component $A_\infty \freq$ can easily be added to the final sketch presented in Section~\ref{funcoffreq:sec}, for example, by taking the minimum with another independent PPSWOR sketch.}
The concave sublinear family includes all functions  such that
$f(0)=0$, $f$
is monotonically non-decreasing, $\partial_+f(0) < \infty$, and $\partial^2
f \leq 0$.

Any concave sublinear function $f$ can be
approximated by a soft concave sublinear function as follows. Consider the corresponding soft concave sublinear
function $\tilde{f}$ using the same coefficients $a(t)$.
The function $\tilde{f}$ closely approximates $f$ pointwise \cite{Concavesub:KDD2017}:
$$(1-1/e)f(\freq)\leq \tilde{f}(\freq)\leq f(\freq)\ .$$
Our weighted sample for $\tilde{f}$ will respectively approximate a weighted sample for $f$ (later explained in Remark~\ref{remark:constant-approx-estimator}).

\section{Stochastic PPSWOR Sampling}\label{stocsample:sec}

In this section, we provide an analysis of PPSWOR for a case that will appear later in our main sketch. The case we consider is the following. In the PPSWOR sampling scheme described in
Section~\ref{ppswor:subsec}, the weights $w_x$ of the keys were part of the
deterministic input to the algorithm. In this section, we consider PPSWOR
sampling when the weights are random variables.
We will show that under certain assumptions, PPSWOR sampling according to randomized inputs is close to sampling according to the expected values of these random inputs.

Formally, let $\mathcal{X}$ be a set of keys. Each key $x \in
\mathcal{X}$ is associated with $r_x\geq 0$ independent random variables
$S_{x,1},\ldots,S_{x,r_x}$ in the range $[0,T]$ (for some constant
$T>0$).
The weight of key $x$ is the random variable
$S_x := \sum_{i = 1}^{r_x}{S_{x,i}}$. We additionally denote its expected weight by
$v_x := \E[S_x] $, and the expected sum statistics by $V:=\sum_{x} v_x$.

A {\em stochastic PPSWOR} sample is a PPSWOR sample computed for the key-value
pairs $(x,S_x)$.
That is, we draw the random variables $S_x$, then we draw for each $x$ a random variable $\seed(x) \sim
\Exp[S_x]$, and take the $k$ keys with lowest seed values.

We prove two results that relate stochastic PPSWOR sampling to a PPSWOR sample according to the expected values $v_x$. The first result bounds the variance of estimating $v_x$ using a stochastic PPSWOR sample. We consider the conditional
inverse-probability estimator of $v_x$ (Definition~\ref{def:inverseprob}). Note that even though the PPSWOR sample was computed using the random weight $S_x$, the estimator $\widehat{v_x}$  is computed using $v_x$ and will be $\frac{v_x}{\Pr[\seed(x)<\tau]}$ for keys $x$ in the sample.
Based on the discussion in Section~\ref{bottomkest:sec},
it suffices
to bound the per-key variance and relate it to the per-key variance bound for a PPSWOR
sample computed directly for $v_x$.
  We show that when $V\geq Tk$, the overhead due to
the stochastic sample is at most $4$ (that is, the variance grows by a multiplicative factor of $4$).
The proof details would also reveal that when $V\gg
Tk$, the worst-case bound on the overhead is actually closer to $2$.
\begin{theorem}\label{thm:stochasticvariance}
Let $k \geq 3$. In a stochastic PPSWOR sample, if $V \geq Tk$,
then for every key $x \in\mathcal{X}$,
the variance $\var\left[\hat{v}_x\right]$  of the bottom-$k$ inverse
probability estimator of $v_x$ is bounded by
\[
\var\left[\hat{v}_x\right] \leq \frac{4v_x V}{k-2}.
\]
\end{theorem}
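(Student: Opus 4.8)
The plan is to follow the methodology laid out in Section~\ref{bottomkest:sec}: bound the per-key conditional variance $\var(\widehat{v_x}\mid\tau_x)$ in terms of $v_x/\tau_x$ (with some overhead constant), then exhibit a density $B$ that dominates the distribution of $\tau_x$ and plug into~\eqref{varbound:eq}. The first step is the delicate one here, because the seed $\seed(x)\sim\Exp[S_x]$ is driven by the \emph{random} weight $S_x$, not the deterministic $v_x$ used in the estimator. Conditioning on $\tau_x$, the inclusion probability of $x$ is $\Pr[\seed(x)<\tau_x] = \E_{S_x}[1-e^{-S_x\tau_x}]$, so by Claim~\ref{claim:estimatorvar} we need to upper bound $v_x^2\bigl(\frac{1}{\E_{S_x}[1-e^{-S_x\tau_x}]}-1\bigr)$. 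The obstacle is that $S_x \mapsto 1-e^{-S_x\tau_x}$ is concave, so Jensen pushes the expectation \emph{below} $1-e^{-v_x\tau_x}$, i.e.\ the random seed can only hurt the inclusion probability, and the gap can in principle be large when the $S_{x,i}$ are spread out over $[0,T]$. This is exactly where the hypotheses $S_{x,i}\in[0,T]$ and $V\ge Tk$ must be used.

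The way I would control the Jensen gap is to split into two regimes according to the size of $v_x$ relative to $T$. When $v_x$ is not too small (say $v_x \gtrsim T$), I expect $S_x$ to be ``concentrated enough'' on the scale relevant for $1-e^{-S_x\tau_x}$ that $\E_{S_x}[1-e^{-S_x\tau_x}] \ge \frac12(1-e^{-v_x\tau_x})$ or a similar constant-factor lower bound — one can get this from a second-order (or a direct convexity-defect) estimate using $\var(S_x)\le \sum_i \E[S_{x,i}^2] \le T\sum_i\E[S_{x,i}] = Tv_x$, together with the bound $1-e^{-s\tau} \ge$ (something linear) on the relevant range. That would give $\var(\widehat{v_x}\mid\tau_x)\le c\,v_x/\tau_x$ with $c$ a small constant (heading toward $2$ when $v_x\gg T$). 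When $v_x$ is small compared to $T$ there are few terms or tiny terms; there I would instead bound things crudely, using that even a single term of expected size $v_x$ already forces enough inclusion probability, and absorb the loss into the constant $4$.

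For the second step, I need a density dominating the law of $\tau_x$, the $(k-1)$st smallest seed among $\mathcal{X}\setminus\{x\}$. The seeds $\seed(z)\sim\Exp[S_z]$ are not i.i.d.\ exponentials, so I cannot directly invoke the clean $\Erlang$ comparison from the deterministic PPSWOR analysis. The idea is to show $\tau_x$ is stochastically dominated (in the sense of Definition~\ref{def:dominance}) by the $(k-1)$st order statistic of independent $\Exp[v_z]$ variables — intuitively, because the random weights have the right mean $v_z$ and, for the \emph{minimum}-type statistics that drive a bottom-$k$ sample, replacing $\Exp[S_z]$ by $\Exp[v_z]$ only slows things down by a bounded amount. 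Concretely I expect to compare $\Pr[\seed(z)<t] = \E[1-e^{-S_z t}]$ with $1-e^{-v_z t}$ using the same regime split, obtaining $\Pr[\seed(z)<t] \ge \alpha(1-e^{-v_z t})$ for a constant $\alpha$, hence the superposition $\sum_z \mathbf 1[\seed(z)<t]$ stochastically dominates what you get from rates $\alpha v_z$; this yields domination of $\tau_x$ by $\Erlang[\alpha V, k-1]$ (after using $V\ge Tk$ to make $\alpha$ a genuine constant like $1/2$). Plugging $B = \Erlang[\alpha V,k-1]$ and the conditional bound $c\,v_x/\tau_x$ into~\eqref{varbound:eq} gives, exactly as in the derivation of~\eqref{ppsworperkeyvar:eq}, $\var(\widehat{v_x}) \le \frac{c}{\alpha}\cdot\frac{v_x V}{k-2}$, and the two constants should combine to $4$. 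The main obstacle, to reiterate, is handling the Jensen defect in $\E_{S_x}[1-e^{-S_x\tau_x}]$ uniformly in $\tau_x$ and $v_x$; everything downstream is the standard bottom-$k$ variance bookkeeping.
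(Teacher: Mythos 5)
Your overall architecture (bound the conditional variance, then integrate against a density dominating $\tau_x$) is indeed the paper's, and your instinct for the hard ingredient is essentially right: per summand one uses $S_{x,i}\in[0,T]$ (via $\E[S_{x,i}^2]\le T\,\E[S_{x,i}]$, or equivalently the chord bound for the concave map $s\mapsto 1-e^{-s\tau}$ on $[0,T]$) together with independence across the $r_x$ summands to get $\Pr[\seed(x)<\tau]\ge 1-e^{-v_x\tau(1-\tau T/2)}$ (Lemma~\ref{ssamplingseed:thm}); this holds uniformly in $v_x$, so your case split on $v_x$ versus $T$ is unnecessary. The genuine gap is that the conditional bound $\var\left(\widehat{v_x}\mid\tau_x\right)\le c\,v_x/\tau_x$ you want is \emph{false} uniformly in $\tau_x$, and this is not a constant-chasing issue. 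Take $r_x=1$ with $S_{x,1}=T$ with probability $v_x/T$ and $0$ otherwise: then $\Pr[\seed(x)<\tau]=\frac{v_x}{T}(1-e^{-T\tau})\le v_x/T$ for every $\tau$, so the conditional variance tends to $v_x(T-v_x)$ as $\tau\to\infty$ instead of decaying like $1/\tau$; correspondingly the bound $1-e^{-v_x\tau(1-\tau T/2)}$ becomes vacuous for $\tau>2/T$. The paper handles this by splitting the integral against the dominating density at $\tau=1/T$: below $1/T$ the integrand is at most $2v_x/\tau$, contributing $\frac{2v_xV}{k-2}$; above $1/T$ the (monotone) integrand is bounded by its value at $1/T$, namely $2Tv_x$, and this is precisely where $V\ge Tk$ is used, via $2Tv_x\le 2v_xV/k$. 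Your plan never confronts the large-$\tau_x$ regime, and without deploying $V\ge Tk$ exactly there the stated bound cannot be recovered.

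The threshold-domination step is also argued in the wrong direction. Since the integrand $v_x^2\left(\frac{1}{\Pr[\seed(x)<t]}-1\right)$ is decreasing in $t$, an upper bound on the variance needs a comparison distribution that is stochastically \emph{smaller} than $\tau_x$, which requires \emph{upper} bounds on the competitors' CDFs $\Pr[\seed(z)<t]$. That is a one-line Jensen argument, $\E[1-e^{-S_zt}]\le 1-e^{-v_zt}$ (Lemma~\ref{perkeydom:lemma}), losing no constant, and it yields domination of $\tau_x$ by $\Erlang[V,k-1]$ exactly (Lemma~\ref{lem:erlangdom}). Your proposed lower bound $\Pr[\seed(z)<t]\ge\alpha(1-e^{-v_zt})$ pushes $\tau_x$ stochastically downward, which for a decreasing integrand yields a \emph{lower} bound on $\E_{\tau_x}\left[g(\tau_x)\right]$; and even as a CDF comparison it does not give $\Erlang[\alpha V,k-1]$, since $\alpha(1-e^{-vt})\le 1-e^{-\alpha vt}$, i.e., the inequality points the wrong way. (The lower-bound direction is the ingredient needed for Theorem~\ref{thm:stochasticinclusionprob}, not for this variance bound.)
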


Note that in order to compute these estimates, we need to be able to compute the values $v_x = \E[S_x]$
and $\Pr[\seed(x)<\tau]$ for sampled keys.
With stochastic sampling, the precise distribution $\seedDist_x$
depends on the distributions of the random variables $S_{x,i}$.   For
now, however, we assume that $\seedDist_x$ and $v_x$ are available to
us with the sample. In Section~\ref{funcoffreq:sec}, when we use stochastic sampling, we will also show how to compute $\seedDist_x$.

The second result in this section provides a lower bound on the probability that a key $x$ is included in the stochastic PPSWOR sample of size $k = 1$. We show that when $V \geq \frac{1}{\varepsilon}\ln{\left(\frac{1}{\varepsilon}\right)}T$, the probability key $x$ is included in the sample is at least $1-2\varepsilon$ times the probability it is included in a PPSWOR sample according to the expected weights.

\begin{theorem}\label{thm:stochasticinclusionprob}
Let $\varepsilon \leq \frac{1}{2}$. Consider a stochastic PPSWOR sample of size $k = 1$. If $V \geq \frac{1}{\varepsilon}\ln{\left(\frac{1}{\varepsilon}\right)}T$, the probability that any key $x \in \mathcal{X}$ is included in the sample is at least $(1-2\varepsilon)\frac{v}{V}$.
\end{theorem}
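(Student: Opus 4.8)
The plan is to compute the inclusion probability exactly, turn it into a one–dimensional integral via the identity $\frac{s}{s+w}=\int_0^\infty s\,e^{-(s+w)t}\,dt$, and then bound the two resulting Laplace transforms separately, using only that every summand lies in $[0,T]$. First, for a size-$1$ sample $x$ is included iff $\seed(x)=\min_y\seed(y)$. Conditioning on the realized weights $(S_y)_y$, the seeds are independent exponentials, so the standard memoryless computation gives $\Pr[\seed(x)<\seed(y)\ \forall y\neq x\mid (S_y)_y]=S_x/\sum_y S_y$. Writing $W:=\sum_{y\neq x}S_y$ and $U:=\E[W]=V-v$, substituting $\frac{S_x}{S_x+W}=S_x\int_0^\infty e^{-(S_x+W)t}dt$, and using Tonelli together with the independence of $S_x=\sum_i S_{x,i}$ from $W$,
\[
\Pr[x\text{ included}]=\E\!\Big[\tfrac{S_x}{S_x+W}\Big]=\int_0^\infty \E\!\big[S_x e^{-S_x t}\big]\,\E\!\big[e^{-W t}\big]\,dt .
\]

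Next I bound the two factors. Since $e^{-st}$ is convex in $s$, Jensen applied to every coordinate gives $\E[e^{-Wt}]=\prod_{y\neq x}\prod_i \E[e^{-S_{y,i}t}]\ge\prod e^{-v_{y,i}t}=e^{-Ut}$. For the other factor I use boundedness rather than concavity: expanding $S_x e^{-S_x t}=\sum_i S_{x,i}\prod_j e^{-S_{x,j}t}$ and taking expectations coordinatewise,
\[
\E\!\big[S_x e^{-S_x t}\big]=\sum_i \E[S_{x,i}e^{-S_{x,i}t}]\prod_{j\neq i}\E[e^{-S_{x,j}t}]\ \ge\ \sum_i v_{x,i}\,e^{-Tt}\,e^{-(v-v_{x,i})t}\ \ge\ v\,e^{-(v+T)t},
\]
where $S_{x,i}\le T$ gives $\E[S_{x,i}e^{-S_{x,i}t}]\ge e^{-Tt}v_{x,i}$ and $\E[e^{-S_{x,j}t}]\ge e^{-v_{x,j}t}$ (Jensen) gives the product bound, and the last step drops $e^{v_{x,i}t}\ge1$. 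Plugging both bounds in,
\[
\Pr[x\text{ included}]\ \ge\ \int_0^\infty v\,e^{-(v+T)t}e^{-Ut}\,dt\ =\ \frac{v}{V+T}.
\]
It then suffices to check $\frac{v}{V+T}\ge(1-2\varepsilon)\frac{v}{V}$, i.e.\ $V\ge\frac{1-2\varepsilon}{2\varepsilon}T$; this follows from $V\ge\frac1\varepsilon\ln\frac1\varepsilon\,T$ because $2\ln\frac1\varepsilon+2\varepsilon-1$ is decreasing on $(0,\tfrac12]$ and equals $2\ln2>0$ at $\varepsilon=\tfrac12$.

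The step I expect to be the real obstacle is recognizing that one should \emph{not} try to condition on the weights and argue that $S_x$ concentrates around $v$: $S_x$ is a sum of only $r_x$ bounded terms with a possibly tiny effective count $v/T$, so it need not concentrate multiplicatively, and a concentration/truncation argument (for instance $\Pr[x\text{ included}]\ge\E[\min(S_x,c)]/(c+U)$ with $c=v+\tfrac{\varepsilon}{1-\varepsilon}V$ and a Chernoff tail bound for $\Pr[S_x>c]$) is fiddly and its naive form seems to want a stronger, $\Theta(\varepsilon^{-2}\log\tfrac1\varepsilon)$-type, lower bound on $V/T$ unless handled with care. The Laplace-transform decoupling sidesteps this entirely: the only place the $[0,T]$ constraint enters is the harmless factor $e^{-Tt}$, which is exactly why the required hypothesis stays at the $\Theta(\varepsilon^{-1}\log\tfrac1\varepsilon)$ level — indeed the computation above shows the stated hypothesis is more than sufficient, so I would expect the authors' own proof (matching the $\frac1\varepsilon\ln\frac1\varepsilon$ form) to take the more direct but messier concentration route instead.
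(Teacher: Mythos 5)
Your proof is correct, and it takes a genuinely different route from the paper's. The paper first lower-bounds the seed CDF of $x$ via $\Pr[\seed(x)\le t]\ge 1-e^{-v t(1-tT/2)}$ (Lemma~\ref{ssamplingseed:thm}), upper-bounds the competing minimum $\min_{z\ne x}\seed(z)$ by stochastic dominance with $\Exp[V-v]$, and then integrates piecewise with a split at $t=2\varepsilon/T$, arriving at $(1-\varepsilon)^2\frac{v}{V}$ after a fairly long computation (with the single-key case treated separately). You instead use the exact conditional inclusion probability $S_x/\sum_y S_y$ for competing exponentials, decouple it via $\frac{s}{s+w}=\int_0^\infty s\,e^{-(s+w)t}\,dt$, and bound the two Laplace-transform factors by Jensen and by the boundedness $S_{x,i}\le T$; every step checks out, including the coordinatewise expansion of $\E[S_x e^{-S_x t}]$ and the final verification that $2\ln\frac{1}{\varepsilon}+2\varepsilon-1>0$ on $(0,\tfrac12]$. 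Your approach buys several things: a closed-form bound $\frac{v}{V+T}=\frac{v}{V}\cdot\frac{1}{1+T/V}\ge\frac{v}{V}\bigl(1-\frac{\varepsilon}{\ln(1/\varepsilon)}\bigr)$ that is strictly stronger than the stated $(1-2\varepsilon)\frac{v}{V}$ (and than the paper's intermediate $(1-\varepsilon)^2\frac{v}{V}$ for small $\varepsilon$), a uniform treatment of the single-key case (where $W\equiv 0$ and the integral yields $\Pr[S_x>0]\ge v/(v+T)$), and no need for the dominance machinery or the integral split. Your closing speculation about the paper's method is slightly off --- it is not a concentration/truncation argument but a CDF bound integrated against a dominating exponential density --- but that has no bearing on the validity of your argument.
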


The proofs of the two theorems are deferred to Appendix~\ref{stocproofs}.

\section[SumMax Sampling Sketch]{$\SUMMAX$ Sampling
  Sketch} \label{summax:sec}

\begin{algorithm2e}[t]\caption{$\SUMMAX$ Sampling Sketch \label{summax:algo}}
\DontPrintSemicolon
  \tcp{{\bf Initialize empty structure} $s$}
  \KwIn{Sample size $k$}
 $s.h \gets $ fully independent random
hash with range $\Exp[1]$\;
Initialize $s.\Sample$ \tcp*[h]{A bottom-$k$ structure (Algorithm~\ref{bottomk:algo})}\;
\tcp{{\bf Process element} $e=(e.\mkey,e.\mval)$ where $e.\mkey=(e.\mkey.p,e.\mkey.s)$}
Process element
$(e.\mkey.p, s.h(e.\mkey)/e.\mval)$ to structure
$s.\Sample$ \tcp*[h]{bottom-$k$ process element (Algorithm~\ref{bottomk:algo})}\;
\tcp{{\bf Merge structures $s_1$, $s_2$} (with $s_1.h = s_2.h$) to get
  $s$}
$s.h\gets s_1.h$ \tcp*[h]{$s_1.h = s_2.h$}\;
$s.\Sample \gets$ Merge $s_1.\Sample$,
$s_2.\Sample$\tcp*[h]{bottom-$k$ merge (Algorithm~\ref{bottomk:algo})}
\end{algorithm2e}

In this section, we present an auxiliary sampling sketch which will be used in Section~\ref{funcoffreq:sec}.
The sketch processes elements $e=(e.\mkey,e.\mval)$ with keys $e.\mkey=(e.\mkey.p, e.\mkey.s)$ that are
structured to have a primary key $e.\mkey.p$ and
a secondary key $e.\mkey.s$.
For each primary key $x$, we define
$$\SUMMAX_{D}(x) := \sum_{z \mid z.p=x} \MAX_{D}(z)$$
where $\MAX$ is as defined in Section~\ref{preliminaries:sec}. If there
are no elements $e\in D$ such that $e.\mkey.p=x$, then by definition
$\MAX_D(z) = 0$ for all $z$ with $z.p=x$ (as there are no elements in $D$
with key $z$)  and therefore
$\SUMMAX_{D}(x) = 0$.
Our goal in this section is to design a sketch that produces a
PPSWOR sample of primary keys $x$ according to weights $\SUMMAX_{D}(x)$. Note that while the key space of the input elements contains structured keys of the form $e.\mkey=(e.\mkey.p, e.\mkey.s)$, the key space for the output sample will be the space of primary keys only.
Our sampling sketch is described in
Algorithm~\ref{summax:algo}.

The sketch structure consists of a bottom-$k$ structure and a hash function $h$.
We assume we have a perfectly random hash function $h$ such that for
every key $z = (z.p, z.s)$, $h(z) \sim \Exp[1]$ independently (in practice, we assume that the hash function is provided by the platform on which we run).
We process an input element $e$ by generating a new data
element with key $e.\mkey.p$ (the primary key of the key of the input element) and value
$$\eScore(e) :=  h(e.\mkey)/e.\mval$$
and then processing that element by our bottom-$k$ structure.
The bottom-$k$ structure holds our current sample of primary keys.

By definition, the bottom-$k$ structure retains the $k$ primary keys
$x$ with minimum
$$\seed_D(x) := \min_{e\in D \mid e.\mkey.p = x}{\eScore(e)}\ .$$

To establish that this is a PPSWOR sample according to $\SUMMAX_{D}(x)$, we study the
distribution of $\seed_D(x)$.
\begin{lemma} \label{summax:lemma}
For all primary keys $x$ that appear in elements of $D$, $\seed_D(x)
\sim \Exp[\SUMMAX_{D}(x))]$.  The random variables
$\seed_D(x)$  are independent.
\end{lemma}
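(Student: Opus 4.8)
The plan is to derive the distribution of $\seed_D(x)$ by peeling the minimum in two layers that mirror the two-level key structure, reducing everything to two elementary facts about exponential variables. First I would fix a single full key $z=(z.p,z.s)$ that is active in $D$ and observe that every element $e\in D$ with $e.\mkey=z$ is assigned the score $\eScore(e)=h(z)/e.\mval$, and that all of these scores share the \emph{same} hash draw $h(z)\sim\Exp[1]$. Hence $\min_{e\in D:\,e.\mkey=z}\eScore(e)=h(z)/\MAX_D(z)$. By the scaling property of the exponential distribution (if $X\sim\Exp[1]$ and $c>0$ then $X/c\sim\Exp[c]$), this per-full-key minimum is distributed $\Exp[\MAX_D(z)]$; and because $h$ assigns independent values to distinct full keys, the family of these per-full-key minima, over all active $z$, is mutually independent (each depends only on its own $h(z)$).

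Next I would take the minimum over the full keys that share a given primary key $x$. By definition $\seed_D(x)=\min_{e\in D:\,e.\mkey.p=x}\eScore(e)$, and grouping the elements by their full key gives $\seed_D(x)=\min_{z:\,z.p=x}\bigl(h(z)/\MAX_D(z)\bigr)$, where full keys $z$ with $z.p=x$ that are inactive in $D$ satisfy $\MAX_D(z)=0$ and can be dropped from the minimum. Applying the standard identity that a minimum of independent exponentials with rates $\lambda_i$ is exponential with rate $\sum_i\lambda_i$, and noting that here the rates sum to $\sum_{z:\,z.p=x}\MAX_D(z)=\SUMMAX_D(x)$, yields $\seed_D(x)\sim\Exp[\SUMMAX_D(x)]$. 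For the independence statement I would note that distinct primary keys $x\neq x'$ index disjoint sets of full keys $\{z:z.p=x\}$ and $\{z:z.p=x'\}$, so $\seed_D(x)$ and $\seed_D(x')$ are measurable with respect to disjoint blocks of the independent hash family $\{h(z)\}$; hence the whole collection $\{\seed_D(x)\}_x$ is independent.

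I do not expect a genuine obstacle here: the argument is bookkeeping layered on exponential scaling and the minimum-of-exponentials identity. The one point that needs a little care is the treatment of full keys $z$ with $z.p=x$ that never occur in $D$; these are handled cleanly by the convention $\MAX_D(z)=0$ from Section~\ref{preliminaries:sec}, so they contribute a degenerate (infinite) term that does not affect the minimum. I would also remark that the hypothesis that $x$ appears in $D$ guarantees $\SUMMAX_D(x)>0$, so the asserted exponential law is nondegenerate and well defined.
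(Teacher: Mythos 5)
Your proof is correct and follows essentially the same two-layer argument as the paper's: first showing the per-full-key minimum equals $h(z)/\MAX_D(z)\sim\Exp[\MAX_D(z)]$ via the exponential scaling property, then taking the minimum over full keys sharing a primary key and invoking the minimum-of-independent-exponentials identity, with independence across primary keys from the disjointness of the underlying hash draws. The additional remarks on inactive full keys and on $\SUMMAX_D(x)>0$ are sound but not needed beyond what the paper already handles.
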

The proof is deferred to Appendix~\ref{deferred4}.

Note that the distribution of $\seed_D(x)$, which is
$\Exp[\SUMMAX_{D}(x)]$,  does not depend on the particular structure
of $D$ or the order in which elements are processed, but only on the
parameter $\SUMMAX_{D}(x)$.
The bottom-$k$ sketch structure maintains the $k$ primary keys with smallest
$\seed_D(x)$ values. We
therefore get the following corollary.

\begin{corollary}\label{cor:summax}
Given a stream or distributed set of elements $D$, the sampling sketch in
Algorithm~\ref{summax:algo} produces a PPSWOR sample according to the weights $\SUMMAX_{D}(x)$.
\end{corollary}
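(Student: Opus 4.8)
The plan is to unpack the definition of $\seed_D(x)$ and reduce it to a minimum of independent scaled exponentials. First I would fix a primary key $x$ and partition the elements of $D$ with primary key $x$ according to their \emph{full} key: for each full key $z$ with $z.p = x$ that is active in $D$, every element $e$ with $e.\mkey = z$ contributes $\eScore(e) = h(z)/e.\mval$, and since all of these elements share the single hash value $h(z)$, the minimum of $\eScore(e)$ over elements with key $z$ is exactly $h(z)/\MAX_D(z)$. Taking the further minimum over all active $z$ with $z.p = x$ gives
\[
\seed_D(x) = \min_{z \mid z.p = x,\ \MAX_D(z) > 0} \frac{h(z)}{\MAX_D(z)}\ .
\]

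Next I would invoke two standard properties of the exponential distribution. Since $h(z) \sim \Exp[1]$, scaling yields $h(z)/\MAX_D(z) \sim \Exp[\MAX_D(z)]$; and because $h$ is fully independent, these variables are mutually independent over distinct $z$. The minimum of independent exponentials with rates $\lambda_z$ is exponential with rate $\sum_z \lambda_z$, so
\[
\seed_D(x) \sim \Exp\!\Big[\sum_{z \mid z.p = x} \MAX_D(z)\Big] = \Exp[\SUMMAX_D(x)]\ ,
\]
where I used the definition of $\SUMMAX_D(x)$ together with the convention that inactive keys $z$ satisfy $\MAX_D(z) = 0$ and contribute no element (hence no hash value) to the computation. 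For the independence claim I would note that the sets of full keys $\{z \mid z.p = x\}$ are pairwise disjoint across distinct primary keys, so $\seed_D(x)$ and $\seed_D(x')$ with $x \neq x'$ are determined by disjoint collections of hash values; full independence of $h$ then makes the whole family $\{\seed_D(x)\}_x$ mutually independent.

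I do not expect a real obstacle here. The only step that needs care is the collapse from a minimum over elements to a minimum over full keys, which must correctly handle the fact that several elements can share one hash value — this is resolved precisely by the $\MAX_D(z)$ aggregation — and the bookkeeping that inactive keys drop out. Everything else is the scaling and minimum properties of exponentials, plus disjointness of the key sets.
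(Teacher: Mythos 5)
Your argument is correct and is essentially identical to the paper's own proof (which establishes exactly this seed distribution and independence as Lemma~\ref{summax:lemma} before stating the corollary): collapse element scores to $h(z)/\MAX_D(z)$ per full key, apply the scaling and minimum properties of exponentials, and use disjointness of the full-key sets for independence across primary keys. The only piece left implicit is the final (trivial) observation that the bottom-$k$ structure retains the $k$ primary keys with smallest $\seed_D(x)$, which together with $\seed_D(x)\sim\Exp[\SUMMAX_D(x)]$ independent is precisely the characterization of a PPSWOR sample from Section~\ref{ppswor:subsec}.
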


\section{Sampling Sketch for Functions of Frequencies}\label{funcoffreq:sec}
In this section, we are given a set $D$ of elements $e = (e.\mkey,
e.\mval)$ and we wish to maintain a sample of $k$ keys, that will be close to PPSWOR according to a
soft concave sublinear function of their frequencies
$f(\freq_x)$. At a high level, our sampling sketch is guided by the sketch for
estimating the statistics $f_D$ due to Cohen~\cite{Concavesub:KDD2017}. Our sketch uses a parameter $\varepsilon$ that will tradeoff the running time of processing an element with the bound on the variance of the inverse-probability estimator.

Recall that a soft concave sublinear function $f$ can be represented
as $f(w) = \LapM[a](w)_0^\infty = \int_0^\infty  a(t) (1-e^{-wt})dt$
for $a(t) \geq 0$. Using this representation, we express $f(\freq_x)$ as a sum of two contributions for each key $x$:
\[
f(\freq_x) =\LapM[a](\freq_x)_0^\gamma+\LapM[a](\freq_x)_\gamma^\infty ,
\]
where $\gamma$ is a value we will set adaptively while processing the elements. Our sampling sketch is described in Algorithm~\ref{samplesketch:algo}. It maintains a separate sampling sketch for each set of contributions. The sketch for $\LapM[a](\freq_x)_0^\gamma$ is discussed in Section~\ref{sec:lower-range-sketch}, and the sketch for $\LapM[a](\freq_x)_\gamma^\infty$ is discussed in Section~\ref{sec:upper-range-sketch}. In order to produce a sample from the sketch, these separate sketches need to be combined. Algorithm~\ref{finalsample:algo} describes how to produce a final sample from the sketch. This is discussed further in Section~\ref{sec:merge-sketches}.
Finally, we discuss the computation of the inverse-probability estimators $\widehat{f(\freq_x)}$ for the sampled keys in Section~\ref{estimator:sec}. In particular, in order to compute the estimator, we need to know the values $f(\freq_x)$ for the keys in the sample, which will require a second pass over the data. The analysis will result in the following main theorem.

\begin{algorithm2e}\caption{Sampling Sketch Structure for $f$ \label{samplesketch:algo}}
\DontPrintSemicolon
\tcp{{\bf Initialize empty structure} $s$}
\KwIn{$k$: Sample size, $\varepsilon$, $a(t)\geq 0$}
Initialize $s.\SUMMAX$ \tcp*[h]{$\SUMMAX$ sketch of size $k$ (Algorithm~\ref{summax:algo})} \;
Initialize $s.ppswor$ \tcp*[h]{PPSWOR sketch of size $k$ (Algorithm~\ref{ppswor:algo})} \;
Initialize $s.sum \leftarrow 0$ \tcp*[h]{A sum of all the elements seen so far} \;
Initialize $s.\gamma \leftarrow \infty$ \tcp*[h]{Threshold} \;
Initialize $s.\SideLine$ \tcp*[h]{A composable max-heap/priority queue}\;
\tcp{{\bf Process element}}
\KwIn{Element $e=(e.\mkey,e.\mval)$, structure $s$}
Process $e$ by $s.ppswor$\;
$s.sum \leftarrow s.sum + e.\mval$\;
$s.\gamma \leftarrow \frac{2\varepsilon}{s.sum}$\;
\tcp{$r=k/\varepsilon$}
\ForEach{$i\in [r]$}{$y \sim
\Exp[e.\mval]$ \tcp*[h]{Exponentially distributed with parameter $e.\mval$}\;
\tcp{Process in $\SideLine$}
\eIf{The key $(e.\mkey,i)$ appears in $s.\SideLine$}{Update the value
  of $(e.\mkey,i)$ to be the minimum of $y$ and the current value}
{Add the element $((e.\mkey,i),y)$ to $s.\SideLine$}
}
\While{$s.\SideLine$ contains an element $g=(g.\mkey,g.\mval)$ with $g.\mval \geq s.\gamma$}{
Remove $g$ from $s.\SideLine$\;
\If{$\int_{g.\mval}^\infty a(t) dt > 0$}{
Process element $(g.\mkey, \int_{g.\mval}^\infty a(t) dt)$ by $s.\SUMMAX$}
}
\tcp{{\bf Merge two structures} $s_1$ and $s_2$ to $s$ (with same
  $k,\varepsilon,a$ and same $h$
  in $\SUMMAX$ sub-structures)}
$s.sum \gets s_1.sum + s_2.sum$\;
$s.\gamma \gets \frac{2\varepsilon}{s.sum}$\;
$s.\SideLine \gets $ merge  $s1.\SideLine$ and
$s2.\SideLine$ \tcp*[h]{Merge priority queues}.\;
$s.ppswor \gets$ merge $s_1.ppswor$ and $s_2.ppswor$\tcp*[h]{Merge PPSWOR structures}\;
$s.\SUMMAX \gets$ merge $s_1.\SUMMAX$ and $s_2.\SUMMAX$
\tcp*[h]{Merge $\SUMMAX$ structures}\;
\While{$s.\SideLine$ contains an element $g=(g.\mkey,g.\mval)$ with $g.\mval \geq s.\gamma$}{
Remove $g$ from $s.\SideLine$\;
\If{$\int_{g.\mval}^\infty a(t) dt > 0$}{
Process element $(g.\mkey, \int_{g.\mval}^\infty a(t) dt)$ by $s.\SUMMAX$\
}
}
\end{algorithm2e}

\begin{algorithm2e}\caption{Produce a Final Sample from a Sampling
    Sketch Structure (Algorithm~\ref{samplesketch:algo})\label{finalsample:algo}}
\DontPrintSemicolon
\KwIn{Sampling sketch structure $s$ for $f$}
\KwOut{Sample of size $k$ of key and seed pairs}
\If{$\int_\gamma^\infty a(t) dt > 0$}{
\ForEach{$e\in s.\SideLine$}{Process element
  $(e.\mkey,\int_\gamma^\infty a(t) dt)$ by sketch $s.\SUMMAX$}}
\ForEach{$e\in s.\SUMMAX.\Sample$}{$e.\mval\gets r* e.\mval$ \tcp*[h]{Multiply value by $r$}}
\eIf{$\int_0^{\gamma} t a(t) dt > 0$}{
\ForEach{$e\in s.\textit{ppswor}.\Sample$}{$e.\mval \gets \frac{e.\mval}{\int_0^{\gamma} t a(t) dt}$ \tcp*[h]{Divide value by $B(\gamma)$}}
$\Sample \gets $ merge $s.\SUMMAX.\Sample$ and
$s.\textit{ppswor}.\Sample$ \tcp*[h]{Bottom-$k$ merge (Algorithm~\ref{bottomk:algo})}}
{$\Sample \gets s.\SUMMAX.\Sample$}
\Return{$\Sample$}
\end{algorithm2e}

\begin{theorem}
Let $k \geq 3$, $0 < \varepsilon \leq \frac{1}{2}$, and $f$ be a soft concave sublinear function. Algorithms~\ref{samplesketch:algo} and \ref{finalsample:algo} produce a stochastic PPSWOR sample of size $k-1$, where each key $x$ has weight $V_x$ that satisfies $f(\freq_x) \leq \E[V_x] \leq \frac{1}{(1-\varepsilon)}f(\freq_x)$. The per-key inverse-probability estimator of $f(\freq_x)$ is unbiased and has variance
\[
\var\left[\widehat{f(\freq_x)}\right] \leq \frac{4f(\freq_x)\sum_{z \in \mathcal{X}}{f(\freq_z)}}{(1-\varepsilon)^2 (k-2)}.
\]
The space required by the sketch at any given time is $O(k)$ in expectation. Additionally, with probability at least $1-\delta$, the space will not exceed $O\left(k + \min\{\log{m},\log{\log{\left(\frac{\SUM_D}{\MIN(D)}\right)}}\} + \log{\left(\frac{1}{\delta}\right)}\right)$ at any time while processing $D$, where $m$ is the number of elements in $D$, $\MIN(D)$ is the minimum value of an element in $D$, and $\SUM_D$ is the sum of frequencies of all keys.
\end{theorem}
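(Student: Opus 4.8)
The plan is to follow each key $x$ through Algorithms~\ref{samplesketch:algo}--\ref{finalsample:algo}, identify the law of its final merged seed, read off $\E[V_x]$, reduce the variance bound to Theorem~\ref{thm:stochasticvariance}, and reduce the space bound to a concentration argument for the sideline priority queue.

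\textbf{The two effective weights, the mean, and unbiasedness.} Write $\gamma=2\varepsilon/\SUM_D$ for the final threshold, $B(\gamma):=\int_0^\gamma t\,a(t)\,dt$, and $A(\gamma):=\int_\gamma^\infty a(t)\,dt$ (finite, since $f(1/\gamma)\ge(1-1/e)A(\gamma)$). First I would note that in the PPSWOR sub-sketch the seed of $x$ is $\Exp[\freq_x]$ by Proposition~\ref{ppswor:prop}, so dividing it by $B(\gamma)$ in Algorithm~\ref{finalsample:algo} makes it $\Exp[w^L_x]$ with the deterministic weight $w^L_x:=\freq_x B(\gamma)$. For the $\SUMMAX$ sub-sketch I would check that, across updates and merges, the element processed into the $\SUMMAX$ sketch under primary key $x$ and secondary key $i$ has $\MAX$-value exactly $Z_{x,i}:=\int_{\max(Y_{x,i},\gamma)}^\infty a(t)\,dt$, where $Y_{x,i}$ is the minimum over elements with key $x$ of their $i$-th $\Exp[e.\mval]$ draw in the sideline loop, so that $Y_{x,i}\sim\Exp[\freq_x]$, independently over $i$ and over keys; the point is that an item is fed to $\SUMMAX$ with the integral from its own value either when it leaves the queue (value $\ge$ the then-current $\gamma$) or, with value $<\gamma$, in Algorithm~\ref{finalsample:algo}, and that $\SUMMAX$'s $\MAX$ retains the largest such value, namely the one with smallest lower limit $\max(Y_{x,i},\gamma)$. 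By Corollary~\ref{cor:summax} the $\SUMMAX$ seed of $x$ is $\Exp[\sum_{i\in[r]}Z_{x,i}]$, which after multiplication by $r$ becomes $\Exp[w^U_x]$ with $w^U_x:=\tfrac1r\sum_{i\in[r]}Z_{x,i}$. Since the PPSWOR exponentials, the sideline exponentials, and the $\SUMMAX$ hash are mutually independent and indexed so that key $x$'s quantities are independent of those of any other key, the two seeds of $x$ are independent given $(Y_{x,i})_i$, and the bottom-$k$ merge in Algorithm~\ref{finalsample:algo} keeps their minimum per key, so $\seed(x)\mid(Y_{x,i})_i\sim\Exp[V_x]$ with $V_x:=w^L_x+w^U_x$, with the seeds independent across keys. (For this to be a genuine bottom-$k$ sample I would invoke the standard mergeability of bottom-$k$ sketches: the two sketches use different seed scalings, but over-reporting the value of a key present in only one of them is harmless, since such a key cannot be among the true $k$ smallest minima.) Finally $\E[Z_{x,i}]=\int_\gamma^\infty a(t)\Pr[Y_{x,i}\le t]\,dt=\LapM[a](\freq_x)_\gamma^\infty$, while $\freq_x\gamma\le2\varepsilon$ together with $u(1-u/2)\le1-e^{-u}\le u$ gives $\LapM[a](\freq_x)_0^\gamma\le w^L_x\le\tfrac1{1-\varepsilon}\LapM[a](\freq_x)_0^\gamma$; summing the two ranges yields $f(\freq_x)\le\E[V_x]\le\tfrac1{1-\varepsilon}f(\freq_x)$. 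Unbiasedness of $\widehat{f(\freq_x)}$ is then Claim~\ref{claim:unbiased}, as it is exactly the inverse-probability estimator of Definition~\ref{def:inverseprob} for the weight $f(\freq_x)$ and the marginal seed law of $\Exp[V_x]$ (determined by $\freq_x$, $\gamma$, $a$, using the second pass).

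\textbf{Variance.} I would apply Theorem~\ref{thm:stochasticvariance} to the stochastic PPSWOR sample with key weights $V_x$, means $v_x=\E[V_x]$, and $V=\sum_x v_x$. To match its hypotheses, write $V_x$ as a sum of independent variables in $[0,T]$ with $T:=A(\gamma)/r$: the $r$ summands $Z_{x,i}/r\in[0,T]$, plus $w^L_x$ broken into chunks of size $\le T$ (constants are trivially independent of everything). The condition $V\ge Tk$ holds because $Tk=\varepsilon A(\gamma)$ while $V\ge\sum_x\LapM[a](\freq_x)_\gamma^\infty=\int_\gamma^\infty a(t)\sum_x(1-e^{-\freq_x t})\,dt$ and, for every $t\ge\gamma$, $\sum_x(1-e^{-\freq_x t})\ge1-e^{-\SUM_D t}\ge1-e^{-2\varepsilon}\ge\varepsilon$ (union bound, then $\varepsilon\le\tfrac12$), so $V\ge\varepsilon A(\gamma)$. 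Theorem~\ref{thm:stochasticvariance} then gives $\var[\widehat{v_x}]\le4v_xV/(k-2)$, and since $\widehat{f(\freq_x)}=\tfrac{f(\freq_x)}{v_x}\widehat{v_x}$ we obtain $\var[\widehat{f(\freq_x)}]=\tfrac{f(\freq_x)^2}{v_x^2}\var[\widehat{v_x}]\le\tfrac{4f(\freq_x)^2V}{v_x(k-2)}\le\tfrac{4f(\freq_x)\sum_z f(\freq_z)}{(1-\varepsilon)^2(k-2)}$, using $v_x\ge f(\freq_x)$ and $V\le\tfrac1{1-\varepsilon}\sum_z f(\freq_z)$.

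\textbf{Space, and the main difficulty.} The only component not of fixed size $O(k)$ is the sideline priority queue; performing evictions inline keeps any transient overhead to $O(1)$ per processed element (and to the sum of the two input-queue sizes per merge), so it suffices to bound its size between operations. After any prefix $D'$ with running sum $s':=\SUM_{D'}$ the queue holds exactly the pairs $(x,i)$ with $Y^{(D')}_{x,i}<2\varepsilon/s'$, a sum of independent indicators of mean at most $r\sum_x\SUM_{D'}(x)\cdot\tfrac{2\varepsilon}{s'}=2r\varepsilon=2k$, giving the $O(k)$ expectation. For the tail bound, split the run into $N$ phases on which $s'$ (hence $2\varepsilon/s'$) changes by a factor at most $2$; then $N=O(\min\{m,\log(\SUM_D/\MIN(D))\})$ (one phase per element, or geometric over $s'\in[\MIN(D),\SUM_D]$), and throughout phase $j$ the queue sits inside the fixed set $\{(x,i):Y^{(D'_j)}_{x,i}<2\varepsilon/s_j\}$ for $D'_j$ the prefix ending the phase, of expected size at most $r\SUM_{D'_j}\cdot\tfrac{2\varepsilon}{s_j}\le4k$ since $\SUM_{D'_j}\le2s_j$. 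A Chernoff bound per phase and a union bound over the $N$ phases (each allowed to fail with probability $\delta/N$) then give that the queue never exceeds $O(k+\log N+\log(1/\delta))=O(k+\min\{\log m,\log\log(\SUM_D/\MIN(D))\}+\log(1/\delta))$ with probability at least $1-\delta$. I expect the first part to be the real work: verifying that the sideline--$\SUMMAX$ interaction, over both streaming updates and merges, exactly realizes the seed law $\Exp[w^L_x+\tfrac1r\sum_i Z_{x,i}]$ with i.i.d.\ $Z_{x,i}$, and handling the bottom-$k$ merge of two sketches built with different seed scalings; the mean estimates, the chunking reduction to Theorem~\ref{thm:stochasticvariance}, and the phased concentration are routine once that is in place.
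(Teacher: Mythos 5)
Your proposal is correct and follows essentially the same route as the paper: decompose each key's final seed into the $B(\gamma)$-scaled PPSWOR part and the $\tfrac1r$-scaled $\SUMMAX$ part, identify $V_x=\freq_x B(\gamma)+\tfrac1r\sum_i A(\max\{Y_{x,i},\gamma\})$, chunk the deterministic summand to invoke Theorem~\ref{thm:stochasticvariance} after checking $\E[V]\geq \varepsilon A(\gamma)=Tk$, and bound the sideline via Chernoff plus a union bound over phases. Your minor deviations (the union-bound argument for $\E[V]\geq\varepsilon A(\gamma)$ in place of the paper's per-key $1-e^{-u}\geq u/2$ bound, the single phase argument covering both the $\log m$ and $\log\log$ space bounds, and a variance chain that actually yields the slightly tighter factor $(1-\varepsilon)^{-1}$) are all valid and do not change the structure of the argument.
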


\begin{remark}
The parameter $\varepsilon$ mainly affects the run time of processing an element. For each element processed by the stream, we generate $r = \frac{k}{\varepsilon}$ output elements that are then further processed by the sketch. Hence, the run time of processing an element grows with $\frac{1}{\varepsilon}$. The space is affected by $\varepsilon$ when considering worst case over the randomness. The total number of possible keys for output elements is $r$ times the number of active keys, and in the worst case (over the randomness), we may store all of them in $\SideLine$.
\end{remark}

The sketch and estimator specification use the following functions in
a black-box fashion
\begin{eqnarray*}
A(\gamma) &:=& \int_\gamma^\infty a(t) dt\\
B(\gamma) &:=& \int_0^\gamma t a(t) dt
\end{eqnarray*}
where $a(t)$ is the inverse complement Laplace transform of $f$, as specified
in Equation~\eqref{invtransform} (Section~\ref{concavesublinear:sec}).    Closed expressions for $A(t)$ and $B(t)$
for some common concave sublinear functions $f$
are provided in~\cite{Concavesub:KDD2017}.  These functions are well-defined for any soft concave sublinear $f$.  Also note that it suffices to
approximate the values of $A$ and $B$ within a small multiplicative error (which will carry over to the variance, see Remark~\ref{remark:constant-approx-estimator}), so one can also use a table of values to compute the function.

While processing the stream, we will keep track of the sum of values of all elements $\SUM_D = \sum_{x \in \mathcal{X}}{\freq_x}$. We will then use $\SUM_D$ to set $\gamma$ adaptively to be $\frac{2\varepsilon}{\SUM_D}$.  Thus, this is a running ``candidate'' value that can only decrease over time. The final value of $\gamma$ will be set when we produce a sample from the sketch in Algorithm~\ref{finalsample:algo}. In the discussion below, we will show that setting $\gamma = \frac{2\varepsilon}{\SUM_D}$ satisfies the conditions needed for each of the sketches for $\LapM[a](\freq_x)_0^\gamma$ and $\LapM[a](\freq_x)_\gamma^\infty$.

\subsection[The Sketch for the Lower Range]{The Sketch for $\LapM[a](\freq_x)_0^\gamma$}\label{sec:lower-range-sketch}
For the contributions  $\LapM[a](\freq_x)_0^\gamma = \int_0^\gamma  a(t) (1-e^{-\freq_xt})dt$, we will see that as long as we choose a small enough $\gamma$, $\int_0^\gamma  a(t) (1-e^{-\freq_xt})dt$ will be approximately $\left(\int_0^\gamma  a(t)tdt\right)\freq_x$, up to a multiplicative $1-\varepsilon$ factor. Note that
$\left(\int_0^\gamma  a(t)tdt\right)\freq_x$ is simply the frequency $\freq_x$ scaled by $B(\gamma) = \int_0^\gamma
  a(t)tdt$.   A PPSWOR sample is invariant to the
  scaling, so we can simply use a PPSWOR sampling sketch according to the frequencies
  $\freq_x$ (Algorithm~\ref{ppswor:algo}).   The scaling only needs to be considered in a final step
when the samples  of the two sets of contributions are combined to
produce a single sample.\footnote{In our implementation
  (Section~\ref{experiments:sec}) we incorporated an optimization where we only keep
  in the PPSWOR sample elements that may contribute to the final sample.}

\begin{lemma}\label{lem:low-range-gamma}
Let $\varepsilon > 0$ and $\gamma \leq \frac{2\varepsilon}{\max_{x}{\freq_x}}$. Then, for any key $x$,
\[
(1-\varepsilon)\left(\int_0^\gamma  a(t)tdt\right)\freq_x \leq \int_0^\gamma  a(t) (1-e^{-\freq_xt})dt \leq \left(\int_0^\gamma  a(t)tdt\right)\freq_x.
\]
\end{lemma}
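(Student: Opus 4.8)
The plan is to prove both inequalities pointwise in $t$ and then integrate against the nonnegative measure $a(t)\,dt$ on $[0,\gamma]$; the only property of $a$ that is used is $a(t)\geq 0$, so everything reduces to two elementary estimates on $1-e^{-u}$.

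For the upper bound I would use the standard inequality $1-e^{-u}\leq u$, valid for all $u\geq 0$. Substituting $u=\freq_x t$ gives $1-e^{-\freq_x t}\leq \freq_x t$ for every $t$, and integrating both sides against $a(t)\,dt$ over $[0,\gamma]$ yields $\int_0^\gamma a(t)(1-e^{-\freq_x t})\,dt\leq \left(\int_0^\gamma a(t)t\,dt\right)\freq_x$.

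For the lower bound the point is that on the range of integration the argument $\freq_x t$ is small: since $t\leq\gamma\leq \frac{2\varepsilon}{\max_{x}\freq_x}\leq \frac{2\varepsilon}{\freq_x}$, we have $\freq_x t\leq 2\varepsilon$ for all $t\in[0,\gamma]$. I would then invoke the second-order lower bound $1-e^{-u}\geq u-\tfrac{u^2}{2}=u\bigl(1-\tfrac{u}{2}\bigr)$, which holds for all $u\geq 0$; this follows from $e^{-u}\leq 1-u+\tfrac{u^2}{2}$, itself obtained by checking that $g(u):=1-u+\tfrac{u^2}{2}-e^{-u}$ satisfies $g(0)=g'(0)=0$ and $g''(u)=1-e^{-u}\geq 0$, so $g\geq 0$ on $[0,\infty)$. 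For $u=\freq_x t\leq 2\varepsilon$ we have $1-\tfrac{u}{2}\geq 1-\varepsilon$, hence $1-e^{-\freq_x t}\geq (1-\varepsilon)\freq_x t$ pointwise on $[0,\gamma]$. Integrating against $a(t)\,dt$ over $[0,\gamma]$ gives $(1-\varepsilon)\left(\int_0^\gamma a(t)t\,dt\right)\freq_x\leq \int_0^\gamma a(t)(1-e^{-\freq_x t})\,dt$, completing the proof.

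The lemma is elementary and there is no real obstacle; the only thing that requires attention is recognizing that the hypothesis $\gamma\leq \frac{2\varepsilon}{\max_{x}\freq_x}$ is exactly what forces $\freq_x t$ into the regime $[0,2\varepsilon]$ on which the linear lower bound with slope $1-\varepsilon$ is valid, together with the short calculus argument for $e^{-u}\leq 1-u+\tfrac{u^2}{2}$.
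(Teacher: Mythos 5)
Your proof is correct and follows essentially the same route as the paper: the upper bound via $1-e^{-u}\leq u$, and the lower bound via $1-e^{-u}\geq u-\tfrac{u^2}{2}$ combined with the observation that the hypothesis on $\gamma$ forces $\freq_x t\leq 2\varepsilon$ on the integration range, so the quadratic correction costs only a factor $1-\varepsilon$. The only difference is that you also supply the short calculus justification of $e^{-u}\leq 1-u+\tfrac{u^2}{2}$, which the paper takes as known.
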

\begin{proof}
Consider a key $x$ with frequency $\freq_x$. Using $1 - e^{-z} \leq z$, we get
\[
\int_0^\gamma  a(t) (1-e^{-\freq_xt})dt \leq \int_0^\gamma  a(t)\freq_xtdt = \left(\int_0^\gamma  a(t)tdt\right)\freq_x.
\]

Now, using $1 - e^{-z} \geq z - \frac{z^2}{2}$ for $z \geq 0$,
\[
\int_0^\gamma  a(t) (1-e^{-\freq_xt})dt \geq \int_0^\gamma  a(t) \left(\freq_x t-\frac{(\freq_x t)^2}{2}\right)dt.
\]
Note that $\gamma \leq \frac{2\varepsilon}{\max_{y}{\freq_y}} \leq \frac{2\varepsilon}{\freq_x}$. Hence, for every $0 \leq t \leq \gamma$, $\freq_x t \leq 2\varepsilon$, and $\freq_x t - \frac{(\freq_x t)^2}{2} \geq (1 - \varepsilon)\freq_x t$. As a result, we get that
\[
\int_0^\gamma  a(t) (1-e^{-\freq_xt})dt \geq (1-\varepsilon)\left(\int_0^\gamma  a(t)tdt\right)\freq_x.
\]
\end{proof}

Note that our choice of $\gamma = \frac{2\varepsilon}{\SUM_D}$ satisfies the condition of Lemma~\ref{lem:low-range-gamma}.

\subsection[The Sketch for the Upper Range]{The Sketch for $\LapM[a](\freq_x)_\gamma^\infty$}\label{sec:upper-range-sketch}
The sketch for $\LapM[a](\freq_x)_\gamma^\infty = \int_\gamma^\infty  a(t) (1-e^{-\freq_xt})dt$ processes elements in the following way.
We map each input element  $e = (e.\mkey, e.\mval)$
 into $r=\frac{k}{\varepsilon}$ output elements with keys $(e.\mkey,1)$ through $(e.\mkey,r)$
 and values $Y_i \sim \Exp[e.\mval]$ drawn independently.
 Each of these elements is then processed separately.

The main component of the sketch is a $\SUMMAX$ sampling sketch of
size $k$. Our goal is that for each generated output element
$((e.\mkey,i), Y_i)$, the $\SUMMAX$ sketch will process an element
$((e.\mkey,i), A(\max\{Y_i,\gamma\}))$. However, since $\gamma$
decreases over time and we do not know its final value, we only
process the elements with $Y_i \geq \gamma$ into the $\SUMMAX$
sketch. We keep the rest of the elements in an auxiliary structure
(implemented as a maximum priority queue) that we call
$\SideLine$. Every time we update the value of $\gamma$, we remove the
elements with $Y_i \geq \gamma$ and process them into the $\SUMMAX$
sketch. Thus, at any time the $\SideLine$ structure only contains
elements with value less than $\gamma$.\footnote{In our implementation
  (see Section~\ref{experiments:sec})
we only keep in $\SideLine$ elements that have the potential to modify the
SumMax sketch when inserted.}

For any active input key $x$ and $i\in [r]$, let $M_{x,i}$ denote  the
minimum value $Y_i$ that was generated with key $(x,i)$.
We have the following invariants that we will use in our
analysis:
\begin{enumerate}
\item
Either the element $((x,i),M_{x,i})$ is in $\SideLine$ or an element
$((x,i),A(M_{x,i}))$ was processed by the $\SUMMAX$ sketch.
\item
Since $\gamma$ is decreasing over time, all elements ejected from the $\SideLine$ have value that
is at least $\gamma$.
\end{enumerate}

We also will use the following property of the sketch.
\begin{lemma} \label{Mxi:lemma}
For any key $x$ that was active in the input and $i\in [r]$, $M_{x,i}\sim \Exp[\freq_x]$ and these
random variables are independent for different pairs $(x,i)$.
\end{lemma}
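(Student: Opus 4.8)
The plan is to reduce this to the same computation that underlies Proposition~\ref{ppswor:prop} (the PPSWOR sketch correctness). Fix a key $x$ that is active in the input and an index $i \in [r]$. By definition, $M_{x,i}$ is the minimum, over all input elements $e$ with $e.\mkey = x$, of the value $Y_i^{(e)} \sim \Exp[e.\mval]$ that was drawn independently when processing $e$ in the loop over $[r]$. So $M_{x,i} = \min_{e \mid e.\mkey = x} Y_i^{(e)}$, a minimum of independent exponentials. First I would invoke the standard fact that the minimum of independent $\Exp[\lambda_j]$ random variables is $\Exp[\sum_j \lambda_j]$; here $\sum_{e \mid e.\mkey = x} e.\mval = \SUM_D(x) = \freq_x$, so $M_{x,i} \sim \Exp[\freq_x]$. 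This is exactly the argument used for the bottom-$k$/PPSWOR sketch, applied to the $i$-th copy of the data stream.

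For independence across pairs $(x,i)$: the draws $Y_i^{(e)}$ are, by construction of Algorithm~\ref{samplesketch:algo}, mutually independent over all elements $e$ and all indices $i \in [r]$ (each is a fresh independent sample). For a fixed $i$, two distinct keys $x \ne x'$ involve disjoint sets of elements (the elements with primary key $x$ versus those with primary key $x'$), hence $M_{x,i}$ and $M_{x',i}$ are functions of disjoint collections of independent variables and are therefore independent. For distinct indices $i \ne i'$, even for the same key $x$, the variables $\{Y_i^{(e)}\}_e$ and $\{Y_{i'}^{(e)}\}_e$ are disjoint sets of independent draws, so $M_{x,i}$ and $M_{x,i'}$ are independent. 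Combining, all $M_{x,i}$ over active $x$ and $i \in [r]$ are mutually independent.

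The only subtlety worth a sentence is well-definedness under merges: the value $M_{x,i}$ that the algorithm actually retains is the running minimum maintained in $\SideLine$ (and subsequently pushed into $\SUMMAX$), and the merge step takes the minimum of the two partial values for each key $(x,i)$. Since $\min$ is associative and commutative, the final $M_{x,i}$ equals $\min_{e \mid e.\mkey = x} Y_i^{(e)}$ regardless of how the stream is partitioned and merged, so the distributional claim is unaffected by the order of processing. I do not expect any real obstacle here — the lemma is essentially a restatement of the PPSWOR seed-distribution fact (Proposition~\ref{ppswor:prop}) applied in parallel to $r$ independent copies, and the proof is a short invocation of min-of-exponentials plus a disjointness-of-inputs observation for independence.
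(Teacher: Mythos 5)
Your proof is correct and takes essentially the same route as the paper, whose entire argument is that $M_{x,i}$ is by definition the minimum of independent exponentials whose rate parameters sum to $\freq_x$; your additional remarks on independence (disjoint sets of fresh draws) and invariance under merges are sound elaborations of the same one-line idea.
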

\begin{proof}
$M_{x,i}$ by definition is the
minimum of independent exponential random variables with sum of
parameters $\freq_x$.
\end{proof}

In the following lemma, we bound the size of $\SideLine$ (and as a result, the entire sketch) with probability $1-\delta$ for $0<\delta<1$ of our choice.

\begin{lemma}\label{lem:sideline-mem}
For a set of elements $D$, denote by $m$ the number of elements in $D$, and let $\MIN(D)$ denote the minimum value of any element in $D$. The expected number of elements in $\SideLine$ at any given time is $O(k)$, and with probability at least $1-\delta$, the number of elements in $\SideLine$ will not exceed $O\left(k + \min\{\log{m},\log{\log{\left(\frac{\SUM_D}{\MIN(D)}\right)}}\} + \log{\left(\frac{1}{\delta}\right)}\right)$ at any time while processing $D$.
\end{lemma}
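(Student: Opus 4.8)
The plan is to understand exactly which $(x,i)$-pairs can have their minimal value $M_{x,i}$ sitting in $\SideLine$ at a given moment. By Invariant~1 and the update rule, an element $((x,i),M_{x,i})$ is in $\SideLine$ at the time of processing a prefix $D'$ of the stream precisely when $M_{x,i} < \gamma' := 2\varepsilon/\SUM_{D'}$ (using the value of $\gamma$ at that time; if the element has not yet appeared at all it is simply absent). Since $\gamma$ only decreases over time, the worst case is the \emph{first} time a given $\gamma'$ value is reached, so it suffices to bound, for every prefix $D'$, the number of pairs $(x,i)$ with $M_{x,i} < 2\varepsilon/\SUM_{D'}$, and then take a union bound over the relevant thresholds. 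By Lemma~\ref{Mxi:lemma}, $M_{x,i}\sim\Exp[\freq_x]$ independently over the $r$ copies of each active key, so for a fixed threshold $\gamma'$ the expected count is $\sum_x r\,(1-e^{-\freq_x \gamma'}) \le \sum_x r \freq_x \gamma' = r\,\SUM_{D}(\text{so far})\cdot 2\varepsilon/\SUM_{D'} \le 2\varepsilon r = 2k$ (using $r = k/\varepsilon$ and that the running sum at the time $\gamma'$ is set equals $\SUM_{D'}$). This gives the $O(k)$ expectation bound immediately, and also shows each fixed-threshold count is a sum of independent Bernoullis with mean $\le 2k$, so a Chernoff bound gives that it exceeds $O(k + \log(1/\delta'))$ with probability at most $\delta'$.

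For the high-probability bound over the whole run, I would discretize the set of thresholds $\gamma'$ that actually occur. The subtlety is that $\SUM_{D'}$ can take many values, so a naive union bound over all prefixes is too lossy. Instead, note that the $\SideLine$ contents only change when $\gamma$ crosses below some $M_{x,i}$, and that for bounding the \emph{size} it is enough to consider a geometric grid of thresholds: the sum $\SUM_{D'}$ ranges over $[\MIN(D), \SUM_D]$, so there are at most $O(\log(\SUM_D/\MIN(D)))$ powers of two in this range; rounding each actual $\gamma'$ down to the nearest such grid point at most doubles the count, so the maximum over the run is within a factor $2$ of the maximum over $O(\log(\SUM_D/\MIN(D)))$ fixed thresholds. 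Hence a union bound with $\delta' = \delta / O(\log(\SUM_D/\MIN(D)))$ over these grid points, using the per-threshold Chernoff bound, yields that the size never exceeds $O(k + \log\log(\SUM_D/\MIN(D)) + \log(1/\delta))$ with probability $\ge 1-\delta$. The alternative $\log m$ term comes from the cruder observation that $\SideLine$ never holds more than $rm$ elements total and that at most $m$ distinct prefixes exist, so unioning over all $m$ prefixes (with $\delta' = \delta/m$) already gives $O(k + \log m + \log(1/\delta))$; taking the minimum of the two bounds gives the stated $\min\{\log m, \log\log(\SUM_D/\MIN(D))\}$ term.

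The main obstacle I expect is getting the threshold-discretization argument clean: one must be careful that the bound holds \emph{at every time simultaneously}, not just at a fixed time, and that the ``round $\gamma'$ down to a grid point'' step is valid for upper-bounding $|\SideLine|$ (it is, because decreasing the threshold can only shrink the set of $M_{x,i}$ below it, wait --- actually one rounds so that the true $\gamma'$ lies between two grid points and the count at the true $\gamma'$ is at most the count at the larger grid point, which is the one we control). I would also double-check the independence structure invoked in the Chernoff bound: the $M_{x,i}$ across distinct $(x,i)$ are independent by Lemma~\ref{Mxi:lemma}, and for a fixed threshold the indicator events $\{M_{x,i} < \gamma'\}$ are therefore independent, which is exactly what a multiplicative Chernoff bound needs. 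Finally, I would note that once $|\SideLine|$ is bounded, the total sketch size is $|\SideLine|$ plus the $O(k)$ from the $\SUMMAX$ and PPSWOR bottom-$k$ structures plus the $O(1)$ scalars, giving the claimed overall space bound in the theorem.
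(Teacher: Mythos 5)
Your proposal is correct and follows essentially the same route as the paper's proof: bound the expected number of pairs $(x,i)$ with $M_{x,i}$ below a fixed threshold by $O(k)$ using the exponential distribution of the minima, apply a Chernoff bound per threshold, and union-bound either over the $m$ prefixes (giving the $\log m$ term) or over a dyadic grid of $\gamma$-values spanning the range determined by $\MIN(D)$ and $\SUM_D$ (giving the $\log\log$ term). The only point to tighten is the discretization step, where---exactly as the paper does---one should evaluate the grid-threshold count using the frequencies accumulated by the time $\gamma$ reaches the lower endpoint of the dyadic interval, which costs only a constant factor in the expectation (it becomes $4k$ instead of $2k$).
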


The proof is deferred to Appendix~\ref{sec:deferred-full-sketch}.

\subsection{Generating a Sample from the Sketch}\label{sec:merge-sketches}
The final sample returned by Algorithm~\ref{finalsample:algo} is the merge of two samples:
\begin{enumerate}
\item The PPSWOR sample for $\LapM[a](\freq_x)_0^\gamma$ with frequencies scaled by $B(\gamma) = \int_0^\gamma t a(t) dt$.
\item The $\SUMMAX$ sample for $\LapM[a](\freq_x)_\gamma^\infty$ with weights scaled by $\frac{1}{r}$. Before the scaling, the $\SUMMAX$ sample processes an element $(e.\mkey,A(\gamma))$ for each remaining $e\in\SideLine$.
\end{enumerate}
The scaling is performed using a property of exponential random variables and is formalized in the following lemma.

\begin{lemma}
Given a PPSWOR sample where each key $x$ has frequency $\freq_x$, we can obtain a PPSWOR
sample for the weights $c \cdot \freq_x$ by returning the original sample of keys but dividing the seed
value of each key by $c$.
\end{lemma}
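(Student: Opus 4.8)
The statement to be proved is the scaling lemma for PPSWOR samples: if each key $x$ has frequency $\freq_x$ and we want a PPSWOR sample for the rescaled weights $c\cdot\freq_x$, we may take the original sample of keys and divide each seed by $c$. The plan is to use the characterization of PPSWOR via exponential seeds (Proposition~\ref{ppswor:prop} and the scheme of \cite{Rosen1972:successive,Rosen1997a}): a PPSWOR sample according to weights $w_x$ is exactly the sample of the $k$ keys with smallest seeds when $\seed(x)\sim\Exp(w_x)$ independently. So it suffices to show that dividing each seed by $c$ transforms an independent family $\{\seed(x)\sim\Exp(\freq_x)\}_x$ into an independent family $\{\seed(x)/c\sim\Exp(c\,\freq_x)\}_x$.

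First I would invoke the elementary scaling property of the exponential distribution: if $Y\sim\Exp(\lambda)$ and $c>0$, then $Y/c\sim\Exp(c\lambda)$, which follows immediately from $\Pr[Y/c < t] = \Pr[Y < ct] = 1 - e^{-\lambda c t}$. Applying this per key with $\lambda=\freq_x$ gives that $\seed(x)/c\sim\Exp(c\,\freq_x)$. Independence across keys is preserved because dividing by the fixed constant $c$ is a deterministic coordinatewise transformation of the independent family $\{\seed(x)\}_x$. Hence $\{\seed(x)/c\}_x$ is precisely an independent family of $\Exp(c\,\freq_x)$ random variables, which by the Rosén characterization is a valid seed assignment for a PPSWOR sample according to the weights $c\,\freq_x$.

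Finally I would observe that dividing every seed by the same positive constant $c$ is a strictly increasing map on $[0,\infty)$, so it preserves the relative order of the seeds: the set of $k$ keys with smallest original seeds is identical to the set of $k$ keys with smallest rescaled seeds, and the $k$-th smallest rescaled seed (the inclusion threshold $\tau$) is just $1/c$ times the original threshold. Therefore returning the original sampled key set together with the rescaled seed values yields exactly the bottom-$k$ sample one would have obtained by drawing $\Exp(c\,\freq_x)$ seeds from scratch, establishing the claim.

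There is essentially no obstacle here — the lemma is a one-line consequence of the exponential scaling identity. The only thing worth stating carefully is why keeping the \emph{same} key set (rather than re-drawing) is legitimate: this is because the transformation $y\mapsto y/c$ is a common monotone reparametrization applied to all keys, so it acts as a coupling between the two sampling processes under which the outputs coincide deterministically. This is also exactly what makes the operation usable inside the composable sketch, where we want to adjust weights after the fact without re-processing the data.
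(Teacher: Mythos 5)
Your proof is correct and follows essentially the same route as the paper's: both rest on the exponential scaling identity $Y\sim\Exp(\lambda)\Rightarrow Y/c\sim\Exp(c\lambda)$ together with the observation that dividing all seeds by a common positive constant preserves their order, hence the sampled key set. Your added remarks on independence preservation and the monotone coupling are fine elaborations of the same argument.
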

\begin{proof}
A property of exponential random variables is that if $Y \sim
\Exp[w]$, then for any constant $c>0$, $y/c \sim \Exp[cw]$. Consider
the set of seed values $\{\seed(x) \mid x \in \mathcal{X}\}$ computed
for the original PPSWOR sample according to the frequencies $\freq_x$. If we divided each seed value by $c$, the seed of key $x$ would come from the distribution $\Exp(c\freq_x)$. Hence, a PPSWOR sample according to the weights $c\freq_x$ would contain the $k$ keys with lowest seed values after dividing by $c$, and these $k$ keys are the same keys that have lowest seed values before dividing by $c$.
\end{proof}

Denote by $E$ the set of all elements that are passed on to the
$\SUMMAX$ sketch, either during the processing of the set of elements $D$ or in the final phase.

\begin{lemma}
The final sample computed by Algorithm~\ref{finalsample:algo} is a PPSWOR
sample with respect to weights
$$V_x  = \frac{1}{r}\SUMMAX_E(x) +  \freq_x \int_0^\gamma t a(t) dt\ .$$
\end{lemma}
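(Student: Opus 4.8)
I would first condition on the randomness that drives the $\SideLine$ sub‑process, namely the exponential draws $y$ generated for each input element and each index $i\in[r]$. Observe that the running value of $\gamma$ is always $\frac{2\varepsilon}{s.sum}$ and $s.sum$ evolves deterministically given $D$ and the processing/merge schedule, so the trajectory of $\gamma$ and its final value are deterministic; the $y$‑draws only control which evicted elements reach the $\SUMMAX$ sketch and with which values. Hence, conditioned on the $y$‑draws, the set $E$ of elements ever passed to $s.\SUMMAX$ is fixed, and so are the weights $\SUMMAX_E(x)$, the final threshold $\gamma$, and the scalars $A(\gamma)=\int_\gamma^\infty a(t)\,dt$ and $B(\gamma)=\int_0^\gamma t\,a(t)\,dt$. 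Under this conditioning I would identify the conditional seed distributions of the two component samples merged in Algorithm~\ref{finalsample:algo}, show the two samples are independent, and then argue that their bottom‑$k$ merge keeps, for each key, the smaller of the two rescaled seeds, whose law is exponential with the summed rate.

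\textbf{The two component samples.} For $s.ppswor$: by Proposition~\ref{ppswor:prop} its bottom‑$k$ structure is a PPSWOR sample with respect to the true frequencies $\freq_x$, so the retained seed of key $x$ is $\seed_1(x)\sim\Exp[\freq_x]$, independent across keys; when $B(\gamma)>0$, Algorithm~\ref{finalsample:algo} divides each such seed by $B(\gamma)$, which by the preceding scaling lemma yields a PPSWOR sample with respect to the weights $\freq_x B(\gamma)$, with seeds $\seed_1(x)/B(\gamma)\sim\Exp[\freq_x B(\gamma)]$. For $s.\SUMMAX$: I would regard its input as the fixed set $E$ (elements processed during the stream together with the $\SideLine$ remainder processed in the final phase of Algorithm~\ref{finalsample:algo}, each with value $A(\gamma)$); since the $\SUMMAX$ sketch's output depends only on $\SUMMAX_E(x)$ and not on the order of processing or the merge pattern (Lemma~\ref{summax:lemma}), Corollary~\ref{cor:summax} applies and gives that $s.\SUMMAX.\Sample$ is a PPSWOR sample with respect to the weights $\SUMMAX_E(x)$, with independent seeds $\seed_E(x)\sim\Exp[\SUMMAX_E(x)]$; multiplying each seed by $r$, as the algorithm does, gives (again by the scaling lemma, with $c=\frac1r$) a PPSWOR sample with respect to $\frac1r\SUMMAX_E(x)$, with seeds $r\,\seed_E(x)\sim\Exp[\frac1r\SUMMAX_E(x)]$. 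When $B(\gamma)=0$ the algorithm simply returns $s.\SUMMAX.\Sample$, which is already PPSWOR with respect to $\frac1r\SUMMAX_E(x)=V_x$; so for the last step I assume $B(\gamma)>0$.

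\textbf{Independence and the merge.} The three sources of randomness are mutually independent: the exponential draws $v$ inside $s.ppswor$, the draws $y$ feeding $\SideLine$, and the hash $h$ inside $s.\SUMMAX$. Conditioning on the $y$‑draws fixes $E,\gamma,B(\gamma)$ and the weights $\SUMMAX_E(x)$ while leaving $\{\seed_1(x)\}_x$ (a function of the $v$‑draws) and $\{\seed_E(x)\}_x$ (a function of $h$ and the now‑fixed $E$) independent of one another and with the distributions above; so for each key $x$ the rescaled seeds $\seed_1(x)/B(\gamma)$ and $r\,\seed_E(x)$ are independent, and the pairs are independent across keys. The rescaled $s.ppswor.\Sample$ is exactly the bottom‑$k$ structure of the dataset with one element $(x,\seed_1(x)/B(\gamma))$ per key $x$, and the rescaled $s.\SUMMAX.\Sample$ is the bottom‑$k$ structure of the dataset with one element $(x,r\,\seed_E(x))$ per key $x$; by composability of the bottom‑$k$ structure (Section~\ref{bottomk:subsec}), merging them yields the bottom‑$k$ structure of the union, which for each key retains the value $\min\{\seed_1(x)/B(\gamma),\,r\,\seed_E(x)\}$ and keeps the $k$ keys of smallest such value. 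Since the minimum of independent $\Exp[a]$ and $\Exp[b]$ variables is $\Exp[a+b]$, the retained seed of key $x$ is distributed as $\Exp[\freq_x B(\gamma)+\frac1r\SUMMAX_E(x)]=\Exp[V_x]$, independently across keys, so the merged structure is a bottom‑$k$ (equivalently PPSWOR) sample with respect to the weights $V_x=\frac1r\SUMMAX_E(x)+\freq_x\int_0^\gamma t\,a(t)\,dt$. As this holds for every realization of the $y$‑draws, unconditionally the output is a stochastic PPSWOR sample with respect to the random weights $V_x$, in the sense of Section~\ref{stocsample:sec}.

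\textbf{Main obstacle.} The delicate part is not the exponential algebra but the bookkeeping around the adaptively decreasing $\gamma$ and the $\SideLine$ evictions: I must verify that $E$, and hence $\SUMMAX_E(x)$, is a well‑defined function of the $y$‑draws alone, that every element destined for the $\SUMMAX$ sketch is in fact delivered to it (in particular the final‑phase loop handles exactly the remaining $\SideLine$ elements, with value $A(\gamma)$) while the $\SUMMAX$ output stays order‑ and merge‑invariant so that Corollary~\ref{cor:summax} can be invoked, and that conditioning on the $y$‑draws does not perturb the exponential structure of either the $v$‑seeds or the $h$‑seeds. Once this is in place, the two applications of the scaling lemma, the independence of the two sub‑sketches, and bottom‑$k$ composability (which produces the per‑key minimum) are routine.
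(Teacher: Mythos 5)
Your proposal is correct and follows essentially the same route as the paper's proof: identify the rescaled $s.ppswor$ sample as PPSWOR for $\freq_x B(\gamma)$ and the rescaled $s.\SUMMAX$ sample as PPSWOR for $\frac{1}{r}\SUMMAX_E(x)$ via Corollary~\ref{cor:summax} and the scaling lemma, note their independence, and use the minimum-of-independent-exponentials property under the bottom-$k$ merge. The only difference is that you make explicit the conditioning on the $\SideLine$ randomness that fixes $E$ and $\gamma$ (and the $B(\gamma)=0$ edge case), which the paper leaves implicit; this is a welcome clarification but not a different argument.
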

\begin{proof}
The sample $s.\textit{ppswor}$ in Algorithm~\ref{finalsample:algo} is a PPSWOR sample with respect to
frequencies $\freq_x$, which is then scaled by $\int_0^\gamma t a(t) dt$ to get a PPSWOR sample according to the weights $\freq_x \int_0^\gamma t a(t) dt$. The sample $s.\SUMMAX$ is a $\SUMMAX$ sample, which by Corollary~\ref{cor:summax} is a PPSWOR sample according to the weights $\SUMMAX_{E}(x)$. This sample is scaled to be a PPSWOR sample according to the weights $\frac{1}{r}\SUMMAX_E(x)$.

Note that these samples are independent. When we perform a bottom-$k$ merge of the two samples, the seed of key $x$ is then the minimum of two independent exponential random variables with parameters $\int_0^\gamma t a(t) dt$ and $\frac{1}{r}\SUMMAX_E(x)$. Therefore, the distribution of $\seed(x)$ in the merged sample is $\Exp(\int_0^\gamma t a(t) dt + \frac{1}{r}\SUMMAX_E(x))$, which means that the sample is a PPSWOR sample according to those weights, as desired.
\end{proof}

We next interpret $\frac{1}{r}\SUMMAX_E(x)$. From the invariants listed in Section~\ref{sec:upper-range-sketch} and the description of Algorithm~\ref{finalsample:algo}, we have that
for any active input key $x$ and $i\in [r]$,
the element
$((x,i),A(\max\{\gamma,M_{x,i}\}))$ was processed by the $\SUMMAX$
sketch.
Because $A$ is monotonically non-increasing,
\[ \MAX_{E}((x,i)) = \max_{e\in E \mid e.\mkey = (x,i)} e.\mval  = A(\max\{\gamma,M_{x,i}\}) .\]

Now, $\frac{1}{r}\SUMMAX_E(x) = \sum_{i=1}^r{\frac{1}{r}\MAX_E((x,i))}$. By Lemma~\ref{Mxi:lemma}, the summands $\frac{1}{r}\MAX_E((x,i))$ are independent random variables for every $x$ and $i$. By the monotonicity of the function $A$, each summand $\frac{1}{r}\MAX_E((x,i))$ is in the range $\left[0,\frac{A(\gamma)}{r}\right]$.

The final sample returned by Algorithm~\ref{finalsample:algo} is then a stochastic PPSWOR sample as defined in Section~\ref{stocsample:sec}. The weight of key $x$ also includes the deterministic component $\freq_x \int_0^\gamma t a(t) dt$, which can be larger than $\frac{A(\gamma)}{r}$. However, since this is summand is deterministic, we can break it into smaller deterministic parts, each of which will be at most $\frac{A(\gamma)}{r}$. This way, the sample still satisfies the condition that the weight of every key is the sum of independent random variables in $\left[0,\frac{A(\gamma)}{r}\right]$. The next step is to show that it satisfies the conditions of Theorem~\ref{thm:stochasticvariance}.

\begin{lemma}\label{lem:sum-expected-weights-high}
For a key $x$, define $V_x  = \frac{1}{r}\SUMMAX_E(x) +  \freq_x \int_0^\gamma t a(t) dt$. Let $V = \sum_{x \in \mathcal{X}}{V_x}$. Then, for any $0 < \varepsilon \leq \frac{1}{2}$ and $r \geq \frac{k}{\varepsilon}$,
\[
\E[V] \geq \frac{A(\gamma)}{r} \cdot k.
\]
\end{lemma}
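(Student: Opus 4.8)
The plan is to compute a lower bound on $\E[V] = \sum_x \E[V_x]$ by showing that the total expected weight $\E[V]$ is, up to the scaling factors, essentially the $f$-statistics $f_D = \sum_x f(\freq_x)$, and then to argue that $f_D$ is large enough relative to $\frac{A(\gamma)}{r} k$. The cleanest route is actually to bypass $f_D$ and directly lower bound $\E\!\left[\frac{1}{r}\SUMMAX_E(x)\right]$ together with the deterministic term by a quantity that telescopes nicely. First I would recall from Section~\ref{sec:merge-sketches} that $\MAX_E((x,i)) = A(\max\{\gamma, M_{x,i}\})$ where $M_{x,i}\sim\Exp[\freq_x]$ (Lemma~\ref{Mxi:lemma}), so
\[
\E\!\left[\tfrac{1}{r}\SUMMAX_E(x)\right] = \tfrac{1}{r}\sum_{i=1}^r \E\big[A(\max\{\gamma,M_{x,i}\})\big] = \E\big[A(\max\{\gamma,M\})\big],\quad M\sim\Exp[\freq_x].
\]
I would then expand $\E[A(\max\{\gamma,M\})] = \int_0^\infty a(t)\Pr[\max\{\gamma,M\} \le t]\,dt = \int_\gamma^\infty a(t)(1-e^{-\freq_x t})\,dt = \LapM[a](\freq_x)_\gamma^\infty$, using that $\Pr[M\le t] = 1-e^{-\freq_x t}$ and that $\max\{\gamma,M\}\le t$ forces $t\ge\gamma$. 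Combining with the deterministic term $\freq_x\int_0^\gamma t\,a(t)\,dt = \freq_x B(\gamma)$, which by Lemma~\ref{lem:low-range-gamma} is at least $\LapM[a](\freq_x)_0^\gamma$ (since $\gamma = \frac{2\varepsilon}{\SUM_D}\le\frac{2\varepsilon}{\max_x\freq_x}$), I get $\E[V_x] \ge \LapM[a](\freq_x)_0^\gamma + \LapM[a](\freq_x)_\gamma^\infty = f(\freq_x)$, hence $\E[V] \ge f_D = \sum_x f(\freq_x)$.

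The remaining step is to show $f_D \ge \frac{A(\gamma)}{r} k$. Here I would use two ingredients: that $r = \frac{k}{\varepsilon}$, so $\frac{A(\gamma)}{r}k = \varepsilon A(\gamma)$; and a lower bound on $f_D$ in terms of $A(\gamma)$ and $\SUM_D$. Observe that $f(\freq_x) = \LapM[a](\freq_x)_0^\infty \ge \LapM[a](\freq_x)_0^\gamma \ge (1-\varepsilon)B(\gamma)\freq_x$ by Lemma~\ref{lem:low-range-gamma}, so summing over $x$ gives $f_D \ge (1-\varepsilon) B(\gamma)\SUM_D$. Now I need $(1-\varepsilon)B(\gamma)\SUM_D \ge \varepsilon A(\gamma)$; substituting $\SUM_D = \frac{2\varepsilon}{\gamma}$ (from $\gamma = \frac{2\varepsilon}{\SUM_D}$), this reduces to $2(1-\varepsilon)\frac{B(\gamma)}{\gamma} \ge A(\gamma)$, i.e. $\frac{B(\gamma)}{\gamma}\ge \frac{A(\gamma)}{2(1-\varepsilon)}$, and since $\varepsilon\le\frac12$ it suffices to show $\frac{B(\gamma)}{\gamma}\ge A(\gamma)$. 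This last inequality follows from $B(\gamma) = \int_0^\gamma t\,a(t)\,dt$ and $A(\gamma) = \int_\gamma^\infty a(t)\,dt$ only if $a$ has enough mass below $\gamma$, which is \emph{not} true in general — so I expect this to be the main obstacle, and it signals that the deterministic-term estimate alone is too weak and one must also exploit the $\SUMMAX$ contribution quantitatively (e.g., $\E[A(\max\{\gamma,M\})] \ge A(\gamma)\Pr[M\ge\gamma]\cdot(\text{something})$, or more precisely bound $\LapM[a](\freq_x)_\gamma^\infty$ from below in terms of $A(\gamma)$ and $\freq_x\gamma$).

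The robust fix, which I would actually carry out, is: for each key $x$, bound $\E[V_x]\ge f(\freq_x)$ as above, and separately note $\E[V_x] \ge \freq_x B(\gamma) \ge (1-\varepsilon)$ times nothing useful — instead, bound $\E\big[A(\max\{\gamma,M\})\big]$ directly: for $z\ge\gamma$, $1-e^{-\freq_x z}\ge 1-e^{-\freq_x\gamma}$ is not what I want either; rather use that $\int_\gamma^\infty a(t)(1-e^{-\freq_x t})\,dt$, when summed over \emph{all} keys, together with the low-range contribution, equals $f_D$, and then invoke the defining property of soft concave sublinear functions together with $\gamma = \Theta(\varepsilon/\SUM_D)$ to conclude $f_D$ grows at least linearly in $\SUM_D\cdot B(\gamma)$, and crucially that $A(\gamma)$ itself is controlled because $\int_\gamma^\infty a(t)\,dt$ appears in the expansion of $f$ evaluated at large frequencies. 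Concretely: pick any single key $x_0$ with the maximal frequency $\freq_{x_0} = \max_x\freq_x \ge \SUM_D/n$; for it, $f(\freq_{x_0}) \ge \LapM[a](\freq_{x_0})_\gamma^\infty = \int_\gamma^\infty a(t)(1-e^{-\freq_{x_0}t})\,dt \ge (1-e^{-\freq_{x_0}\gamma})A(\gamma)$. If $\freq_{x_0}\gamma \ge 1$ this already gives $f_D\ge(1-1/e)A(\gamma)\ge\varepsilon A(\gamma)$ for $\varepsilon\le 1/2$; if $\freq_{x_0}\gamma < 1$ then $\max_x\freq_x < 1/\gamma = \SUM_D/(2\varepsilon)$, and I fall back on $f_D \ge (1-\varepsilon)B(\gamma)\SUM_D$ and a careful comparison of $B(\gamma)$ and $\gamma A(\gamma)$ valid in this regime. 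I would structure the final proof as this case split, with the first case being immediate and the second case being the delicate estimate; the arithmetic with $\varepsilon\le\frac12$ and $r\ge\frac{k}{\varepsilon}$ is routine once the right inequality between $B(\gamma)$, $\gamma$, and $A(\gamma)$ is isolated.
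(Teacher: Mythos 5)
Your first steps are sound and coincide with the paper's: $\E\bigl[\tfrac{1}{r}\SUMMAX_E(x)\bigr]=\LapM[a](\freq_x)_\gamma^\infty$, and hence $\E[V_x]\geq f(\freq_x)$. The gap is in the second half. Your main route requires (after substituting $\SUM_D=\tfrac{2\varepsilon}{\gamma}$) the inequality $2(1-\varepsilon)B(\gamma)/\gamma\geq A(\gamma)$, which you correctly flag as false in general: take $a$ to be a point mass at some $t_0>\gamma$, so $B(\gamma)=0$ while $A(\gamma)>0$. But the proposed fix does not close the gap either. In your case split, the first case $\freq_{x_0}\gamma\geq 1$ is essentially vacuous, because $\freq_{x_0}\leq\SUM_D$ and $\gamma=\tfrac{2\varepsilon}{\SUM_D}$ force $\freq_{x_0}\gamma\leq 2\varepsilon\leq 1$ (with equality only in the degenerate situation $\varepsilon=\tfrac12$ and a single key carrying all the weight); and in the second case you fall back on comparing $B(\gamma)$ with $\gamma A(\gamma)$, which fails for the same point-mass example. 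So the argument is incomplete exactly where the content of the lemma lies.

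The missing idea is to apply your single-key inequality $\LapM[a](\freq_x)_\gamma^\infty\geq(1-e^{-\freq_x\gamma})A(\gamma)$ to \emph{every} key and then sum, rather than applying it only to the maximal-frequency key or routing the remaining mass through the low-range term. This gives $\E[V]\geq A(\gamma)\sum_{x}(1-e^{-\freq_x\gamma})$, and since $\freq_x\gamma\leq 2\varepsilon\leq 1$ for every $x$, the elementary bound $1-e^{-z}\geq z/2$ on $[0,1]$ yields $\sum_x(1-e^{-\freq_x\gamma})\geq\tfrac{\gamma}{2}\sum_x\freq_x=\tfrac{\gamma}{2}\SUM_D=\varepsilon$, hence $\E[V]\geq\varepsilon A(\gamma)\geq\tfrac{A(\gamma)}{r}\cdot k$. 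This is exactly the paper's proof; note that it never uses the $B(\gamma)$ term or the bound $\E[V_x]\geq f(\freq_x)$ at all --- the $\SUMMAX$ contribution alone suffices once it is aggregated over all keys, which is precisely why the choice $\gamma=\tfrac{2\varepsilon}{\SUM_D}$ is the right one.
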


The proof, which is deferred to Appendix~\ref{sec:deferred-full-sketch}, uses the following lemma which is due to Cohen~\cite{Concavesub:KDD2017}.

\begin{lemma}\label{expectationsingle:lem}\cite{Concavesub:KDD2017}
For every input key $x$ and $i=1,\ldots,r$,
\[
\E[\MAX_E((x,i))] = \LapM[a](\freq_x)_\gamma^\infty = \int_\gamma^\infty  a(t) (1-e^{-\freq_xt})dt.
\]
\end{lemma}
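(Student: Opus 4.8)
The plan is to reduce the claim to a single expectation over one exponential random variable, using the identity for $\MAX_E((x,i))$ already established above. Recall that we showed $\MAX_E((x,i)) = A(\max\{\gamma, M_{x,i}\})$, where $A(\gamma) = \int_\gamma^\infty a(t)\,dt$ and, by Lemma~\ref{Mxi:lemma}, $M_{x,i} \sim \Exp[\freq_x]$. So the whole statement collapses to evaluating $\E[A(\max\{\gamma, M_{x,i}\})]$ for a single $\Exp[\freq_x]$ variable and showing it equals $\int_\gamma^\infty a(t)(1 - e^{-\freq_x t})\,dt$.

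First I would rewrite $A(\max\{\gamma, M_{x,i}\})$ so that the outer max moves inside the integral as an indicator. Since $a$ is supported on $[0,\infty)$, I can write $A(\max\{\gamma, m\}) = \int_0^\infty a(t)\,\mathbf{1}[t \geq \max\{\gamma, m\}]\,dt$ and factor the indicator as $\mathbf{1}[t \geq \gamma]\,\mathbf{1}[t \geq m]$, giving $A(\max\{\gamma, m\}) = \int_\gamma^\infty a(t)\,\mathbf{1}[t \geq m]\,dt$. This is the one genuinely useful manipulation: it converts the awkward max in the lower limit of integration into a clean indicator that couples $t$ and $m$ only through the event $\{t \geq m\}$.

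Next I would take expectation over $M_{x,i}$ and swap the expectation with the integral. Because the integrand $a(t)\,\mathbf{1}[t \geq M_{x,i}]$ is nonnegative, Tonelli justifies the interchange, yielding $\E[A(\max\{\gamma, M_{x,i}\})] = \int_\gamma^\infty a(t)\,\Pr[M_{x,i} \leq t]\,dt$. Substituting the exponential CDF $\Pr[M_{x,i} \leq t] = 1 - e^{-\freq_x t}$ gives exactly $\int_\gamma^\infty a(t)(1 - e^{-\freq_x t})\,dt = \LapM[a](\freq_x)_\gamma^\infty$, as claimed.

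The argument is short, so the only step needing care is the interchange of expectation and integral; nonnegativity of $a$ and of the indicator makes Tonelli immediate, so there is no real obstacle. The only caveat is the discrete-mass component of $a$ (Dirac deltas, per the footnote in Section~\ref{concavesublinear:sec}): for such atoms the same indicator-factoring and swap go through with the integral replaced by the corresponding sum, leaving the identity unaffected.
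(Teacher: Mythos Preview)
Your proof is correct. The paper does not supply its own proof of this lemma; it merely cites \cite{Concavesub:KDD2017} for the result, so there is no in-paper argument to compare against. Your reduction via $\MAX_E((x,i)) = A(\max\{\gamma, M_{x,i}\})$ and Lemma~\ref{Mxi:lemma}, followed by the indicator rewrite and Tonelli, is a clean and self-contained derivation.
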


The following theorem combines the previous lemmas to show how to estimate $f(\freq_x)$ using the sample and bound the variance. Note that we need to specify how to compute the estimator. For getting $f(\freq_x)$ we make another pass, and the computation of the conditioned inclusion probability is described in the next subsection.

\begin{theorem}
The sample returned by Algorithm~\ref{finalsample:algo} is a stochastic PPSWOR sample, where each key $x$ has weight $V_x$ that satisfies $f(\freq_x) \leq \E[V_x] \leq \frac{1}{(1-\varepsilon)}f(\freq_x)$. The per-key inverse-probability estimator according to the weights $f(\freq_x)$,
\[
\widehat{f(\freq_x)} =
\begin{cases}
\frac{f(\freq_x)}{\Pr[\seed(x)<\tau]} & x \in S\\
0 & x \notin S
\end{cases}.
\]
is unbiased and has variance
\[
\var\left[\widehat{f(\freq_x)}\right] \leq \frac{4\E[V_x]\E[V]}{k-2} \leq \frac{4f(\freq_x)\sum_{z \in \mathcal{X}}{f(\freq_z)}}{(1-\varepsilon)^2 (k-2)}
\]
where $V = \sum_{x \in \mathcal{X}}{V_x}$.
\end{theorem}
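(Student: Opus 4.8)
The plan is to assemble the structural facts already established for Algorithm~\ref{finalsample:algo} and then invoke Theorem~\ref{thm:stochasticvariance}. First I would pin down $\E[V_x]$. The preceding discussion shows the returned sample is a PPSWOR sample with respect to the random weights $V_x = \frac{1}{r}\SUMMAX_E(x) + \freq_x B(\gamma)$, and that $V_x$ is a sum of independent random variables each in $[0, A(\gamma)/r]$ (splitting the deterministic term $\freq_x B(\gamma)$ into pieces of size at most $A(\gamma)/r$), so the sample is a stochastic PPSWOR sample in the sense of Section~\ref{stocsample:sec} with $T = A(\gamma)/r$. By Lemma~\ref{expectationsingle:lem}, $\E\left[\frac{1}{r}\SUMMAX_E(x)\right] = \frac{1}{r}\sum_{i=1}^{r}\E[\MAX_E((x,i))] = \LapM[a](\freq_x)_\gamma^\infty$, hence $\E[V_x] = \LapM[a](\freq_x)_\gamma^\infty + \freq_x B(\gamma)$. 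Since the final $\gamma = \frac{2\varepsilon}{\SUM_D}$ satisfies the hypothesis of Lemma~\ref{lem:low-range-gamma}, that lemma gives $(1-\varepsilon)\freq_x B(\gamma) \leq \LapM[a](\freq_x)_0^\gamma \leq \freq_x B(\gamma)$; adding $\LapM[a](\freq_x)_\gamma^\infty$ and recalling $f(\freq_x) = \LapM[a](\freq_x)_0^\gamma + \LapM[a](\freq_x)_\gamma^\infty$ yields $f(\freq_x) \leq \E[V_x] \leq \frac{1}{1-\varepsilon}f(\freq_x)$, and summing over keys, $\sum_z f(\freq_z) \leq \E[V] \leq \frac{1}{1-\varepsilon}\sum_z f(\freq_z)$.

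For unbiasedness I would rerun the argument of Claim~\ref{claim:unbiased}, now with $f(\freq_x)$ in the numerator rather than the key's own weight: condition on $\tau_x$, the $(k-1)$-st smallest seed among keys other than $x$; it is independent of $\seed(x)$ since $V_x$ and the exponential used to draw $\seed(x)$ are independent of everything involving the other keys. Given $\tau_x$, key $x$ is in $S$ with probability $\Pr_{\seed(x)\sim\seedDist_x}[\seed(x)<\tau_x]$ --- the inclusion probability marginalized over $V_x$ and the exponential draw, which is exactly the quantity appearing in the estimator's denominator --- and on that event $\tau = \tau_x$, so $\E\left[\widehat{f(\freq_x)} \mid \tau_x\right] = f(\freq_x)$, giving $\E\left[\widehat{f(\freq_x)}\right] = f(\freq_x)$.

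For the variance, the law of total variance together with the constant conditional mean gives $\var\left[\widehat{f(\freq_x)}\right] = \E_{\tau_x}\left[f(\freq_x)^2\left(\frac{1}{\Pr[\seed(x)<\tau_x]}-1\right)\right]$, as in Claim~\ref{claim:estimatorvar}. Since $0 \leq f(\freq_x) \leq \E[V_x]$ and the factor $\frac{1}{\Pr[\seed(x)<\tau_x]}-1$ is nonnegative, this is at most $\E_{\tau_x}\left[\E[V_x]^2\left(\frac{1}{\Pr[\seed(x)<\tau_x]}-1\right)\right]$, which is precisely the variance of the bottom-$k$ inverse-probability estimator of $\E[V_x]$ under the same stochastic PPSWOR sample. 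Now I would apply Theorem~\ref{thm:stochasticvariance} with $v_x = \E[V_x]$ and $T = A(\gamma)/r$: its hypothesis $\sum_x v_x \geq Tk$ is exactly $\E[V] \geq \frac{A(\gamma)}{r}k$, which is Lemma~\ref{lem:sum-expected-weights-high} (and $k \geq 3$ is assumed). This gives $\var\left[\widehat{f(\freq_x)}\right] \leq \frac{4\E[V_x]\E[V]}{k-2}$, and substituting $\E[V_x] \leq \frac{1}{1-\varepsilon}f(\freq_x)$ and $\E[V] \leq \frac{1}{1-\varepsilon}\sum_z f(\freq_z)$ from the first step closes the stated chain of inequalities. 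Computability of $\widehat{f(\freq_x)}$ --- obtaining $f(\freq_x)$ by a second pass and evaluating $\Pr[\seed(x)<\tau]$ --- is handled separately in the following subsection.

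The work here is assembly rather than a new difficulty; the one place that needs care is justifying that the algorithm's output genuinely fits the stochastic PPSWOR template of Section~\ref{stocsample:sec}: identifying $S_x$ with $V_x$, verifying the per-summand range bound $A(\gamma)/r$ (including the split of the deterministic term $\freq_x B(\gamma)$, using Lemma~\ref{Mxi:lemma} for the independence of the $\frac{1}{r}\MAX_E((x,i))$), checking that the estimator's ``inclusion probability'' is the one marginalized over $V_x$ so that the conditioning argument of Claim~\ref{claim:unbiased} carries over verbatim, and confirming that Lemma~\ref{lem:sum-expected-weights-high} supplies exactly the hypothesis $V \geq Tk$ of Theorem~\ref{thm:stochasticvariance}. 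Degenerate cases where $A(\gamma)=0$ or $B(\gamma)=0$, so that one of the two sub-samples is absent, are handled by the conditionals in Algorithm~\ref{finalsample:algo} and only simplify the argument.
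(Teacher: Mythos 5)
Your proposal is correct and follows essentially the same route as the paper: establish $f(\freq_x)\leq \E[V_x]\leq \frac{1}{1-\varepsilon}f(\freq_x)$ via Lemma~\ref{expectationsingle:lem} and Lemma~\ref{lem:low-range-gamma}, reuse the argument of Claim~\ref{claim:unbiased} for unbiasedness, and bound the variance by comparing to the inverse-probability estimator of $\E[V_x]$ and invoking Theorem~\ref{thm:stochasticvariance} with $T=A(\gamma)/r$ and the hypothesis supplied by Lemma~\ref{lem:sum-expected-weights-high}. The only cosmetic difference is that you bound $f(\freq_x)^2\leq \E[V_x]^2$ inside the conditional-variance expectation, while the paper writes $\widehat{f(\freq_x)}=\frac{f(\freq_x)}{\E[V_x]}\widehat{\E[V_x]}$ and scales the variance; these are the same computation.
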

\begin{proof}
We first prove that $f(\freq_x) \leq \E[V_x] \leq \frac{1}{(1-\varepsilon)}f(\freq_x)$. The randomized weight $V_x$ is the sum of two terms:
\[
V_x  = \frac{1}{r}\SUMMAX_E(x) +  \freq_x \int_0^\gamma t a(t) dt.
\]
As in the proof of Lemma~\ref{lem:sum-expected-weights-high}, using Lemma~\ref{expectationsingle:lem},
\begin{equation}\label{eq:expected-weighted-high-range}
\E\left[\frac{1}{r}\SUMMAX_E(x)\right] = \frac{1}{r}\sum_{i=1}^r{\E[\MAX_E((x,i))]} = \int_\gamma^\infty  a(t) (1-e^{-\freq_xt})dt.
\end{equation}
The quantity $\freq_x \int_0^\gamma t a(t) dt$ is deterministic, and since $\gamma \leq \frac{2\varepsilon}{\max_{x}{\freq_x}}$, by Lemma~\ref{lem:low-range-gamma},
\begin{equation}\label{eq:expected-weighted-low-range}
\int_0^\gamma  a(t) (1-e^{-\freq_xt})dt \leq \freq_x \int_0^\gamma t a(t) dt \leq \frac{1}{1-\varepsilon}\int_0^\gamma  a(t) (1-e^{-\freq_xt})dt.
\end{equation}
Recall that $f(\freq_x) = \int_0^\infty  a(t) (1-e^{-\freq_xt})dt = \int_0^\gamma  a(t) (1-e^{-\freq_xt})dt + \int_\gamma^\infty  a(t) (1-e^{-\freq_xt})dt$. Combining Equations \eqref{eq:expected-weighted-high-range} and \eqref{eq:expected-weighted-low-range}, we get that
\[
f(\freq_x) \leq \E[V_x] \leq \frac{1}{(1-\varepsilon)}f(\freq_x).
\]

Now consider the estimator $\widehat{f(\freq_x)}$ for $f(\freq_x)$. To show that the estimator is unbiased, that is, $\E\left[\widehat{f(\freq_x)}\right] = f(\freq_x)$, we can follow the proof of Claim~\ref{claim:unbiased} exactly as written earlier. It is left to bound the variance. For the sake of the analysis, consider the following estimator for $\E[V_x]$:
\[
\widehat{\E[V_x]} =
\begin{cases}
\frac{\E[V_x]}{\Pr[\seed(x)<\tau]} & x \in S\\
0 & x \notin S
\end{cases}.
\]
This is again the inverse-probability estimator from Definition~\ref{def:inverseprob}. Our sample is a stochastic PPSWOR sample according to the weights $V_x$, where each one of $V_x$ is a sum of independent random variables in $\left[0,\frac{A(\gamma)}{r}\right]$ (recall that the deterministic part can also be expressed as a sum with each summand in $\left[0,\frac{A(\gamma)}{r}\right]$). Lemma~\ref{lem:sum-expected-weights-high} shows that $\E[V] \geq \frac{A(\gamma)}{r} \cdot k$. Hence, we satisfy the conditions of Theorem~\ref{thm:stochasticvariance}, which in turn shows that
\[
\var\left[\widehat{\E[V_x]}\right] \leq \frac{4\E[V_x]\E[V]}{k-2}.
\]

Finally, note that $\widehat{f(\freq_x)} = \frac{f(\freq_x)}{\E[V_x]} \cdot \widehat{\E[V_x]}$. We established above that $\frac{f(\freq_x)}{\E[V_x]} \leq 1$ and $\E[V_x] \leq \frac{1}{(1-\varepsilon)}f(\freq_x)$. We conclude that
\begin{align*}
\var\left[\widehat{f(\freq_x)}\right] &= \var\left[\frac{f(\freq_x)}{\E[V_x]} \cdot \widehat{\E[V_x]}\right]\\
&= \left(\frac{f(\freq_x)}{\E[V_x]}\right)^2 \var\left[\widehat{\E[V_x]}\right]\\
&\leq \frac{4\E[V_x]\E[V]}{k-2}\\
&\leq \frac{4f(\freq_x)\sum_{z \in \mathcal{X}}{f(\freq_z)}}{(1-\varepsilon)^2 (k-2)}.
\end{align*}
\end{proof}

\begin{remark}\label{remark:constant-approx-estimator}
The theorem establishes a bound on the variance of the estimator for $\E[V_x]$, and then uses it to bound the variance of the estimator for $f(\freq_x)$, which is possible since $\E[V_x]$ approximates $f(\freq_x)$. This results in increasing the variance by an $\varepsilon$-dependent factor. Similarly, if we wish to estimate $f(\freq_x)$ for a concave sublinear function $f$ (and not a \emph{soft} concave sublinear function), we can use the same idea and lose another constant factor in the variance.
\end{remark}

\subsection{Expressing Conditioned Inclusion Probabilities}\label{estimator:sec}

  To compute the estimator $\widehat{f(\freq_x)}$, we need to know
  both $f(\freq_x)$ and the precise conditioned inclusion probability $\Pr[\seed(x) < \tau]$.
 In order to get $f(\freq_x)$, we perform a second pass over the data elements
 to obtain the exact frequencies $\freq_x$ for $x\in S$.  This can
 be done via a simple
    composable sketch that collects and sums the values of data
    elements with keys that occur in the sample $S$.

 We next consider computing the conditioned inclusion probabilities.
The following lemma considers the $\seed$ distributions of keys in the final
sample. It shows that the distributions are parameterized by $\freq_x$ and describes their CDF.
\begin{lemma}\label{lem:seedCDF-compute}
Algorithms~\ref{samplesketch:algo} and \ref{finalsample:algo} describe a
bottom-$k$ sampling scheme, where in the output sample the seed of each key $x$ is drawn from a distribution $\seedDist^{(F)}[\freq_x]$.
The distribution $\seedDist^{(F)}[w]$ has the following
cumulative distribution function:
$$\seedCDF^{(F)}(w,t) := \Pr_{s\sim \seedDist^{(F)}[w]}[s < t] = 1- p_1 p_2^r\ ,$$ where
\begin{eqnarray*}
p_1 &=&  \exp(-w B(\gamma) t) \\
p_2 & = & \int_0^\infty w \exp(-w y)
  \exp(-A(\max\{y,\gamma\}) t/r) dy
\end{eqnarray*}
\end{lemma}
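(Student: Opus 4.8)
The plan is to track the seed value $\seed(x)$ of a fixed key $x$ in the final sample produced by Algorithm~\ref{finalsample:algo} and show that it is the minimum of the seed coming from the scaled PPSWOR sketch and the seed coming from the scaled $\SUMMAX$ sketch, then compute the CDF of that minimum. First I would recall from Section~\ref{sec:merge-sketches} that the final sample is a bottom-$k$ merge of two independent samples: the PPSWOR sample for the lower range with frequencies scaled by $B(\gamma)=\int_0^\gamma t a(t)dt$, and the $\SUMMAX$ sample for the upper range with weights scaled by $1/r$. By Proposition~\ref{ppswor:prop} together with the scaling lemma, the PPSWOR seed of $x$ is distributed $\Exp[\freq_x B(\gamma)]$, which gives the factor $p_1=\exp(-\freq_x B(\gamma)t)$ directly as $\Pr[\text{PPSWOR seed of }x\geq t]$. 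Since the two sketches are independent, $\Pr[\seed(x)\geq t]$ factors as $p_1$ times $\Pr[\text{$\SUMMAX$ seed of }x\geq t]$, so it remains to identify the second factor as $p_2^r$.

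For the $\SUMMAX$ part, I would use Corollary~\ref{cor:summax}: the scaled $\SUMMAX$ sample is a PPSWOR sample according to weights $\frac{1}{r}\SUMMAX_E(x)=\sum_{i=1}^r \frac{1}{r}\MAX_E((x,i))$, so conditioned on the realized values $M_{x,i}$, the $\SUMMAX$ seed of $x$ is $\Exp\!\left[\frac{1}{r}\sum_{i=1}^r A(\max\{M_{x,i},\gamma\})\right]$, using the identification $\MAX_E((x,i))=A(\max\{\gamma,M_{x,i}\})$ established in Section~\ref{sec:merge-sketches}. Hence, conditioned on all the $M_{x,i}$, $\Pr[\text{$\SUMMAX$ seed}\geq t]=\prod_{i=1}^r \exp(-A(\max\{M_{x,i},\gamma\})t/r)$. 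Now I take expectation over the $M_{x,i}$: by Lemma~\ref{Mxi:lemma} these are i.i.d.\ $\Exp[\freq_x]$, so the expectation of the product is the product of expectations, and each factor equals $\int_0^\infty \freq_x e^{-\freq_x y}\exp(-A(\max\{y,\gamma\})t/r)\,dy = p_2$. This yields the $p_2^r$ factor and completes the formula; the claim that the seed distribution depends only on $\freq_x$ follows because every quantity in $p_1$ and $p_2$ depends on $x$ only through $\freq_x$ (with $\gamma$, $r$, $a$ fixed).

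One subtlety I would be careful about is whether the $\SUMMAX$ sketch actually processes, for key $(x,i)$, exactly one element with value $A(\max\{\gamma,M_{x,i}\})$, rather than several elements with larger values of $A$ at intermediate thresholds. This is exactly what invariant~1 of Section~\ref{sec:upper-range-sketch} guarantees together with the final-phase processing in Algorithm~\ref{finalsample:algo}: at the end, either $((x,i),M_{x,i})$ is still in $\SideLine$ and gets processed with value $A(\gamma)$ (and $M_{x,i}<\gamma$, so $A(\max\{\gamma,M_{x,i}\})=A(\gamma)$), or it was ejected earlier with some value $\geq\gamma$ and processed with value $A(M_{x,i})$; but because $\gamma$ only decreases, once ejected at threshold $\gamma'$ we have $M_{x,i}\geq\gamma'\geq\gamma_{\text{final}}$, so again $A(\max\{\gamma,M_{x,i}\})=A(M_{x,i})$. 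Since $\SUMMAX$ retains the \emph{minimum}-valued element per key and $A$ is non-increasing, earlier (larger) processed values at higher thresholds are dominated and do not affect $\MAX_E((x,i))$. I would spell out this bookkeeping as the main technical point; the rest is the routine factorization of the minimum of independent exponentials and the independence of the $M_{x,i}$ across $i$ from Lemma~\ref{Mxi:lemma}. Finally, independence of $\seed(x)$ across keys $x$ follows from the independence of the hash/exponential draws used for different primary keys, exactly as in Lemma~\ref{summax:lemma} and Proposition~\ref{ppswor:prop}.
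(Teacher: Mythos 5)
Your proposal is correct and takes essentially the same route as the paper's proof: decompose $\seed^{(F)}(x)$ as the minimum of the two independent scaled seeds, read off $p_1$ from the scaled PPSWOR seed $\Exp[\freq_x B(\gamma)]$, and obtain $p_2^r$ by viewing the scaled $\SUMMAX$ seed as the minimum of $r$ independent exponentials with parameters $\frac{1}{r}\MAX_E((x,i)) = \frac{1}{r}A(\max\{\gamma,M_{x,i}\})$ and integrating over the i.i.d.\ $M_{x,i}\sim\Exp[\freq_x]$ from Lemma~\ref{Mxi:lemma}. The bookkeeping you single out, namely that $\MAX_E((x,i)) = A(\max\{\gamma,M_{x,i}\})$, is exactly what the paper establishes in Section~\ref{sec:merge-sketches} before invoking it in the appendix proof.
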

The proof is deferred to Appendix~\ref{sec:deferred-full-sketch}.

\section{Experiments} \label{experiments:sec}

We implemented our sampling sketch and report here the results of
experiments on real and synthetic datasets.
Our experiments are small-scale and aimed to
demonstrate the simplicity and practicality of our sketch design
 and to understand the actual space and error
  bounds (that can be significantly better than our worst-case bounds).

\subsection{Implementation}
Our Python 2.7 implementation follows the pseudocode of
the sampling sketch (Algorithm~\ref{samplesketch:algo}),
the PPSWOR  (Algorithm~\ref{ppswor:algo}) and $\SUMMAX$
(Algorithm~\ref{summax:algo})
substructures, the sample production from the sketch
(Algorithm~\ref{finalsample:algo}), and the estimator (that evaluates the conditioned inclusion
probabilities, see Section~\ref{estimator:sec}).
We incorporated two practical
optimizations that are not shown in the pseudocode.
These optimizations do not affect the outcome of the
computation or the worst-case analysis, but
reduce the sketch size in practice.

\paragraph{Removing redundant keys from the PPSWOR subsketch}
The pseudocode (Algorithm~\ref{samplesketch:algo}) maintains two samples of size $k$, the
PPSWOR and the $\SUMMAX$ samples.  The final sample of size $k$ is
obtained by merging these two samples.
Our implementation instead maintains a truncated PPSWOR sketch that removes
elements that are already redundant (do not have a potential to be included in the merged sample).
We keep an element in the PPSWOR sketch only when
the seed value is lower than $r B(\gamma)$ times the current threshold
$\tau$ of the $\SUMMAX$ sample.
This means that the ``effective'' inclusion threshold we use for the PPSWOR sketch is the minimum of the $k$th
largest (the threshold of the PPSWOR sketch) and $r B(\gamma)
\tau$.
  To establish that elements that do not satisfy this condition are
  indeed redundant, recall that when we later
merge the PPSWOR and the $\SUMMAX$ samples, the value
of $B(\gamma)$ can only become lower and the $\SUMMAX$
  threshold can only be lower, making inclusion more restrictive.
This optimization may result in  maintaining much fewer than $k$
elements and possibly an empty PPSWOR sketch.
The benefit is larger for functions when $A(t)$ is bounded (as $t$ approaches
$0$).  In particular, when $a(t)=0$ for $t\leq \gamma$ we get $B(\gamma)=0$ and the truncation will result in
an empty sample.

\paragraph{Removing redundant elements from Sideline}
  The pseudocode may place elements in $\SideLine$ that have
  no future potential of modifying the $\SUMMAX$ sketch.
In our implementation,  we place and keep an element
$((e.\mkey,i), Y)$ in
$\SideLine$ only as long as the following condition holds:
If $((e.\mkey,i), A(Y))$ is processed by the current
$\SUMMAX$ sketch, it would modify the sketch.
To establish redundancy of discarded elements, note
that when an element is eventually processed,  the value it is
processed with is at most $A(Y)$ (can be $A(\gamma)$ for $\gamma\geq Y$)
and also at that point the $\SUMMAX$ sketch threshold can only be more restrictive.

\subsection{Datasets and Experimental Results}
We used the following datasets for the experiments:
\begin{itemize}
\item \texttt{abcnews} \cite{abcnews}: News headlines published by the Australian Broadcasting Corp. For each word, we created an element with value $1$.
\item \texttt{flicker} \cite{Plangprasopchok:2010}: Tags used by Flickr users to annotate images. The key of each element is a tag, and the value is the number of times it appeared in a certain folder.
\item Three synthetic generated datasets that contain
$2 \times 10^6$ data elements. Each element has value $1$, and the key was chosen according to the Zipf distribution (\textsf{numpy.random.zipf}), with Zipf parameter values $\alpha\in \{1.1, 1.2, 1.5\}$.  The Zipf family in this range is often a good model to real-world frequency distributions.
\end{itemize}

We applied our sampling sketch with sample size parameter values
$k\in\{25,50,75,100\}$
and set the parameter $\varepsilon = 0.5$ in
all experiments.
We sampled according to two
concave sublinear functions:
the frequency moment $f(\freq)=\freq^{0.5}$ and
$f(\freq)=\ln(1+\freq)$.

\begin{table} \caption{Experimental Results: $f(\freq)=\freq^{0.5}$, $200$ rep.} \label{results_sqrt:tab}
\footnotesize
\centering
  \begin{tabular}{r||ll||ll||rr||rr}
$k$ & \multicolumn{2}{c}{NRMSE} & \multicolumn{2}{c}{Benchmark} &\multicolumn{2}{c}{ max \#keys}
    & \multicolumn{2}{c}{max \#elem}\\
               & bound & actual & ppswor & Pri. & ave & max & ave
                                 & max \\
\hline
\multicolumn{9}{c}{Dataset: \texttt{abcnews} ($7.07 \times 10^6$ elements, $91.7 \times 10^3$ keys)}\\
\hline
$25$ & $0.834$ & $0.213$ & $0.213$ & $0.217$ & $31.7$ & $37$ & $50.9$ & $76$\\
$50$ & $0.577$ & $0.142$ & $0.128$ & $0.137$ & $58.5$ & $66$ & $95.1$ & $136$\\
$75$ & $0.468$ & $0.120$ & $0.111$ & $0.110$ & $85.4$ & $94$ & $134.8$ & $181$\\
$100$ & $0.404$ & $0.105$ & $0.098$ & $0.103$ & $111.2$ & $120$ & $171.1$ & $256$\\
\hline
\multicolumn{9}{c}{Dataset: \texttt{flickr} ($7.64 \times 10^6$ elements, $572.4 \times 10^3$ keys)}\\
\hline
$25$ & $0.834$ & $0.200$ & $0.190$ & $0.208$ & $31.2$ & $37$ & $53.1$ & $77$\\
$50$ & $0.577$ & $0.144$ & $0.147$ & $0.142$ & $57.8$ & $64$ & $94.6$ & $130$\\
$75$ & $0.468$ & $0.123$ & $0.114$ & $0.110$ & $83.7$ & $91$ & $131.7$ & $175$\\
$100$ & $0.404$ & $0.115$ & $0.095$ & $0.099$ & $108.9$ & $116$ & $173.4$ & $223$\\
\hline
\multicolumn{9}{c}{Dataset: \texttt{zipf1.1} ($2.00 \times 10^6$ elements, $652.2 \times 10^3$ keys)}\\
\hline
$25$ & $0.834$ & $0.215$ & $0.198$ & $0.217$ & $31.8$ & $39$ & $52.5$ & $75$\\
$50$ & $0.577$ & $0.123$ & $0.137$ & $0.131$ & $58.7$ & $66$ & $95.0$ & $130$\\
$75$ & $0.468$ & $0.109$ & $0.115$ & $0.114$ & $84.7$ & $91$ & $135.2$ & $186$\\
$100$ & $0.404$ & $0.106$ & $0.103$ & $0.097$ & $111.2$ & $119$ & $176.3$ & $221$\\
\hline
\multicolumn{9}{c}{Dataset: \texttt{zipf1.2} ($2.00 \times 10^6$ elements, $237.3 \times 10^3$ keys)}\\
\hline
$25$ & $0.834$ & $0.199$ & $0.208$ & $0.214$ & $31.1$ & $38$ & $53.2$ & $83$\\
$50$ & $0.577$ & $0.144$ & $0.138$ & $0.145$ & $57.9$ & $65$ & $98.4$ & $139$\\
$75$ & $0.468$ & $0.122$ & $0.116$ & $0.124$ & $83.9$ & $90$ & $138.2$ & $173$\\
$100$ & $0.404$ & $0.098$ & $0.109$ & $0.096$ & $109.6$ & $115$ & $179.2$ & $227$\\
\hline
\multicolumn{9}{c}{Dataset: \texttt{zipf1.5} ($2.00 \times 10^6$ elements, $22.3 \times 10^3$ keys)}\\
\hline
$25$ & $0.834$ & $0.201$ & $0.207$ & $0.194$ & $30.1$ & $35$ & $53.4$ & $74$\\
$50$ & $0.577$ & $0.152$ & $0.139$ & $0.142$ & $56.1$ & $60$ & $101.5$ & $136$\\
$75$ & $0.468$ & $0.115$ & $0.115$ & $0.112$ & $81.6$ & $86$ & $151.8$ & $199$\\
$100$ & $0.404$ & $0.098$ & $0.094$ & $0.086$ & $107.1$ & $113$ & $196.3$ & $248$
  \end{tabular}
\end{table}

\begin{table} \caption{Experimental Results: $f(\freq)=\ln(1+\freq)$, $200$ rep.}\label{results_ln:tab}
\footnotesize
\centering
  \begin{tabular}{r||ll||ll||rr||rr}
$k$ & \multicolumn{2}{c}{NRMSE} & \multicolumn{2}{c}{Benchmark} &\multicolumn{2}{c}{ max \#keys}
    & \multicolumn{2}{c}{max \#elem}\\
               & bound & actual & ppswor & Pri. & ave & max & ave
                                 & max \\
\hline
\multicolumn{9}{c}{Dataset: \texttt{abcnews} ($7.07 \times 10^6$ elements, $91.7 \times 10^3$ keys)}\\
\hline
$25$ & $0.834$ & $0.208$ & $0.217$ & $0.194$ & $29.5$ & $34$ & $49.1$ & $71$\\
$50$ & $0.577$ & $0.138$ & $0.136$ & $0.142$ & $54.9$ & $60$ & $80.9$ & $110$\\
$75$ & $0.468$ & $0.130$ & $0.099$ & $0.117$ & $80.0$ & $85$ & $111.1$ & $152$\\
$100$ & $0.404$ & $0.102$ & $0.115$ & $0.103$ & $104.9$ & $109$ & $140.7$ & $184$\\
\hline
\multicolumn{9}{c}{Dataset: \texttt{flickr} ($7.64 \times 10^6$ elements, $572.4 \times 10^3$ keys)}\\
\hline
$25$ & $0.834$ & $0.227$ & $0.199$ & $0.180$ & $28.0$ & $31$ & $41.4$ & $69$\\
$50$ & $0.577$ & $0.144$ & $0.151$ & $0.129$ & $53.3$ & $59$ & $72.2$ & $101$\\
$75$ & $0.468$ & $0.119$ & $0.121$ & $0.109$ & $78.2$ & $83$ & $99.8$ & $135$\\
$100$ & $0.404$ & $0.097$ & $0.104$ & $0.095$ & $102.7$ & $106$ & $130.3$ & $166$\\
\hline
\multicolumn{9}{c}{Dataset: \texttt{zipf1.1} ($2.00 \times 10^6$ elements, $652.2 \times 10^3$ keys)}\\
\hline
$25$ & $0.834$ & $0.201$ & $0.204$ & $0.234$ & $29.2$ & $34$ & $48.8$ & $71$\\
$50$ & $0.577$ & $0.127$ & $0.132$ & $0.129$ & $54.4$ & $58$ & $80.4$ & $119$\\
$75$ & $0.468$ & $0.116$ & $0.122$ & $0.110$ & $79.6$ & $84$ & $110.9$ & $142$\\
$100$ & $0.404$ & $0.107$ & $0.106$ & $0.104$ & $104.5$ & $109$ & $139.8$ & $165$\\
\hline
\multicolumn{9}{c}{Dataset: \texttt{zipf1.2} ($2.00 \times 10^6$ elements, $237.3 \times 10^3$ keys)}\\
\hline
$25$ & $0.834$ & $0.209$ & $0.195$ & $0.218$ & $28.5$ & $33$ & $48.0$ & $72$\\
$50$ & $0.577$ & $0.147$ & $0.144$ & $0.139$ & $53.7$ & $57$ & $80.5$ & $113$\\
$75$ & $0.468$ & $0.120$ & $0.111$ & $0.113$ & $78.8$ & $84$ & $111.4$ & $143$\\
$100$ & $0.404$ & $0.098$ & $0.106$ & $0.102$ & $103.9$ & $108$ & $140.3$ & $173$\\
\hline
\multicolumn{9}{c}{Dataset: \texttt{zipf1.5} ($2.00 \times 10^6$ elements, $22.3 \times 10^3$ keys)}\\
\hline
$25$ & $0.834$ & $0.210$ & $0.197$ & $0.226$ & $27.2$ & $30$ & $45.2$ & $66$\\
$50$ & $0.577$ & $0.141$ & $0.146$ & $0.149$ & $52.1$ & $55$ & $78.9$ & $104$\\
$75$ & $0.468$ & $0.124$ & $0.112$ & $0.106$ & $76.9$ & $79$ & $110.5$ & $146$\\
$100$ & $0.404$ & $0.100$ & $0.101$ & $0.099$ & $101.9$ & $104$ & $139.1$ & $173$
   \end{tabular}
\end{table}

Tables~\ref{results_sqrt:tab} and \ref{results_ln:tab} report aggregated results of 200 repetitions for
each combination of dataset, $k$, and $f$ values. In each repetition,
we were using the final sample to estimate the sum $\sum_{x \in \mathcal{X}}{f(\freq_x)}$ over all keys.
For error bounds, we list the worst-case bound on the CV (which depends only on $k$ and
$\varepsilon$ and is proportional to $1/\sqrt{k}$) and report the actual normalized root
of the average squared error (NRMSE). In addition, we report the NRMSE that we got from 200 repetitions of estimating the same statistics using two common sampling schemes for aggregated data, PPSWOR and priority sampling, which we use as benchmarks.

Also, we consider the size of the sketch after processing each
element.
Since the representation of each key can be explicit and require a lot of space, we separately
consider the number of distinct keys and the number of elements stored
in the sketch.
We report
 the maximum number of distinct
keys stored in the sketch at any point (the
average and the maximum over the 200 repetitions)
and the respective maximum number of elements stored in the sketch at any point
during the computations (again, the average and the maximum over the
200 repetitions). The size of the sketch is measured at end of the processing of each input element -- during the processing we may store one more distinct key and temporarily store up to $r = k/\varepsilon$ additional elements in the Sideline.

We can see that the actual error reported is significantly lower than
the worst-case bound.  Furthermore, the error that our sketch gets is close to the error achieved by the two benchmark sampling schemes. We can also see that the maximum number of
distinct keys stored in the sketch at any time is relatively close to the
specified sample size of $k$ and that the total sketch size in terms
of elements rarely exceeded $3k$, with the
relative excess seeming to decrease with $k$. In comparison, the benchmark schemes require space that is the number of distinct keys (for the aggregation), which is significantly higher than the space required by our sketch.

\section{Conclusion}
 We presented composable sampling sketches for weighted sampling of unaggregated data tailored to a concave sublinear function of the frequencies of keys. We experimentally demonstrated the simplicity and efficacy of our design: Our sketch size is nearly optimal in that it is not much larger than the final sample size, and the estimate quality is close to that provided by a weighted sample computed directly over the aggregated data.

\section*{Acknowledgments}
Ofir Geri was supported by NSF grant CCF-1617577, a Simons Investigator Award for Moses Charikar, and the Google Graduate Fellowship in Computer Science in the School of Engineering at Stanford University. The computing for this project was performed on the Sherlock cluster. We would like to thank Stanford University and the Stanford Research Computing Center for providing computational resources and support that contributed to these research results.

\bibliographystyle{plain}
\bibliography{cycle}

\appendix

\section{Proofs Deferred from Section~\ref{preliminaries:sec}} \label{deferred2}
\begin{proof}[Proof of Proposition~\ref{ppswor:prop}]
Each data element $e=(e.\mkey,e.\mval)$ is processed by giving it a score
$\eScore(e) \sim  \Exp(e.\mval)$ and then
processing the element $(e.\mkey,\eScore(e))$ by the bottom-$k$ structure.
For a key $x$, we  define
\[\seed(x) := \min_{e\in D
  \mid e.\mkey = x } \eScore(e)\]
to be
the smallest score of an element with key $x$.

Since $\seed(x)$ is the minimum of independent exponential random
variables, its distribution is $\Exp(w_x)$.
After processing the elements in $D$, the bottom-$k$ structure contains
the $k$ pairs $(x,\seed(x))$ with smallest $\seed(x)$ values, and
hence obtains the respective PPSWOR sample.

Consider now the sketch resulting from merging two sampling sketches computed for $D_1$ and $D_2$. For each key $x$, denote by $\seed_1(x)$ and $\seed_2(x)$ the values of $x$ in the sketches for $D_1$ and $D_2$, respectively. Then, the merged sketch contains the $k$ pairs $(x,\seed(x))$ with smallest $\seed(x) := \min\{seed_1(x),seed_2(x)\}$ values. As $\seed(x)$ is the minimum of two independent exponential random variables with parameters $\SUM_{D_1}(x)$ and $\SUM_{D_2}(x)$, we get that $\seed(x) \sim \Exp(\SUM_{D_1 \cup D_2}(x))$, as desired.
\end{proof}

\section{Proofs Deferred from Section~\ref{stocsample:sec}} \label{stocproofs}

\subsection{Threshold Distribution and Fixed-Threshold Inclusion Probability}
Before proceeding, we establish two technical lemmas that will be useful later.
The first lemma shows that the distribution of the $k$-th lowest seed is dominated by the Erlang distribution which takes the sum of the expected weights $V$ as a parameter. The lemma will be useful later when we consider the inclusion threshold $\tau_x$.

\begin{lemma}\label{lem:erlangdom}
Consider a set of keys $\mathcal{X}$, such that the weight of each $x \in \mathcal{X}$ is a random variable $S_x \geq 0$. Let $V = \sum_{x \in \mathcal{X}}{\E[S_x]}$, and for each $x$ with $S_x > 0$, draw $\seed(x) \sim \Exp[S_x]$ independently. Then, the distribution of the $k$-th lowest seed, $\{\seed(x) \mid x \in X\}_{(k)}$, is dominated by $\Erlang[V,k]$.
\end{lemma}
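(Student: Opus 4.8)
The plan is to reduce the claim to a single, clean stochastic comparison. Fix $t>0$ and let $N_t := \sum_{x\in\mathcal{X}} \mathbf{1}[\seed(x)<t]$ count the keys whose seed falls below $t$ (a key with $S_x = 0$ has no seed and contributes $0$). By definition of the $k$-th order statistic, $\{\seed(x)\mid x\in\mathcal{X}\}_{(k)} < t$ if and only if $N_t \geq k$, so $\Pr\!\big[\{\seed(x)\}_{(k)} < t\big] = \Pr[N_t \geq k]$. On the other side, the classical relation between Erlang arrival times and Poisson counts gives $\Pr[\Erlang[V,k] < t] = \Pr[\mathrm{Poisson}(Vt)\geq k]$. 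Hence, recalling Definition~\ref{def:dominance}, it is enough to prove that for every $t>0$,
\[
\Pr[N_t \geq k] \;\leq\; \Pr[\mathrm{Poisson}(Vt) \geq k].
\]

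First I would control a single key. Averaging over $S_x$, the indicator $\mathbf{1}[\seed(x)<t]$ is a Bernoulli variable with parameter $p_x := \Pr[\seed(x)<t] = 1 - \E\!\big[e^{-S_x t}\big]$; write $\mu_x := -\ln\E[e^{-S_x t}]$ so that $p_x = 1 - e^{-\mu_x}$. Comparing distribution functions, $\mathbf{1}[\seed(x)<t]$ is stochastically dominated by $\mathrm{Poisson}(\mu_x)$: the two laws put the same mass $e^{-\mu_x}$ on $0$ and agree on $\Pr[\,\cdot\geq 1] = 1-e^{-\mu_x}$, while the Poisson additionally carries mass above $1$, so $\Pr[\mathbf{1}[\seed(x)<t]\geq j] \leq \Pr[\mathrm{Poisson}(\mu_x)\geq j]$ for all $j$. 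The key inequality is now one application of Jensen: $s\mapsto e^{-st}$ is convex, so $\E[e^{-S_x t}] \geq e^{-t\,\E[S_x]}$, i.e.\ $\mu_x \leq t\,\E[S_x] = t v_x$.

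Next I would aggregate over keys. By the independence of the weights $S_x$ across keys (as in the setup of Section~\ref{stocsample:sec}; in the application of this lemma it follows from Lemma~\ref{Mxi:lemma}), together with the conditional independence of the seeds given the weights, the indicators $\{\mathbf{1}[\seed(x)<t]\}_{x\in\mathcal{X}}$ are mutually independent. Stochastic order is preserved under sums of independent variables, and a sum of independent Poisson variables is Poisson with the summed parameter, so $N_t$ is stochastically dominated by $\sum_x \mathrm{Poisson}(\mu_x) = \mathrm{Poisson}\!\big(\sum_x \mu_x\big)$, which in turn is dominated by $\mathrm{Poisson}(tV)$ because $\sum_x\mu_x \leq t\sum_x v_x = tV$ and $\mathrm{Poisson}(\cdot)$ is stochastically increasing in its parameter. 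Chaining these comparisons gives $\Pr[N_t\geq k]\leq\Pr[\mathrm{Poisson}(tV)\geq k]$; since this holds for every $t$, it is exactly the integrated (CDF) inequality required by Definition~\ref{def:dominance} between the density of $\{\seed(x)\}_{(k)}$ and that of $\Erlang[V,k]$.

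The point that needs care — and the reason a more direct argument does not work — is the following. One is tempted to condition on the realized weights $(S_x)_x$, observe that conditionally the $k$-th smallest seed dominates $\Erlang[\sum_x S_x,\,k]$, and then average. But averaging $\Pr[\Erlang[\sum_x S_x,k] < t] = \Pr[\mathrm{Poisson}(t\sum_x S_x)\geq k]$ over $(S_x)_x$ need not stay below $\Pr[\mathrm{Poisson}(tV)\geq k]$, since $\lambda\mapsto\Pr[\mathrm{Poisson}(\lambda)\geq k]$ is not concave (it is convex for $\lambda<k-1$), so Jensen cannot be pushed through the order statistic. The argument above sidesteps this by invoking Jensen only at the level of a single key's Laplace transform $\E[e^{-S_x t}]$, where the relevant map $s\mapsto e^{-st}$ genuinely is convex, and by keeping the ``at most one event per key'' structure intact via the exact Poisson-superposition identity rather than collapsing it to an Erlang after conditioning. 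The independence of the per-key weights is used exactly once — to turn the per-key Poisson bounds into a bound on their sum — and it is genuinely needed, since positively correlated weights can push $\Pr[N_t\geq k]$ strictly above the Erlang bound.
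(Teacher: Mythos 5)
Your proof is correct, and it takes a genuinely different route from the paper's. The paper conditions on the realized weights, uses the conditional dominance of the $k$-th lowest seed by $\Erlang[\sum_x S_x,k]$, and then tries to pass from the mixture $\Erlang[\sum_x S_x,k]$ to $\Erlang[V,k]$ by writing the former as a sum of $k$ ``independent'' $\Exp[\sum_x S_x]$ variables, applying Lemma~\ref{perkeydom:lemma} to each, and summing via Claim~\ref{claim:sum-dominance}. You instead reduce the order-statistic claim to the counting inequality $\Pr[N_t\geq k]\leq \Pr[\mathrm{Poisson}(Vt)\geq k]$, dominate each indicator $\mathbf{1}[\seed(x)<t]$ by an independent $\mathrm{Poisson}(\mu_x)$ with $\mu_x=-\ln\E[e^{-S_xt}]\leq t v_x$ (Jensen applied per key, at the level of the Laplace transform), and superpose. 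The cosmetic caveat is that your ``stochastically dominated'' for the counts is the reverse of the paper's $\preceq$ convention in Definition~\ref{def:dominance}, but the inequality you end with is exactly the CDF comparison the lemma asserts.

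Your closing paragraph is not merely a stylistic remark: it identifies a real gap in the paper's own argument. The $k$ exponentials in the paper's decomposition of $\Erlang[\sum_x S_x,k]$ all share the same random rate $\sum_x S_x$, so they are only conditionally independent given that rate, and Claim~\ref{claim:sum-dominance} does not apply to them. Indeed, the intermediate assertion $\Erlang[\sum_x S_x,k]\preceq\Erlang[V,k]$ is false in general: with two keys, $k=2$, and independent $S_1,S_2$ each equal to $s$ with probability $p$ and $0$ otherwise, at $t=1/s$ the mixture $\Erlang[S_1+S_2,2]$ has CDF of order $p$ while $\Erlang[2ps,2]$ has CDF of order $p^2$; the second-lowest seed itself has CDF of order $p^2$, so the lemma's conclusion holds even though the paper's intermediate distribution is \emph{not} dominated by $\Erlang[V,k]$. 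Your per-key Poissonization is precisely the repair. You are also right that cross-key independence of the $S_x$ is genuinely needed: taking $S_1=S_2$ perfectly correlated in the same example makes the conclusion itself fail for $k=2$. The lemma statement omits this hypothesis, your proof uses it exactly once, and it does hold in the application via Lemma~\ref{Mxi:lemma}.
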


We first establish the dominance relation $\Exp[S] \preceq
\Exp[\E[S]]$ for any nonnegative random variable $S$.
\begin{lemma} \label{perkeydom:lemma}
Let $S \geq 0$ be a random variable. Let $X$ be a random variable such that $X \sim \Exp[S]$ when $S > 0$ and $X = \infty$ otherwise.\footnote{$X$ takes values in $\mathbb{R}$ and cannot be $\infty$. However, the random variable $X$ represents the minimum seed value (or the inclusion threshold $\tau_x$ as in Section~\ref{bottomkest:sec}), and the event $X \leq \tau$ represents whether the inclusion threshold for key $x$ is at most $\tau$. The case $S = 0$ corresponds to no elements generated with keys in $\mathcal{X} \setminus \{x\}$, so we can say that the inclusion threshold for $x$ is $\infty$ (the event we care about is whether $x$ enters the sample or not). Here we are trying to show a distribution that dominates the distribution of the inclusion threshold, and for that purpose, any threshold $\tau > 0$ is more restrictive than $\infty$. From a technical perspective, when $S = 0$, we can still use the CDF of $\Exp[S]$ since $\Pr[X \leq \tau] = 0 = 1-e^{-S\tau}$. Later, when we consider the $k$-th lowest seed, we will similarly allow it to be $\infty$ when less than $k$ keys are active.} Then, the distribution of $X$ is
dominated by $\Exp[\E[S]]$, that is,
$\forall \tau,  \Pr_{X\sim \Exp[S]}[X \leq \tau] \leq 1-e^{-E[S]\tau}$.
\end{lemma}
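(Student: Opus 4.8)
The plan is to condition on the realized value of the random weight $S$ and then invoke Jensen's inequality. Fix $\tau \geq 0$ (the case $\tau = 0$ is trivial, and negative $\tau$ falls outside the scope of Definition~\ref{def:dominance}). For any fixed value $s \geq 0$ we have $\Pr[X \leq \tau \mid S = s] = 1 - e^{-s\tau}$: when $s > 0$ this is just the CDF of $\Exp[s]$ evaluated at $\tau$, and when $s = 0$ both sides equal $0$ (by the convention $X = \infty$ the event $\{X \leq \tau\}$ has probability $0$, while $1 - e^{-0 \cdot \tau} = 0$).

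Taking expectations over $S$ via the law of total probability gives
\[
\Pr[X \leq \tau] = \E_S[1 - e^{-S\tau}] = 1 - \E_S[e^{-S\tau}].
\]
For every fixed $\tau \geq 0$, the map $z \mapsto e^{-z\tau}$ is convex on $[0,\infty)$ (its second derivative $\tau^2 e^{-z\tau}$ is nonnegative), so Jensen's inequality yields $\E_S[e^{-S\tau}] \geq e^{-\E[S]\tau}$. Substituting,
\[
\Pr[X \leq \tau] = 1 - \E_S[e^{-S\tau}] \leq 1 - e^{-\E[S]\tau},
\]
which is exactly the assertion that the distribution of $X$ is dominated by $\Exp[\E[S]]$.

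There is essentially no difficult step: the whole argument is one application of Jensen's inequality to the convex function $e^{-z\tau}$. The only points that need a little care are the boundary case $S = 0$---handled by the $X = \infty$ convention, which keeps the per-weight formula $\Pr[X \leq \tau \mid S = s] = 1 - e^{-s\tau}$ valid for all $s \geq 0$---and the restriction to $\tau \geq 0$ imposed by the support condition in Definition~\ref{def:dominance}. This lemma is the per-key ingredient for Lemma~\ref{lem:erlangdom}: once we know $\Exp[S_x] \preceq \Exp[\E[S_x]]$ for each key $x$, we can couple across keys (replacing each $S_x$ by its mean $\E[S_x]$) and use that the $k$-th order statistic is monotone in the seeds to conclude that the $k$-th lowest seed is dominated by the sum of $k$ independent $\Exp[V]$ variables, i.e.\ by $\Erlang[V, k]$.
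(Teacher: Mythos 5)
Your proof is correct and is exactly the paper's argument: write $\Pr[X \leq \tau] = \E_S[1-e^{-S\tau}]$ and apply Jensen's inequality to the convex map $z \mapsto e^{-z\tau}$. You merely spell out the convexity check and the $S=0$ boundary case, which the paper leaves implicit.
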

\begin{proof}
Follows from Jensen's inequality:
\[
\Pr[X \leq \tau] = \E_S[1-e^{-S\tau}] = 1 - \E_S[e^{-S\tau}] \leq 1-e^{-E[S]\tau}
\]
\end{proof}

Therefore, for any key $x$, the
$\seed(x)$ distribution
with stochastic weights is dominated by $\Exp[\E[S_x]]$, which is the distribution used by PPSWOR according to the expected weights.
We now consider the distribution of $\{\seed(x) \mid x \in X\}_{(k)}$, which is the
$k$-th lowest seed value. We show that the distribution of the $k$-th lowest seed is dominated by $\Erlang[V,k]$ (recall that $V = \sum_x{\E[S_x]}$). In the proof, we will use the following property of dominance (the proof of the following property is standard and included here for completeness).

\begin{claim}\label{claim:sum-dominance}
Let $X_1,\ldots,X_k,Y_1,\ldots,Y_k$ be independent random variables such that the distribution of $X_i$ is dominated by that of $Y_i$. Then the distribution of $X_1 + \ldots + X_k$ is dominated by that of $Y_1 + \ldots + Y_k$.
\end{claim}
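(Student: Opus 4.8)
The plan is to reduce this statement about sums to a single coupling of each pair $(X_i,Y_i)$, since dominance $\preceq$ is really a statement about CDFs: saying $X_i$ is dominated by $Y_i$ means exactly $\Pr[X_i\le t]\le\Pr[Y_i\le t]$ for every $t$, and it is harmless that Definition~\ref{def:dominance} is phrased via densities. First I would invoke the standard quantile coupling. Let $F_i,G_i$ be the CDFs of $X_i,Y_i$, let $U_1,\dots,U_k$ be mutually independent uniforms on $[0,1]$, and set $\tilde X_i:=F_i^{-1}(U_i)$, $\tilde Y_i:=G_i^{-1}(U_i)$ using the generalized inverse $F^{-1}(u)=\inf\{t:F(t)\ge u\}$. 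Then $\tilde X_i$ has law $X_i$ and $\tilde Y_i$ has law $Y_i$, and because $F_i\le G_i$ pointwise the inclusion $\{t:F_i(t)\ge u\}\subseteq\{t:G_i(t)\ge u\}$ gives $G_i^{-1}(u)\le F_i^{-1}(u)$, hence $\tilde Y_i\le\tilde X_i$ almost surely.

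Next, independence of the $U_i$ makes the pairs $(\tilde X_i,\tilde Y_i)$ independent across $i$, so $(\tilde X_1,\dots,\tilde X_k)$ has the same joint law as $(X_1,\dots,X_k)$ and likewise for the $Y$'s; in particular $\sum_i\tilde X_i$ and $\sum_i X_i$ have the same distribution, and similarly for the $Y$-sums. Summing the almost-sure inequalities $\tilde Y_i\le\tilde X_i$ yields $\sum_i\tilde Y_i\le\sum_i\tilde X_i$ almost surely, so for every $z\ge 0$,
\[
\Pr\Bigl[\sum_i X_i\le z\Bigr]=\Pr\Bigl[\sum_i\tilde X_i\le z\Bigr]\le\Pr\Bigl[\sum_i\tilde Y_i\le z\Bigr]=\Pr\Bigl[\sum_i Y_i\le z\Bigr],
\]
which is exactly the asserted dominance of $\sum_i X_i$ by $\sum_i Y_i$.

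A fully elementary alternative avoids coupling: handle the two-variable case via the chain $X+X'\preceq Y+X'\preceq Y+Y'$, where each step conditions on the variable held fixed — e.g.\ for the first step $\Pr[X+X'\le z]=\E_{X'}[\Pr[X\le z-X']]\le\E_{X'}[\Pr[Y\le z-X']]=\Pr[Y+X'\le z]$ using independence and $X\preceq Y$ — and then iterate, using transitivity of $\preceq$, to get all $k$ terms. There is no genuine obstacle here; the only points needing care are the orientation of the relation (the "dominated" end is the one concentrated on \emph{smaller} values, as the paper's footnote warns) and the routine measure-theoretic bookkeeping of the quantile coupling. I would present the coupling argument as the cleanest writeup.
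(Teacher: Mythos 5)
Your proposal is correct, and your primary argument takes a genuinely different route from the paper's. The paper proves the $k=2$ case by a direct convolution computation on CDFs: it writes $\Pr[X_1+X_2<t]=\int_0^t f_1(x)F_2(t-x)\,dx$, first replaces $f_1$ by $g_1$ (justified because $F_2(t-x)$ is non-increasing in $x$ and $F_1\le G_1$), then replaces $F_2$ by $G_2$ pointwise, and finishes by induction on $k$. Your ``fully elementary alternative'' --- the chain $X+X'\preceq Y+X'\preceq Y+Y'$ with each step handled by conditioning --- is essentially that same argument, just organized as two conditionings rather than one chained integral manipulation. Your main argument, the quantile coupling $\tilde X_i=F_i^{-1}(U_i)$, $\tilde Y_i=G_i^{-1}(U_i)$ with $\tilde Y_i\le\tilde X_i$ almost surely, is cleaner and more robust: it dispenses with densities entirely (the paper's computation implicitly assumes the $X_i$ have PDFs, which is true in all of the paper's applications but is not needed for the claim), it handles all $k$ terms in one stroke without induction, and it makes the monotonicity of $\preceq$ under any coordinatewise non-decreasing function (not just sums) immediate; the price is importing the generalized-inverse machinery. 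One small caution: in your closing parenthetical the orientation is stated backwards --- with the paper's convention $a\preceq b$ means the CDF of $a$ is pointwise at most that of $b$, so it is the \emph{dominating} distribution that is concentrated on smaller values (consistent with your derived inequality $\tilde Y_i\le\tilde X_i$); your actual computations have the orientation right throughout, so this does not affect the proof.
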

\begin{proof}
We prove for $k=2$ (the proof for $k > 2$ follows from a simple induction argument). Denote by $f_i$ and $F_i$ the PDF and CDF functions of $X_i$, respectively, and by $g_i$ and $G_i$ the PDF and CDF of $Y_i$. From the dominance assumption, we know that $F_i(t) \leq G_i(t)$ for all $i$. Now,
\begin{align*}
\Pr[X_1 + X_2 < t] &= \int_0^\infty{f_1(x)\Pr[X_2 < t-x]dx} \\
&= \int_0^t{f_1(x)F_2(t-x)dx} \\
                   &\leq \int_0^t{g_1(x)F_2(t-x)dx} \\
  &\qquad \text{[from dominance as }F_2(t-x)\text{ is non-increasing in }x\text{]} \\
&\leq \int_0^t{g_1(x)G_2(t-x)dx} \\
&= \Pr[Y_1 + Y_2 < t]
\end{align*}
\end{proof}

We are now ready to prove Lemma~\ref{lem:erlangdom}.

\begin{proof}[Proof of Lemma~\ref{lem:erlangdom}]
Conditioned on the values of $S_x$, the distribution of $\{\seed(x) \mid x \in X\}_{(k)}$ is dominated by $\Erlang[\sum_{x \in \mathcal{X}}{S_x}, k]$ (for a proof, see Appendix C in \cite{freqCap:Journal}). The distribution of $\{\seed(x) \mid x \in X\}_{(k)}$ (unconditioned on the values of $S_x$) is a linear combination of distributions, which are each dominated by the respective Erlang distribution. Using the definition of dominance (Definition~\ref{def:dominance}) and the law of total probability, we get that the distribution of $\{\seed(x) \mid x \in X\}_{(k)}$ (unconditioned on $S_x$) is dominated by $\Erlang[\sum_{x \in \mathcal{X}}{S_x}, k]$ (unconditioned on $S_x$). A random variable drawn from $\Erlang[\sum_{x \in \mathcal{X}}{S_x}, k]$ has the same distribution as the sum of $k$ independent random variables drawn from $\Exp(\sum_{x \in \mathcal{X}}{S_x})$. The distribution of each of these $k$ exponential random variables is dominated by $\Exp(V)$ (by Lemma~\ref{perkeydom:lemma}). Using Claim~\ref{claim:sum-dominance}, we get that $\Erlang[\sum_{x \in \mathcal{X}}{S_x}, k] \preceq \Erlang[V,k]$. The assertion of the lemma then follows from the transitivity of dominance.
\end{proof}

The second lemma provides lower bounds on the CDF of $\Exp(S)$ under certain conditions.

\begin{lemma}\label{ssamplingseed:thm}
Let the random variable $S=\sum_{i=1}^r{S_i}$ be a sum of
$r$ independent random variables in the range $[0,T]$. Let $v=\E[S]$.
Then,
$$\Pr_{X\sim \Exp[S]}[X\leq \tau] \geq 1-e^{-v\tau(1-\tau T/2)}\ .$$
\end{lemma}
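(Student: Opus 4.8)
The plan is to convert the statement into a bound on a Laplace-transform quantity. Conditioned on the value of $S$, we have $\Pr_{X\sim\Exp[S]}[X\le\tau] = 1-e^{-S\tau}$ (with the convention that this is $0$ when $S=0$, matching the $X=\infty$ convention used earlier), so taking expectations over $S$ gives
\[
\Pr_{X\sim\Exp[S]}[X\le\tau] = 1-\E_S\!\left[e^{-S\tau}\right].
\]
Hence it suffices to prove the upper bound $\E_S\!\left[e^{-S\tau}\right]\le e^{-v\tau(1-\tau T/2)}$.

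Next I would use independence to factorize. Since $S=\sum_{i=1}^r S_i$ with the $S_i$ independent, $\E_S[e^{-S\tau}]=\prod_{i=1}^r\E[e^{-S_i\tau}]$, so I reduce everything to a per-summand estimate: writing $v_i:=\E[S_i]$ (so that $v=\sum_i v_i$), I claim $\E[e^{-S_i\tau}]\le e^{-v_i\tau(1-\tau T/2)}$. Multiplying these $r$ inequalities then yields $\E_S[e^{-S\tau}]\le e^{-\tau(1-\tau T/2)\sum_i v_i}=e^{-v\tau(1-\tau T/2)}$, which is exactly what is needed.

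To prove the per-summand estimate, I would apply the elementary inequality $e^{-x}\le 1-x+x^2/2$, valid for all $x\ge 0$ (verified by noting the difference vanishes with vanishing first derivative at $x=0$ and has nonnegative second derivative $1-e^{-x}$ on $[0,\infty)$). Substituting $x=S_i\tau\ge 0$ and taking expectations gives $\E[e^{-S_i\tau}]\le 1-v_i\tau+\tfrac{\tau^2}{2}\E[S_i^2]$. Because $S_i$ takes values in $[0,T]$ we have $S_i^2\le T S_i$ pointwise, so $\E[S_i^2]\le Tv_i$, and therefore $\E[e^{-S_i\tau}]\le 1-v_i\tau(1-\tau T/2)$. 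Finally $1-y\le e^{-y}$ turns this into the claimed exponential bound.

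This argument has essentially no hard step; the only point worth a remark is the regime $\tau T/2\ge 1$, where $1-\tau T/2\le 0$ forces the claimed right-hand side $1-e^{-v\tau(1-\tau T/2)}$ to be at most $0$, so the inequality holds vacuously — and the chain above stays valid regardless since $1-y\le e^{-y}$ holds for all real $y$. The degenerate case where all $S_i=0$ (hence $v=0$) is also consistent, both sides being $0$.
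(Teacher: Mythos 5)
Your proof is correct, and it takes a noticeably more direct route than the paper's. The paper first proves the intermediate bound $\Pr_{X\sim \Exp[S]}[X\leq \tau] \geq 1-\exp\bigl(-\tfrac{v}{T}(1-e^{-T\tau})\bigr)$ (Lemma~\ref{lem:lower-bound-exp-of-sum-random}): it bounds each factor $\Pr[X_i>\tau]$ using the chord/concavity bound $1-e^{-x\tau}\geq \tfrac{x}{T}(1-e^{-T\tau})$ for $x\in[0,T]$, combines the $r$ factors via the AM--GM inequality and $1-x\leq e^{-x}$, and only at the very end applies $1-e^{-u}\geq u-u^2/2$ to reach the stated form. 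You instead work directly with the Laplace transform $\E[e^{-S_i\tau}]$, using $e^{-x}\leq 1-x+x^2/2$ together with the pointwise bound $S_i^2\leq T S_i$, and then factor over the independent summands; this skips both the intermediate lemma and the AM--GM step. Each per-summand step is valid (the factors are nonnegative, so the product inequality is legitimate, and $1-y\leq e^{-y}$ holds for all real $y$), and your remarks about the regime $\tau T/2\geq 1$ and the degenerate case $v=0$ are accurate. What the paper's longer route buys is the sharper intermediate inequality $1-\exp\bigl(-\tfrac{v}{T}(1-e^{-T\tau})\bigr)$, which dominates the quadratic form and remains informative for large $\tau$, whereas your argument is shorter and entirely elementary but lands only on the weaker quadratic bound actually stated in the lemma.
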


In the regime $\tau T <1$, we get that the probability of being less
than $\tau$ is close to that of $\Exp(\E[S])$.

\begin{lemma}
Let $S$ be a
random variables in $[0,T]$ with expectation
$\E[S]=v$.
Then
for all $\tau$,
$$\Pr_{X\sim \Exp[S]}[X\leq \tau] \geq \frac{v}{T}(1-e^{-T\tau})\ .$$
\end{lemma}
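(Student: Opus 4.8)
The plan is to reduce the statement to a pointwise inequality via conditioning on $S$, and then exploit concavity of $s \mapsto 1-e^{-s\tau}$. First I would write, by the tower rule and the definition of $X\sim\Exp[S]$ (together with the convention that $\Pr[X\le\tau\mid S=0]=0=1-e^{-0\cdot\tau}$ used throughout the section),
\[
\Pr_{X\sim\Exp[S]}[X\le\tau] \;=\; \E_S\!\left[\,1-e^{-S\tau}\,\right].
\]
For $\tau\le 0$ both sides are $\le 0$ and the left side is $0$, so the inequality is trivial; hence assume $\tau>0$.

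Next I would observe that the function $g(s) := 1-e^{-s\tau}$ is concave on $[0,\infty)$, since $g''(s) = -\tau^2 e^{-s\tau}\le 0$. A concave function on a bounded interval lies above the chord joining its endpoints: for every $s\in[0,T]$,
\[
g(s) \;\ge\; g(0) + \frac{g(T)-g(0)}{T}\,s \;=\; \frac{s}{T}\bigl(1-e^{-T\tau}\bigr),
\]
using $g(0)=0$ and $g(T)=1-e^{-T\tau}$. This is the "reverse Jensen" bound that concavity affords on a bounded domain.

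Finally, since $S\in[0,T]$ almost surely, I would apply this inequality pointwise and take expectations over $S$:
\[
\Pr_{X\sim\Exp[S]}[X\le\tau] \;=\; \E_S[g(S)] \;\ge\; \frac{\E[S]}{T}\bigl(1-e^{-T\tau}\bigr) \;=\; \frac{v}{T}\bigl(1-e^{-T\tau}\bigr),
\]
which is the claim. There is essentially no hard step here; the only points requiring care are the $S=0$ convention (already fixed in the surrounding discussion) and restricting attention to $\tau>0$, after which the chord-below-a-concave-function argument does all the work.
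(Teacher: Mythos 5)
Your proof is correct and follows essentially the same route as the paper's: condition on $S$, use the fact that the concave function $s\mapsto 1-e^{-s\tau}$ lies above its chord on $[0,T]$ to get the pointwise bound $\frac{s}{T}(1-e^{-T\tau})$, and take expectations. The only cosmetic differences are that you handle $\tau\le 0$ and the $S=0$ convention explicitly, which the paper leaves implicit.
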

\begin{proof}
Denote the probability density function of $S$ by $p_S$. Conditioned on the value of $S$, the probability of $X \sim \Exp[S]$ being below $\tau$ is
\[
\Pr[X \leq \tau | S = s] = 1 - e^{-s\tau}.
\]
It follows that
\[
\Pr[X \leq \tau] = \E[1 - e^{-S\tau}] = \int_0^T{p_S(x)(1 - e^{-x\tau})dx}.
\]
Consider the function $f(x) = 1 - e^{-x\tau}$ for a fixed $\tau \geq 0$. Since $f$ is concave, for every $x \in [0,T]$,
\begin{align*}
f(x) &= f\left(\left(1-\frac{x}{T}\right) \cdot 0 + \frac{x}{T} \cdot T\right) \\
& \geq \left(1-\frac{x}{T}\right) \cdot f(0) + \frac{x}{T} \cdot f(T) \\
&= \frac{x}{T}(1-e^{-T\tau}).
\end{align*}
By monotonicity,
\[
\int_0^T{p_S(x)(1 - e^{-x\tau})dx} \geq \int_0^T{p_S(x) \cdot \frac{x}{T}(1-e^{-T\tau})dx}
\]
and finally,
\[
\Pr[X \leq \tau] \geq \frac{1-e^{-T\tau}}{T} \cdot \int_0^T{p_S(x)xdx} = \frac{v}{T} \cdot (1-e^{-T\tau}).
\]
\end{proof}

\begin{lemma}\label{lem:lower-bound-exp-of-sum-random}
Let the random variable $S=\sum_{i=1}^r{S_i}$ be a sum of
$r$ independent random variables in the range $[0,T]$. Let $v=\E[S]$.
Then,
$$\Pr_{X\sim \Exp[S]}[X\leq \tau] \geq 1-\exp\left(-
\frac{v}{T}(1-e^{-T\tau})\right)\ .$$
\end{lemma}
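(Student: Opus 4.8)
The plan is to combine the preceding single-variable estimate (which asserts that for any $S_i \in [0,T]$ with $\E[S_i] = v_i$ one has $\Pr_{X_i \sim \Exp[S_i]}[X_i \leq \tau] \geq \frac{v_i}{T}(1 - e^{-T\tau})$) with the standard fact that an exponential variable whose rate is a sum can be realized as the minimum of independent exponentials, one per summand.

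First I would condition on the realizations $S_1, \ldots, S_r$ and draw $X_i \sim \Exp[S_i]$ conditionally independently (with the convention $X_i = \infty$ when $S_i = 0$). Then $\min_i X_i \sim \Exp\left[\sum_i S_i\right]$ conditioned on the $S_i$. Since the $S_i$ are independent and the $X_i$ conditionally independent given the $S_i$, the $X_i$ are unconditionally independent, and $\min_i X_i$ has the same unconditional law as $X \sim \Exp[S]$ (mixing $\Exp[\cdot]$ over the law of $\sum_i S_i$). Hence $\Pr[X > \tau] = \prod_{i=1}^r \Pr[X_i > \tau]$.

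Next I would bound each factor. By the preceding single-variable lemma applied to $S_i$, $\Pr[X_i > \tau] = 1 - \Pr[X_i \leq \tau] \leq 1 - \frac{v_i}{T}(1 - e^{-T\tau})$, a quantity in $[0,1]$ because $v_i \leq T$. Applying $1 - z \leq e^{-z}$ termwise and multiplying, $\Pr[X > \tau] \leq \prod_{i=1}^r \exp\left(-\frac{v_i}{T}(1 - e^{-T\tau})\right) = \exp\left(-\frac{1 - e^{-T\tau}}{T}\sum_{i=1}^r v_i\right)$, and $\sum_i v_i = \E[S] = v$ by linearity of expectation. Taking complements gives $\Pr[X \leq \tau] \geq 1 - \exp\left(-\frac{v}{T}(1 - e^{-T\tau})\right)$, as claimed.

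There is no substantive obstacle here: the statement is essentially a tensorization of the single-variable estimate over the $r$ independent summands. The only points deserving a line of care are the degenerate case $S_i = 0$ (where the corresponding factor is simply $1$, consistent with $1 - \frac{0}{T}(1 - e^{-T\tau})$) and the passage of the minimum-of-exponentials identity from the conditional to the unconditional distribution, which is immediate once one mixes over the joint law of $(S_1, \ldots, S_r)$.
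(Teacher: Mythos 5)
Your proof is correct and follows essentially the same route as the paper: realize $X\sim\Exp[S]$ as the minimum of independent $X_i\sim\Exp[S_i]$, bound each factor $\Pr[X_i>\tau]$ via the single-variable lemma, and exponentiate. The only (cosmetic) difference is that the paper first applies the AM--GM inequality to get $\left(1-\frac{v}{rT}(1-e^{-T\tau})\right)^r$ before using $1-z\leq e^{-z}$, whereas you apply $1-z\leq e^{-z}$ factorwise directly, which is if anything slightly cleaner.
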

\begin{proof}
Let $X \sim \Exp[S]$. Since $S = \sum_{i=1}^r{S_i}$, we could define $r$ independent exponential random variables $X_i \sim \Exp[S_i]$. $X$ has the same distribution as $\min_{1 \leq i \leq r}{X_i}$. Hence,
\begin{align*}
\Pr[X > \tau] &= \Pr\left[\min_{1 \leq i \leq r}{X_i} > \tau\right] \\
&= \prod_{i=1}^r{\Pr\left[X_i > \tau\right]} \\
&\leq \prod_{i=1}^r{\left(1 - \frac{\E[S_i]}{T} \cdot (1-e^{-T\tau})\right)} \\
&\leq \left(1 - \frac{\E[S]}{rT} \cdot (1-e^{-T\tau})\right)^r
\end{align*}
where the last inequality follows from the arithmetic mean-geometric mean inequality. Now, using the inequality $1 - x \leq e^{-x}$ (for any $x \in \mathbb{R}$), and the fact that $f(x) = x^r$ is non-decreasing for $x,r \geq 0$, we get that
\[
\Pr[X > \tau] \leq \exp\left(- \frac{\E[S]}{rT} \cdot (1-e^{-T\tau}) \cdot r\right) = \exp\left(- \frac{\E[S]}{T} \cdot (1-e^{-T\tau})\right).
\]
Consequently,
$$\Pr[X\leq \tau] \geq 1-\exp\left(-
\frac{v}{T}(1-e^{-T\tau})\right)\ .$$
\end{proof}
\begin{proof}[Proof of Lemma~\ref{ssamplingseed:thm}]
Follows from Lemma~\ref{lem:lower-bound-exp-of-sum-random} using the inequality $1-e^{-x}\geq x-x^2/2$ for $x \geq 0$.
\end{proof}

\subsection{Variance Bounds for the Inverse-Probability Estimator}

\begin{proof} [Proof of Theorem~\ref{thm:stochasticvariance}]
We start bounding the per-key variance as in Claim~\ref{claim:estimatorvar}:
$$\var\left(\widehat{v_x}\right) = \E_{\tau_x}\left[v_x^2\left(\frac{1}{\Pr[\seed(x)<\tau_x]} -1\right)\right]\ .$$
By Lemma~\ref{lem:erlangdom}, we know that the distribution of
$\tau_x$ (the $k-1$ lowest seed of the keys in $\mathcal{X} \setminus \{x\}$) is dominated by $\Erlang[V,k-1]$, hence
\begin{align*}
\var\left(\widehat{v_x}\right) &\leq \E_{t \sim \Erlang[V,k-1]}\left[v_x^2\left(\frac{1}{\Pr[\seed(x)<t]} -1\right)\right] \\
&= \int_0^\infty{B_{V,k-1}(t)  v_x^2\left(\frac{1}{\Pr[\seed(x)<t]} -1\right)dt} \\
&= \int_0^{1/T}{B_{V,k-1}(t)  \cdot v_x^2\left(\frac{1}{\Pr[\seed(x)<t]} -1\right)dt} \\
 &\qquad+ \int_{1/T}^\infty{B_{V,k-1}(t)  \cdot v_x^2\left(\frac{1}{\Pr[\seed(x)<t]} -1\right)dt}
\end{align*}

To bound the first summand, since $t \leq \frac{1}{T}$, we get from
Lemma~\ref{ssamplingseed:thm} (applied to $\seed(x)$) that
$\Pr[\seed(x)<t] \geq 1-e^{-v_x t\left(1-\frac{tT}{2}\right)} \geq 1-e^{-v_x t/2}$. It follows that
\begin{align*}
\MoveEqLeft \int_0^{1/T}{B_{V,k-1}(t) \cdot
  v_x^2\left(\frac{1}{\Pr[\seed(x)<t]} -1\right)dt} \\
  &\leq \int_0^{1/T}{B_{V,k-1}(t) \cdot v_x^2\left(\frac{1}{1-e^{-v_x t/2}} -1\right)dt} \\
& \leq \int_0^{1/T}{B_{V,k-1}(t) \cdot \frac{v_x^2}{v_x t/2}dt} \qquad\text{[}\frac{e^{-x}}{1-e^{-x}} \leq \frac{1}{x}\text{]} \\
& =  2 \int_0^{1/T}{B_{V,k-1}(t) \cdot \frac{v_x}{t}dt} \\
& \leq 2 \int_0^{\infty}{B_{V,k-1}(t) \cdot \frac{v_x}{t}dt} \\
& = \frac{2v_xV}{k-2} \qquad\text{[PPSWOR analysis (Section~\ref{bottomkest:sec})]}
\end{align*}

To bound the second summand, since $t > \frac{1}{T}$, $\Pr[\seed(x)<t] \geq \Pr[\seed(x)<1/T] \geq 1-e^{-v_x/2T}$. Subsequently,
\begin{align*}
\MoveEqLeft \int_{1/T}^\infty{B_{V,k-1}(t)  \cdot
  v_x^2\left(\frac{1}{\Pr[\seed(x)<t]} -1\right)dt} \\
  &\leq \int_{1/T}^\infty{B_{V,k-1}(t) \cdot v_x^2\left(\frac{1}{1-e^{-v_x/2T}} -1\right)dt} \\
&= v_x^2\left(\frac{1}{1-e^{-v_x/2T}} -1\right)
  \int_{1/T}^\infty{B_{V,k-1}(t) dt} \\
&\leq v_x^2\left(\frac{1}{1-e^{-v_x/2T}} -1\right) \qquad\text{[integral of density]} \\
&\leq \frac{v_x^2}{v_x/2T} \qquad\text{[}\frac{e^{-x}}{1-e^{-x}} \leq \frac{1}{x}\text{]} \\
&= 2Tv_x \\
&\leq \frac{2v_x V}{k} \qquad\text{[}V \geq Tk\text{]}
\end{align*}

Combining, we get that
\[
\var\left(\widehat{v_x}\right) \leq \frac{2v_xV}{k-2} + \frac{2v_x V}{k} \leq \frac{4v_xV}{k-2}.
\]
\end{proof}

\subsection{Inclusion Probability in a Stochastic Sample} \label{stochvarproof:sec}
\begin{proof}[Proof of Theorem~\ref{thm:stochasticinclusionprob}]
We first separately deal with the case where there is only one key, which we denote $x$. In this case, $V = v_x$, and if $S_x > 0$, then $x$ is included in the sample. Otherwise, the sample is empty. In the proof of Lemma~\ref{ssamplingseed:thm}, when $S = 0$, we used $\Pr[X \leq \tau] = 1-e^{-s\tau} = 0$ and the event $X \leq \tau$ does not happen. Hence, we can use Lemma~\ref{ssamplingseed:thm} to bound $\Pr[S_x > 0] \geq \Pr[\seed(x) \leq \tau]$ for any $\tau > 0$. We pick $\tau = \frac{2\varepsilon}{T}$ and get that $x$ is included in the sample with probability
\[
\Pr[S_x > 0] \geq 1 - e^{-v\frac{2\varepsilon}{T}(1-\varepsilon)} \geq 1 - e^{-\frac{2\varepsilon}{\varepsilon}(1-\varepsilon)\ln{\left(\frac{1}{\varepsilon}\right)}} \geq 1-\varepsilon
\]
using $V \geq \frac{1}{\varepsilon}\ln{\left(\frac{1}{\varepsilon}\right)}T$ and $2(1-\varepsilon) \geq 1$.

If there is more than one key, a key $x$ is included in the sample if $\seed(x)$ is smaller than the seed of all other keys. The distribution of $\min_{z \neq x}{\seed(z)}$ is $\Exp\left(\sum_{z \neq x}{S_z}\right)$, which is dominated by $\Exp\left(\sum_{z \neq x}{v_z}\right)$ (Lemma~\ref{perkeydom:lemma}). Then,
\begin{align*}
  \MoveEqLeft \Pr[\seed(x) < \min_{z\not=x}\seed(z)] \\
  &\geq  \E_{t\sim \Exp[V-v_x]} \Pr[\seed(x)<t] \\
&= \int_0^\infty{(V-v_x)e^{-(V-v_x)t}\Pr[\seed(x)<t]dt} \\
&\geq \int_0^{2\varepsilon/T}{(V-v_x)e^{-(V-v_x)t}\Pr[\seed(x)<t]dt}\\ &\qquad + \int_{2\varepsilon/T}^\infty{(V-v_x)e^{-(V-v_x)t}\Pr[\seed(x)<2\varepsilon/T]dt}\\
&\geq \int_0^{2\varepsilon/T}{(V-v_x)e^{-(V-v_x)t}\left(1-e^{-v_x t(1-tT/2)}\right)dt}\\ &\qquad + \int_{2\varepsilon/T}^\infty{(V-v_x)e^{-(V-v_x)t}\left(1-e^{-v_x \frac{2\varepsilon}{T}(1-\varepsilon)}\right)dt}\\
&\geq \int_0^\infty{(V-v_x)e^{-(V-v_x)t}dt} - \int_0^{2\varepsilon/T}{(V-v_x)e^{-(V-v_x)t}e^{-v_x t(1-\varepsilon)}dt}\\ &\qquad - \int_{2\varepsilon/T}^\infty{(V-v_x)e^{-(V-v_x)t}e^{-v_x \frac{2\varepsilon}{T}(1-\varepsilon)}dt}\\
&= 1 - \int_0^{2\varepsilon/T}{(V-v_x)e^{-(V-\varepsilon v_x)t}dt} - e^{-(V-v_x)\frac{2\varepsilon}{T}}e^{-v_x \frac{2\varepsilon}{T}(1-\varepsilon)}\\
&= 1 - \frac{V-v_x}{V-\varepsilon v_x}\int_0^{2\varepsilon/T}{(V-\varepsilon v_x)e^{-(V-\varepsilon v_x)t}dt} - e^{-(V-\varepsilon v_x)\frac{2\varepsilon}{T}}\\
&= 1 - \frac{V-v_x}{V-\varepsilon v_x}\left(1-e^{-(V-\varepsilon v_x)\frac{2\varepsilon}{T}}\right) - e^{-(V-\varepsilon v_x)\frac{2\varepsilon}{T}}\\
&= \left(1 - \frac{V-v_x}{V-\varepsilon v_x}\right)\left(1-e^{-(V-\varepsilon v_x)\frac{2\varepsilon}{T}}\right)\\
&= \frac{(1-\varepsilon)v_x}{V-\varepsilon v_x}\left(1-e^{-(V-\varepsilon v_x)\frac{2\varepsilon}{T}}\right) \\
&\geq \frac{(1-\varepsilon)v_x}{V}\left(1-e^{-\frac{2\varepsilon}{T}(1-\varepsilon)V}\right) \qquad [V \geq v_x]\\
&\geq \frac{(1-\varepsilon)v_x}{V}\left(1-e^{-\frac{2\varepsilon(1-\varepsilon)}{\epsilon}\ln{\left(\frac{1}{\varepsilon}\right)}}\right)\\
&\geq \frac{(1-\varepsilon)v_x}{V}\left(1-e^{-\ln{\left(\frac{1}{\varepsilon}\right)}}\right) \qquad [2(1-\varepsilon) \geq 1]\\
&= (1-\varepsilon)^2 \cdot \frac{v_x}{V}\\
&\geq (1-2\varepsilon) \frac{v_x}{V}.
\end{align*}
\end{proof}

\section{Proofs Deferred from Section~\ref{summax:sec}} \label{deferred4}
\begin{proof}[Proof of Lemma~\ref{summax:lemma}]
With a slight abuse of notation, for a full key $z=(z.p,z.s)$ we define
$$\seed_D(z) := \min_{e\in D \mid e.\mkey = z}{\eScore(e)}\ .$$
Now,  since we use the same value $h(z)$ for all elements with key
$z$, the minimum $\eScore(e)$ value generated for an element
$e\in D$ with key $e.\mkey=z$ is $h(z)/\MAX_{D}(z)$:
\[
\seed_D(z) = \min_{e\in D \mid e.\mkey = z}{\frac{h(z)}{e.\mval}} = \frac{h(z)}{\MAX_{D}(z)}.
\]
Recall that for $X \sim \Exp[1]$ and $a>0$, the distribution of $X/a$ is $\Exp[a]$,
and that $h(z) \sim \Exp[1]$.
Therefore, the algorithm effectively draws
$\seed_D(z)\sim \Exp[\MAX_{D}(z)]$ for every key $z$.
Moreover, from our assumption of independence of $h$, the variables $\seed_D(z)$
of different keys $z$ are also independent.

We now notice that for a primary key $x$,
$$\seed_D(x) = \min_{z \mid z.p =
  x}{\seed_D(z)}\ .$$
That is,  $\seed_D(x)$ is the minimum, over all keys
$z$ with primary key $z.p = x$ that appeared in at least one element
of $D$,  of $\seed_D(z)$.

The random variables $\seed_D(z)$ for input keys $z$ are independent and exponentially distributed with
respective parameters $\MAX_{D}(z)$.
From properties of the exponential distribution, their minimum is
also exponentially distributed with a parameter that is equal to the
sum of their parameters $\MAX_{D}(z)$:
\[
\seed_D(x) \sim \Exp\left(\sum_{z \mid z.p = x}{\MAX_D(z)}\right),
\]
that is, $\seed_D(x) \sim \Exp[\SUMMAX_{D}(x)]$.
Moreover, the independence of $\seed_D(x)$ (for primary keys $x$)  follows
from the independence of $\seed_D(z)$ (for input keys $z$).
\end{proof}

\section{Proofs Deferred from Section~\ref{funcoffreq:sec}}\label{sec:deferred-full-sketch}
\begin{proof}[Proof of Lemma~\ref{lem:sideline-mem}]
Consider a fixed time during the processing of $D$ by Algorithm~\ref{samplesketch:algo} (after some but potentially not all elements have been processed). For each key $x$, let $v_x$ be the sum of values of elements with key $x$ that have been processed so far.

For any $t > 0$, key $x \in \mathcal{X}$, and $i \in [r]$, we define an indicator random variable $I^t_{x,i}$ for the event that an element with key $(x,i)$ was generated with value less than $t$. In particular, the number of elements in $\SideLine$ is $\sum_{x \in \mathcal{X}}{\sum_{i=1}^r{I^\gamma_{x,i}}}$. The event $I^t_{x,i} = 1$ is the event that the minimum value of the elements generated with key $(x,i)$ is at most $t$. The distribution of the minimum value of these elements is $\Exp(v_x)$, and it follows that
\[
E[I^t_{x,i}] = 1-e^{-tv_x} \leq tv_x.
\]

In particular, when $t = \gamma = \frac{2\varepsilon}{\sum_{z \in \mathcal{X}}{v_z}}$ and $r = \frac{k}{\varepsilon}$, we get
\[
\E\left[\sum_{x \in \mathcal{X}}{\sum_{i=1}^r{I^\gamma_{x,i}}}\right] \leq \sum_{x \in \mathcal{X}}{\frac{r \cdot 2\varepsilon v_x}{\sum_{z \in \mathcal{X}}v_z}} = 2r\varepsilon = 2k.
\]
From Chernoff bounds,
\[
\Pr\left[\sum_{x \in \mathcal{X}}{\sum_{i=1}^r{I^\gamma_{x,i}}} > \left(2 + \frac{3 \ln{m} + 3\ln{\left(\frac{1}{\delta}\right)}}{2k}\right)2k\right] \leq e^{-\frac{2}{3}k - \ln{m} - \ln{\left(\frac{1}{\delta}\right)}} \leq \frac{\delta}{m}.
\]
Applying this each time an element is processed and taking union bound, we get that the size of $\SideLine$ increases beyond $4k + 3\ln{m} + 3\ln{\left(\frac{1}{\delta}\right)}$ at any time with probability at most $\delta$.

We now improve the bound to use $\log{\log{\left(\frac{\SUM_D}{\MIN(D)}\right)}}$ instead of $\log{m}$. Let $t > 0$. Consider all the times where the value of $\gamma$ is in the interval $\left[\frac{1}{2}t,t\right]$, and for every $\gamma'$ in that interval, let $v_x(\gamma')$ denote the frequency of key $x$ at the time where $\gamma = \gamma'$. Since $\gamma$ decreases over time as elements are processed, any generated element stored in $\SideLine$ when $\gamma \in \left[\frac{1}{2}t,t\right]$ must have value at most $t$. Since only more elements are generated as $\gamma$ decreases, we can look all the elements that have been generated until $\gamma$ reached $\frac{1}{2}t$.\footnote{It may be the case that $\gamma$ is never $\frac{1}{2}t$, but in that case we consider the minimum value of $\gamma$ that is at least $\frac{1}{2}t$.}

We bound the number of elements with value at most $t$ that have been generated until the time where $\gamma = \frac{1}{2}t$. From the way we set $\gamma$ in Algorithm~\ref{samplesketch:algo}, we get as long as $\gamma \geq \frac{1}{2}t$, $\sum_{x \in \mathcal{X}}{v_x(\gamma)}t \leq 4\varepsilon$. Now, consider the indicator $I^t_{x,i}$ as defined above for the time where $\gamma = \frac{1}{2}t$. The number of elements stored in $\SideLine$ at any time when $\gamma \in \left[\frac{1}{2}t,t\right]$ is at most $\sum_{x \in \mathcal{X}}{\sum_{i=1}^r{I^t_{x,i}}}$. We get that
\[
\E\left[\sum_{x \in \mathcal{X}}{\sum_{i=1}^r{I^t_{x,i}}}\right] \leq r \sum_{x \in \mathcal{X}}{t \cdot v_x(t/2)} \leq 4r\varepsilon = 4k
\]
and using Chernoff bounds,
\begin{align}
\Pr\left[\sum_{x \in \mathcal{X}}{\sum_{i=1}^r{I^t_{x,i}}} >
  \left(2 + \frac{3
  \ln{\lceil\log{\left(\frac{\SUM_D}{\MIN(D)}\right)}\rceil} +
  3\ln{\left(\frac{1}{\delta}\right)}}{4k}\right)4k\right]
  &\leq  e^{-\frac{4}{3}k - \ln{\lceil\log{\left(\frac{\SUM_D}{\MIN(D)}\right)}\rceil} - \ln{\left(\frac{1}{\delta}\right)}} \nonumber \\
&\leq  \frac{\delta}{\lceil\log{\left(\frac{\SUM_D}{\MIN(D)}\right)}\rceil}.\label{eq:chernoff-space-interval}
\end{align}

Finally, the minimum value $\gamma$ can get is $\frac{2\varepsilon}{\MIN(D)}$, and the maximum value is $\frac{2\varepsilon}{\SUM_D}$. Hence, we can divide the interval of possible values for $\gamma$ into $\lceil\log{\left(\frac{\SUM_D}{\MIN(D)}\right)}\rceil$ intervals of the form $\left[\frac{1}{2}t,t\right]$, and apply the bound in Equation~\eqref{eq:chernoff-space-interval} to each one of them. By the union bound, we get that the probability that the size of $\SideLine$ exceeds $8k + 3 \ln{\lceil\log{\left(\frac{\SUM_D}{\MIN(D)}\right)}\rceil} + 3\ln{\left(\frac{1}{\delta}\right)}$ at any time during the processing of $D$ is at most $\delta$.
\end{proof}

\begin{proof}[Proof of Lemma~\ref{lem:sum-expected-weights-high}]
Using Lemma~\ref{expectationsingle:lem}, for every key $x$,
\begin{align*}
\E[V_x] &\geq \E\left[\frac{1}{r}\SUMMAX_E(x)\right]\\
&= \frac{1}{r}\sum_{i=1}^r{\E[\MAX_E((x,i))]}\\
&= \frac{1}{r}\sum_{i=1}^r{\int_\gamma^\infty  a(t) (1-e^{-\freq_xt})dt} \qquad \text{[By Lemma~\ref{expectationsingle:lem}]}\\
&= \int_\gamma^\infty  a(t) (1-e^{-\freq_xt})dt\\
&\geq \int_\gamma^\infty  a(t) (1-e^{-\freq_x\gamma})dt\\
&= A(\gamma)(1-e^{-\freq_x\gamma}).
\end{align*}
Recall that $\gamma = \frac{2\varepsilon}{\SUM_D}$. Then, using $1-e^{-x} \geq \frac{x}{2}$ for $0 \leq x \leq 1$,
\begin{align*}
\E[V] &= \E\left[\sum_{x \in \mathcal{X}}{V_x}\right]\\
&\geq \sum_{x \in \mathcal{X}}{A(\gamma)(1-e^{-\freq_x\gamma})}\\
&= \sum_{x \in \mathcal{X}}{A(\gamma)\left(1-e^{-\freq_x \cdot \frac{2\varepsilon}{\SUM_D}}\right)}\\
&\geq \sum_{x \in \mathcal{X}}{A(\gamma)\freq_x \cdot \frac{\varepsilon}{\SUM_D}}\\
&= \frac{A(\gamma)\varepsilon}{\SUM_D}\sum_{x \in \mathcal{X}}{\freq_x}\\
&= A(\gamma)\varepsilon.
\end{align*}
Since $r \geq \frac{k}{\varepsilon}$, we conclude that
\[
\E[V] \geq \frac{A(\gamma)}{r} \cdot k.
\]
\end{proof}

\begin{proof}[Proof of Lemma~\ref{lem:seedCDF-compute}]
  Consider a key $x$. The seed $\seed^{(F)}(x)$ in the output sample is the minimum of $\seed^{(1)}(x)$  and
  $\seed^{(2)}$, which are the seed values obtained by the scaled
  PPSWOR and the $\SUMMAX$ samples, respectively.

The scaled PPSWOR sample is computed with respect to the weights $\freq_x B(\gamma)$, and thus $\seed^{(1)}(x) \sim \Exp[\freq_x B(\gamma)]$.  Therefore using the
density function of $\Exp[\freq_x B(\gamma)]$, we get that for all $t>0$,
$$p_1 = \Pr[\seed^{(1)}(x) > t] = \exp(-\freq_x B(\gamma) t)\ .$$

The scaled $\SUMMAX$ sample is a PPSWOR sample with respect to weights
$\frac{1}{r}\SUMMAX_E(x)$.
Therefore,
$\seed^{(2)}(x) \sim \Exp[\frac{1}{r}\SUMMAX_E(x)]$.
Note however that $\frac{1}{r}\SUMMAX_E(x)$ is itself a
random variable and in particular,  the value $\SUMMAX_E(x)$  is not available to us
with the sample.
We recall that
$\SUMMAX_E(x) =\sum_{i=1}^r \MAX_E((x,i))$ where $\MAX_E((x,i))$ are i.i.d.\ random variables. Using properties of the exponential distribution, we know that $\Exp[\frac{1}{r}\SUMMAX_E(x)]$ is the same distribution as the minimum of $r$ independent random variables drawn from $\Exp[\frac{1}{r}\MAX_E((x,1))],\ldots,\Exp[\frac{1}{r}\MAX_E((x,r))]$.
Therefore, for $t>0$,
$$\Pr[\seed^{(2)}(x) > t] = \prod_i \Pr\left[\Exp\left[\frac{1}{r}\MAX_E((x,i))\right] >
t\right]\ .$$
We now  express  $\Pr[\Exp[\frac{1}{r}\MAX_E((x,i))] >t]$ using
 the fact that $\MAX_E((x,i)) = A(\max\{y,\gamma\})$ for $y \sim \Exp[\freq_x]$:
\begin{align*}
p_2 &= \Pr\left[\Exp\left[\frac{1}{r}\MAX_E((x,i))\right] > t\right]\\
&= \int_0^\infty{\freq_x \exp(-\freq_x y) \Pr\left[\Exp\left[\frac{1}{r}A(\max\{y,\gamma\})\right] > t\right]dy}\\
& =  \int_0^\infty \freq_x \exp(-\freq_x y)
  \exp(-A(\max\{y,\gamma\}) t/r) dy\ .
\end{align*}

Since $\Pr[\seed^{(2)}(x) > t] = p_2^r$ and using the fact that $\seed^{(1)}(x)$ and $\seed^{(2)}(x)$ are
independent, we conclude that
\begin{align*}
\Pr[\seed^{(F)}(x)<t]  &=  1-\Pr[\min\{\seed^{(1)}(x),\seed^{(2)}(x)\}> t]\\
&= 1- \Pr[\seed^{(1)}(x)>t] \Pr[\seed^{(2)}(x)>t]\\
&= 1-p_1 p_2^r\ .
\end{align*}
\end{proof}

\end{document}